\providecommand{\U}[1]{\protect\rule{.1in}{.1in}}
\newtheorem{theorem}{Theorem}
\newtheorem{corollary}[theorem]{Corollary}
\newtheorem{definition}[theorem]{Definition}
\newtheorem{lemma}[theorem]{Lemma}
\newtheorem{proposition}[theorem]{Proposition}
\newtheorem{remark}[theorem]{Remark}
\newenvironment{proof}[1][Proof]{\noindent\textbf{#1.} }{\ \rule{0.5em}{0.5em}}
\begin{document}

\title{Decay estimates for steady solutions of the Navier-Stokes equations in two
dimensions in the presence of a wall}
\author{Christoph Boeckle\thanks{Supported by the Swiss National Science Foundation
(Grant No. 200021-124403).}~\thanks{Corresponding author.}\\{\small Theoretical Physics Department}\\{\small University of Geneva, Switzerland}\\{\small christoph.boeckle@unige.ch}
\and Peter Wittwer\thanks{Supported by the Swiss National Science Foundation (Grant
No. 200021-124403).}\\{\small Theoretical Physics Department}\\{\small University of Geneva, Switzerland}\\{\small peter.wittwer@unige.ch}}
\maketitle

\begin{abstract}
Let $\omega$ be the vorticity of a stationary solution of the two-dimensional
Navier-Stokes equations with a drift term parallel to the boundary in the
half-plane $\Omega_{+}=\left\{  \left.  (x,y)\in\mathbb{R}^{2}\right\vert
~y>1\right\}  $, with zero Dirichlet boundary conditions at $y=1$ and at
infinity, and with a small force term of compact support. Then, $|xy\omega
(x,y)|$ is uniformly bounded in $\Omega_{+}$. The proof is given in a
specially adapted functional framework and the result is a key ingredient for
obtaining information on the asymptotic behavior of the velocity at infinity.

\bigskip

\noindent\textbf{Keywords:} Navier-Stokes; exterior domain; fluid-structure
interaction; asymptotic behavior

\end{abstract}
\tableofcontents

\section{Introduction}

In this paper we consider the steady Navier-Stokes equations in a half-plane
$\Omega_{+}=\left\{  \left.  (x,y)\in\mathbb{R}^{2}\right\vert ~y>1\right\}  $
with a drift term parallel to the boundary, a small driving force of compact
support, with zero Dirichlet boundary conditions at the boundary of the half
plane and at infinity. See \cite{Hillairet.Wittwer-Existenceofstationary2009}
and \cite{Hillairet.Wittwer-Asymptoticdescriptionof2011} for a detailed
motivation of this problem. Existence of a strong solution for this system was
proved in \cite{Hillairet.Wittwer-Existenceofstationary2009} together with a
basic bound on the decay at infinity, and the existence of weak solutions was
shown in \cite{Hillairet.Wittwer-Asymptoticdescriptionof2011}. By elliptic
regularity weak solutions are smooth, and their only possible shortcoming is
the behavior at infinity, since the boundary condition may not be satisfied
there in a pointwise sense. In
\cite{Hillairet.Wittwer-Asymptoticdescriptionof2011} it was also shown that
for small forces there is only one weak solution. This unique weak solution
therefore coincides with the strong solution and satisfies as a consequence
the boundary condition at infinity in a pointwise sense.

The aim of this paper is to provide additional information concerning the
behavior of this solution at infinity by analyzing the solution obtained in
\cite{Hillairet.Wittwer-Existenceofstationary2009} in a more stringent
functional setting. More precisely, we obtain more information on the decay
behavior of the vorticity of the flow. Bounds on vorticity as a step towards
bounds on the velocity are a classical procedure in asymptotic analysis of
fluid flows (see the seminal papers
\cite{Gilbarg.Weinberger-AsymptoticPropertiesof1974},
\cite{Gilbarg.Weinberger-Asymptoticpropertiesof1978} and
\cite{Amick-asymptoticformof1991}). In
\cite{Hillairet.Wittwer-Existenceofstationary2009} and the current work, the
equation for the vorticity is Fourier-transformed with respect to the
coordinate $x$ parallel to the wall, and then rewritten as a dynamical system
with the coordinate $y$ perpendicular to the wall playing the role of time. In
this setting information on the behavior of the vorticity at infinity is
studied by analyzing the Fourier transform at $k=0$, with $k$ the Fourier
conjugate variable of $x$. In the present work, we also control the derivative
of the Fourier transform of the vorticity, which yields more precise decay
estimates for the vorticity and the velocity field in direct space than the
ones found in \cite{Hillairet.Wittwer-Existenceofstationary2009}. Our proof is
then based on a new linear fixed point problem involving the solution obtained
in \cite{Hillairet.Wittwer-Existenceofstationary2009} and the derivative of
the vorticity with respect to $k$.

Since the original equation is elliptic, the dynamical system under
consideration contains stable and unstable modes and no spectral gap, so that
standard versions of the center manifold theorem are not sufficient to prove
existence of solutions. Functional techniques that allow to deal with such a
situation go back to \cite{Gallay-center-stablemanifoldtheorem1993} and were
adapted to the case of the Navier-Stokes equations in
\cite{Baalen-StationarySolutionsof2002} and in
\cite{Wittwer-structureofStationary2002},
\cite{Wittwer-Supplementstructureof2003}. For a general review see
\cite{Heuveline.Wittwer-ExteriorFlowsat2010}. The linearized version of the
current problem was studied in \cite{Hillairet.Wittwer-vorticityofOseen2008}.
A related problem in three dimensions was discussed in
\cite{Guo.etal-Existenceofstationary2011}.

The results of the present paper are the basis for the work described in
\cite{Boeckle.Wittwer-Asymptoticsofsolutions2011}, where we extract several
orders of an asymptotic expansion of the vorticity and the velocity field at
infinity. The asymptotic velocity field obtained this way is divergence-free
and may be used to define artificial boundary conditions of Dirichlet type
when the system of equations is restricted to a finite sub-domain to be solved
numerically. The use of asymptotic terms as artificial boundary conditions was
pioneered in \cite{Boenisch.etal-Secondorderadaptive2008} for the related
problem of an exterior flow in the whole space in two dimensions, and in
\cite{Heuveline.Wittwer-AdaptiveBoundaryConditions2010} for the case in three dimensions.

\bigskip

Let $\mathbf{x}=(x,y)$, and let $\Omega_{+}=\{\left.  (x,y)\in\mathbb{R}%
^{2}\right\vert ~y>1\}$. The model under consideration is given by the
Navier-Stokes equations with a drift term parallel to the boundary,%
\begin{align}
-\partial_{x}\boldsymbol{u}\mathbf{+}\Delta\boldsymbol{u} &  =\boldsymbol{F}%
\mathbf{+}\boldsymbol{u}\cdot\mathbf{\nabla}\boldsymbol{u}+\mathbf{\nabla
}p~,\label{eq:nssteadyforce}\\
\mathbf{\nabla}\cdot\boldsymbol{u} &  =0~,\label{eq:incompressibility}%
\end{align}
subject to the boundary conditions%
\begin{align}
\boldsymbol{u}(x,1) &  =0~,\hspace{1cm}x\in\mathbb{R}~,\label{eq:b0}\\
\lim\limits_{\mathbf{x\rightarrow\infty}}\boldsymbol{u}\mathbf{(x)} &
=0~.\label{eq:b1}%
\end{align}
The following theorem is our main result.

\begin{theorem}
\label{thm:main}For all $\boldsymbol{F}\in C_{c}^{\infty}(\Omega_{+})$ with
$\boldsymbol{F}$ sufficiently small in a sense to be defined below, there
exist a unique vector field $\boldsymbol{u}=(u,v)$ and a function $p$
satisfying the Navier-Stokes equations (\ref{eq:nssteadyforce}),
(\ref{eq:incompressibility}) in $\Omega_{+}$ subject to the boundary
conditions (\ref{eq:b0}) and (\ref{eq:b1}). Moreover, there exists a constant
$C>0$, such that $|y^{3/2}u(x,y)|+|y^{3/2}v(x,y)|+$ $|y^{3}\omega
(x,y)|+|xy\omega(x,y)|\leq C$, for all $(x,y)\in\Omega_{+}$.
\end{theorem}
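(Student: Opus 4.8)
The plan is to work, as in \cite{Hillairet.Wittwer-Existenceofstationary2009}, with the scalar vorticity $\omega=\partial_x v-\partial_y u$ rather than with the velocity directly. Taking the curl of (\ref{eq:nssteadyforce}) eliminates the pressure and yields, for the two-dimensional incompressible flow, an equation of the form $-\partial_x\omega+\Delta\omega=g+\boldsymbol{u}\cdot\nabla\omega$, where $g$ is the curl of $\boldsymbol{F}$ and is again smooth and compactly supported. Performing a Fourier transform in the wall-parallel variable $x$, with conjugate variable $k$, turns this into the second-order ordinary differential equation $\partial_y^{2}\hat\omega-(k^{2}+ik)\hat\omega=\hat N$ in the ``time'' variable $y$, where $\hat N$ collects the transformed force and the transformed nonlinearity. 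The characteristic exponents are $\pm\lambda(k)$ with $\lambda(k)=\sqrt{k^{2}+ik}$, whose real part behaves like $|k|^{1/2}$ near $k=0$; this is the absence of a spectral gap that forces the use of the adapted functional framework of \cite{Gallay-center-stablemanifoldtheorem1993,Baalen-StationarySolutionsof2002,Wittwer-structureofStationary2002}. From that framework I would take as given the solution $\hat\omega$ and the associated bounds that produce the $y^{3/2}u$, $y^{3/2}v$ and $y^{3}\omega$ estimates; the genuinely new content is the bound $|xy\omega|\le C$.

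The key observation is that $\widehat{x\omega}=i\,\partial_k\hat\omega$, so the bound $|xy\omega|\le C$ is equivalent to suitable control of $\partial_k\hat\omega$, in particular to the statement that $y\,\partial_k\hat\omega$ is bounded after the inverse transform. I would therefore differentiate with respect to $k$ the integral (fixed point) equation satisfied by $\hat\omega$, in which $\hat\omega$ appears both linearly through the free evolution and quadratically through $\hat N$. Writing $\eta:=\partial_k\hat\omega$ and treating the already-constructed $\hat\omega$ as known data, this produces a \emph{linear} fixed point equation for $\eta$: differentiation of the Green kernel built from $e^{\pm\lambda(k)(y-s)}$ contributes the inhomogeneous term, while $\partial_k$ of the nonlinearity $\hat N$ contributes a term linear in $\eta$ paired against $\hat\omega$. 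Solving this linear problem in an appropriate weighted Banach space is the mechanism by which the extra decay is extracted.

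The main obstacle is the singular dependence of the problem on $k$ near $k=0$. Since $\partial_k\lambda=(2k+i)/(2\lambda)$ behaves like $|k|^{-1/2}$ as $k\to0$, differentiating the kernel produces a factor $\partial_k\lambda\cdot(y-s)$; combined with the decaying exponential, the supremum over $s$ of $(y-s)e^{-\mathrm{Re}\,\lambda\,(y-s)}$ is of order $(\mathrm{Re}\,\lambda)^{-1}\sim|k|^{-1}$, so $\eta$ is genuinely more singular in $k$ at the origin than $\hat\omega$. The crux of the argument is therefore to choose the function space for $\eta$ so that precisely this $|k|^{-1/2}$ behavior (together with the accompanying $y$-growth) is permitted and self-consistently reproduced by the linear map, and then to verify that the map is a contraction for $\boldsymbol{F}$ small. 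This requires careful, framework-specific estimates on the differentiated kernel, tracking separately the behavior near $k=0$ and for large $|k|$, and is where essentially all of the work lies.

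Finally I would transfer the information back to direct space. Having bounded $\hat\omega$ and $\eta=\partial_k\hat\omega$ in the weighted norms — the weights being designed so that $\mathrm{Re}\,\lambda\sim|k|^{1/2}$ integrates correctly against the $y$-dependence — standard inverse-Fourier-transform estimates give both $|y^{3}\omega|\le C$ (from $\hat\omega$) and $|xy\omega|\le C$ (from $\eta$). The velocity bounds $|y^{3/2}u|,|y^{3/2}v|\le C$ then follow by reconstructing $\boldsymbol{u}$ from $\omega$ through the half-plane Biot--Savart law compatible with the boundary conditions (\ref{eq:b0})--(\ref{eq:b1}), exactly as in \cite{Hillairet.Wittwer-Existenceofstationary2009}. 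Combining the four estimates yields the stated uniform bound and completes the proof.
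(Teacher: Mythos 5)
Your proposal follows essentially the same route as the paper: differentiate the integral equation for $\hat\omega$ with respect to $k$ to obtain a \emph{linear} fixed point problem for $\partial_k\hat\omega$ (the differentiated kernels supplying the inhomogeneous data, the differentiated nonlinearity the linear part), solve it for small $\boldsymbol{F}$ in weighted Banach spaces that admit the $|\kappa|^{-1}\sim|k|^{-1/2}$ singularity at $k=0$, and transfer the result to direct space via $\widehat{x\omega}=i\,\partial_k\hat\omega$, importing the $\hat u$, $\hat v$, $\hat\omega$ bounds from the earlier work exactly as you indicate. The only cosmetic slip is the claim $(\mathrm{Re}\,\lambda)^{-1}\sim|k|^{-1}$ — in fact $(\mathrm{Re}\,\lambda)^{-1}\sim|k|^{-1/2}$ near $k=0$ — which is consistent with the $|k|^{-1/2}$-singular space you then (correctly) choose, and with the paper's bookkeeping.
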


\bigskip

This theorem is a consequence of Theorem$~$\ref{thm:existence} which is proved
in Section \ref{sec:existencemap}. The crucial improvement with respect to
\cite{Hillairet.Wittwer-Existenceofstationary2009} is the bound on the
function $xy\omega(x,y)$.

\bigskip

The paper is organized as follows. In Section~\ref{sec:evoleq} we rewrite
(\ref{eq:nssteadyforce}) and (\ref{eq:incompressibility}) as a dynamical
system with $y$ playing the role of time, and Fourier-transform the equations
with respect to the variable $x$. Then, in Section~\ref{sec:integraleq}, we
recall the integral equations for the vorticity discussed in
\cite{Hillairet.Wittwer-Existenceofstationary2009} and complement them by the
ones for the derivative with respect to $k$. We then introduce in
Section~\ref{sec:funframe} certain well adapted Banach spaces which encode the
information concerning the decay of the functions at infinity. Finally, in
Section~\ref{sec:existencemap}, we reformulate the problem of showing the
existence of the derivative of vorticity with respect to $k$ as the fixed
point of a continuous map, based on the existence of solutions proved in
\cite{Hillairet.Wittwer-Existenceofstationary2009}. We present in
Sections~\ref{sec:convolutiondisc} and \ref{sec:bounds} the proofs of the
lemmas used in Section~\ref{sec:existencemap}. In the appendix, we recall
results from \cite{Hillairet.Wittwer-Existenceofstationary2009} which are
needed here.

\section{Reduction to an evolution equation\label{sec:evoleq}}

We recall the procedure used in
\cite{Hillairet.Wittwer-Existenceofstationary2009} to frame the Navier-Stokes
equations for the studied case as a dynamical system. Let $\boldsymbol{u}%
=(u,v)$ and $\boldsymbol{F}=(F_{1},F_{2})$. Then, equations
(\ref{eq:nssteadyforce}) and (\ref{eq:incompressibility}) are equivalent to%
\begin{align}
\omega &  =-\partial_{y}u+\partial_{x}v~,\label{eq:vorticity}\\
-\partial_{x}\omega\mathbf{+}\Delta\omega &  =\partial_{x}(u\omega
)+\partial_{y}(v\omega)+\partial_{x}F_{2}-\partial_{y}F_{1}%
~,\label{eq:vorticityNScomovingForce}\\
\partial_{x}u+\partial_{y}v  &  =0~. \label{eq:incompressibility2}%
\end{align}
The function $\omega$ is the vorticity of the fluid. Once equations
(\ref{eq:vorticity})-(\ref{eq:incompressibility2}) are solved, the pressure
$p$ can be obtained by solving the equation%
\[
\Delta p=-\mathbf{\nabla}\cdot(\boldsymbol{F}\mathbf{+}\boldsymbol{u}%
\cdot\mathbf{\nabla}\boldsymbol{u})
\]
in $\Omega_{+}$, subject to the Neumann boundary condition%
\[
\partial_{y}p(x,1)=\partial_{y}^{2}v(x,1)~.
\]
Let%
\begin{align}
q_{0}  &  =u\omega~,\label{eq:defq0direct}\\
q_{1}  &  =v\omega~, \label{eq:defq1direct}%
\end{align}
and let furthermore%
\begin{align}
Q_{0}  &  =q_{0}+F_{2}~,\label{eq:defQ0direct}\\
Q_{1}  &  =q_{1}-F_{1}~. \label{eq:defQ1direct}%
\end{align}
We then rewrite the second order differential equation
(\ref{eq:vorticityNScomovingForce}) as a first order system
\begin{align}
\partial_{y}\omega &  =\partial_{x}\eta+Q_{1}~,\label{eq:diffomegadirect}\\
\partial_{y}\eta &  =-\partial_{x}\omega+\omega+Q_{0}~.
\label{eq:diffetadirect}%
\end{align}
Note that, unlike the right-hand side of (\ref{eq:vorticityNScomovingForce}),
the expressions for $Q_{0}$ and $Q_{1}$ do not contain derivatives. This is
due to the fact that, in contrast to standard practice, we did not set, say,
$\partial_{y}\omega=\eta$, but we chose with (\ref{eq:diffomegadirect}) a more
sophisticated definition. The fact that the nonlinear terms in
(\ref{eq:diffomegadirect}), (\ref{eq:diffetadirect}) do not contain
derivatives simplifies the analysis of the equations considerably. An
additional trick allows to reduce complexity even further. Namely, we can
replace (\ref{eq:incompressibility2}) and (\ref{eq:vorticity}) with the
equations
\begin{align}
\partial_{y}\psi &  =-\partial_{x}\varphi-Q_{1}~,\label{eq:diffpsidirect}\\
\partial_{y}\varphi &  =\partial_{x}\psi+Q_{0}~, \label{eq:diffphidirect}%
\end{align}
if we use the decomposition%
\begin{align}
u  &  =-\eta+\varphi~,\label{eq:ansatzUdirect}\\
v  &  =\omega+\psi~. \label{eq:ansatzVdirect}%
\end{align}
The point is that in contrast to $u$ and $v$ the functions $\psi$ and
$\varphi$ decouple on the linear level from $\omega$ and $\eta$. Since, on the
linear level we have $\Delta\varphi=0$ and $\Delta\psi=0$, it will turn out
that $\varphi$ and $\psi$ have a dominant asymptotic behavior which is
harmonic when $Q_{0}$ and $Q_{1}$ are small.

Equations (\ref{eq:diffomegadirect})-(\ref{eq:diffphidirect}) are a dynamical
system with $y$ playing the role of time. We now take the Fourier transform in
the $x$-direction.

\begin{definition}
\label{def:fourier}Let $\hat{f}$ be a complex valued function on $\Omega_{+}$.
Then, we define the inverse Fourier transform $f=\mathcal{F}^{-1}[\hat{f}]$ by
the equation,
\[
f(x,y)=\mathcal{F}^{-1}[\hat{f}](x,y)=\frac{1}{2\pi}\int_{\mathbb{R}}%
e^{-ikx}\hat{f}(k,y)dk~,
\]
and $\hat{h}=\hat{f}\ast\hat{g}$ by%
\[
\hat{h}(k,y)=(\hat{f}\ast\hat{g})(k,y)=\frac{1}{2\pi}\int_{\mathbb{R}}\hat
{f}(k-k^{\prime},y)\hat{g}(k^{\prime},y)dk^{\prime}~,
\]
whenever the integrals make sense. We note that for a function $f$ which is
smooth and of compact support in $\Omega_{+}$ we have $f=\mathcal{F}^{-1}%
[\hat{f}]$, where%
\[
\hat{f}(k,y)=\mathcal{F}[f](k,y)=\int_{\mathbb{R}}e^{ikx}f(x,y)dx~,
\]
and that $fg=\mathcal{F}^{-1}[\hat{f}\ast\hat{g}]$.
\end{definition}

With these definitions we have in Fourier space, instead of
(\ref{eq:diffomegadirect})-(\ref{eq:diffphidirect}), the equations%
\begin{align}
\partial_{y}\hat{\omega}  &  =-ik\hat{\eta}+\hat{Q}_{1}%
~,\label{eq:diffomegafourier}\\
\partial_{y}\hat{\eta}  &  =(ik+1)\hat{\omega}+\hat{Q}_{0}%
~,\label{eq:diffetafourier}\\
\partial_{y}\hat{\psi}  &  =ik\hat{\varphi}-\hat{Q}_{1}%
~,\label{eq:diffpsifourier}\\
\partial_{y}\hat{\varphi}  &  =-ik\hat{\psi}+\hat{Q}_{0}~.
\label{eq:diffphifourier}%
\end{align}
From (\ref{eq:defQ0direct}) and (\ref{eq:defQ1direct}) we get%
\begin{align}
\hat{Q}_{0}  &  =\hat{q}_{0}+\hat{F}_{2}~,\label{eq:defQ0fourier}\\
\hat{Q}_{1}  &  =\hat{q}_{1}-\hat{F}_{1}~, \label{eq:defQ1fourier}%
\end{align}
from (\ref{eq:defq0direct}) and (\ref{eq:defq1direct}) we get%
\begin{align}
\hat{q}_{0}  &  =\hat{u}\ast\hat{\omega}~,\label{eq:defq0fourier}\\
\hat{q}_{1}  &  =\hat{v}\ast\hat{\omega}~, \label{eq:defq1fourier}%
\end{align}
and instead of (\ref{eq:ansatzUdirect}) and (\ref{eq:ansatzVdirect}) we have
the equations%
\begin{align}
\hat{u}  &  =-\hat{\eta}+\hat{\varphi}~,\label{eq:ansatzUfourier}\\
\hat{v}  &  =\hat{\omega}+\hat{\psi}~. \label{eq:ansatzVfourier}%
\end{align}

\section{Integral equations\label{sec:integraleq}}

We now reformulate the problem of finding a solution to
(\ref{eq:diffomegafourier})-(\ref{eq:diffphifourier}) which satisfies the
boundary conditions (\ref{eq:b0}) and (\ref{eq:b1}) in terms of a system of
integral equations. The equations for $\hat{\omega}$, $\hat{\eta}$,
$\hat{\varphi}$ and $\hat{\psi}$ are as in
\cite{Hillairet.Wittwer-Existenceofstationary2009}. In particular we recall
that%
\begin{equation}
\hat{\omega}=\sum_{m=0}^{1}\sum_{n=1}^{3}\hat{\omega}_{n,m}~,
\label{eq:omegaInt}%
\end{equation}
where, for $n=1,2,3$, $m=0,1$,
\begin{equation}
\hat{\omega}_{n,m}(k,t)=\check{K}_{n}(k,t-1)\int_{I_{n}}\check{f}%
_{n,m}(k,s-1)\hat{Q}_{m}(k,s)ds~, \label{eq:omegaIntComp}%
\end{equation}
where, for $k\in\mathbb{R}\setminus\{0\}$ and $\sigma$, $\tau\geq0$,%
\begin{align}
\check{K}_{n}(k,\tau)  &  =\frac{1}{2}e^{-\kappa\tau}~,\text{ for
}n=1,2~,\label{Kn}\\
\check{K}_{3}(k,\tau)  &  =\frac{1}{2}(e^{\kappa\tau}-e^{-\kappa\tau})~,
\label{K3}%
\end{align}
and%
\begin{align}
\check{f}_{1,0}(k,\sigma)  &  =\frac{ik}{\kappa}e^{\kappa\sigma}-\frac{\left(
|k|+\kappa\right)  ^{2}}{\kappa}e^{-\kappa\sigma}+2\left(  |k|+\kappa\right)
e^{-|k|\sigma}~,\label{eq:f10}\\
\check{f}_{2,0}(k,\sigma)  &  =2\left(  \kappa+|k|\right)  \left(
e^{-|k|\sigma}-e^{-\kappa\sigma}\right)  ~,\label{eq:f20}\\
\check{f}_{3,0}(k,\sigma)  &  =\frac{ik}{\kappa}e^{-\kappa\sigma
}~,\label{eq:f30}\\
\check{f}_{1,1}(k,\sigma)  &  =e^{\kappa\sigma}+\frac{\left(  |k|+\kappa
\right)  ^{2}}{ik}e^{-\kappa\sigma}-2\frac{|k|\left(  |k|+\kappa\right)  }%
{ik}e^{-|k|\sigma}~,\label{eq:f11}\\
\check{f}_{2,1}(k,\sigma)  &  =2\left(  \frac{|k|\left(  |k|+\kappa\right)
}{ik}-1\right)  e^{-\kappa\sigma}-2\frac{|k|\left(  |k|+\kappa\right)  }%
{ik}e^{-|k|\sigma}~,\label{eq:f21}\\
\check{f}_{3,1}(k,\sigma)  &  =-e^{-\kappa\sigma}~, \label{eq:f31}%
\end{align}
and where $I_{1}=[1,t]$ and $I_{2}=I_{3}=[t,\infty)$.

We introduce the integral equation for $\partial_{k}\hat{\omega}$, noting that
$\hat{\omega}$ is continuous at $k=0$ (see
\cite{Hillairet.Wittwer-Existenceofstationary2009}). From
(\ref{eq:omegaIntComp}) we get that
\begin{equation}
\partial_{k}\hat{\omega}=\sum_{m=0}^{1}\sum_{n=1}^{3}\sum_{l=1}^{3}%
\partial_{k}\hat{\omega}_{l,n,m}~,\label{eq:dkomegaInt}%
\end{equation}
where, for $n=1,2,3$, $m=0,1$,%
\begin{align}
\partial_{k}\hat{\omega}_{1,n,m}(k,t) &  =\partial_{k}\check{K}_{n}%
(k,t-1)\int_{I_{n}}\check{f}_{n,m}(k,s-1)\hat{Q}_{m}%
(k,s)ds~,\label{eq:dkomegaIntComp1}\\
\partial_{k}\hat{\omega}_{2,n,m}(k,t) &  =\check{K}_{n}(k,t-1)\int_{I_{n}%
}\partial_{k}\check{f}_{n,m}(k,s-1)\hat{Q}_{m}%
(k,s)ds~,\label{eq:dkomegaIntComp2}\\
\partial_{k}\hat{\omega}_{3,n,m}(k,t) &  =\check{K}_{n}(k,t-1)\int_{I_{n}%
}\check{f}_{n,m}(k,s-1)\partial_{k}\hat{Q}_{m}%
(k,s)ds~,\label{eq:dkomegaIntComp3}%
\end{align}
where, for $k\in\mathbb{R}\setminus\{0\}$ and $\sigma$, $\tau\geq0$,%
\begin{align}
\partial_{k}\check{K}_{n}(k,\tau) &  =\frac{1}{4}\dfrac{2k-i}{\kappa
}e^{-\kappa\tau}~,\text{ for }n=1,2~,\label{dKn}\\
\partial_{k}\check{K}_{3}(k,\tau) &  =\frac{1}{4}\dfrac{2k-i}{\kappa
}(e^{\kappa\tau}+e^{-\kappa\tau})~,\label{dK3}%
\end{align}
where $\check{f}_{n,m}$ is as above, where
\begin{align}
\partial_{k}\check{f}_{1,0}(k,\sigma) &  =\frac{i}{2\kappa}(e^{\kappa\sigma
}+e^{-\kappa\sigma}-2e^{-|k|\sigma})-\frac{ik^{2}}{2\kappa^{3}}(e^{\kappa
\sigma}-e^{-\kappa\sigma})+\frac{2}{\kappa}\frac{k^{2}+|k|\kappa}%
{k}(e^{-|k|\sigma}-e^{-\kappa\sigma})\nonumber\\
&  +i\frac{k^{2}+\kappa^{2}}{2\kappa^{2}}(e^{\kappa\sigma}-e^{-\kappa\sigma
})\sigma+\frac{k^{2}+|k|\kappa}{k}\frac{k^{2}+\kappa^{2}}{\kappa^{2}%
}e^{-\kappa\sigma}\sigma-2\frac{k^{2}+|k|\kappa}{k}e^{-|k|\sigma}%
\sigma~,\label{eq:dkf10}\\
\partial_{k}\check{f}_{2,0}(k,\sigma) &  =\frac{(|k|+\kappa)^{2}}{\kappa
k}(e^{-|k|\sigma}-e^{-\kappa\sigma})-2\frac{\kappa+|k|}{\kappa k}\left(
|k|\kappa e^{-|k|\sigma}-\frac{k^{2}+\kappa^{2}}{2}e^{-\kappa\sigma}\right)
\sigma~,\label{eq:dkf20}\\
\partial_{k}\check{f}_{3,0}(k,\sigma) &  =\frac{k}{2\kappa^{3}}e^{-\kappa
\sigma}-i\frac{k^{2}+\kappa^{2}}{2\kappa^{2}}\sigma e^{-\kappa\sigma
}~,\label{eq:dkf30}\\
\partial_{k}\check{f}_{1,1}(k,\sigma) &  =i\frac{\left(  |k|+\kappa\right)
^{2}}{\kappa|k|}(e^{-|k|\sigma}-e^{-\kappa\sigma})\nonumber\\
&  +\frac{k^{2}+\kappa^{2}}{2\kappa k}(e^{\kappa\sigma}+e^{-\kappa\sigma
})\sigma+2i\frac{k^{2}+|k|\kappa}{k^{2}}\left(  \frac{k^{2}+\kappa^{2}%
}{2\kappa}e^{-\kappa\sigma}-|k|e^{-|k|\sigma}\right)  \sigma~,\label{eq:dkf11}%
\\
\partial_{k}\check{f}_{2,1}(k,\sigma) &  =i\frac{\left(  |k|+\kappa\right)
^{2}}{\kappa|k|}(e^{-\kappa\sigma}-e^{-|k|\sigma})+i(|k|+\kappa)\frac
{k^{2}+\kappa^{2}}{k^{2}}e^{-\kappa\sigma}\sigma-2i\left(  |k|+\kappa\right)
e^{-|k|\sigma}\sigma~,\label{eq:dkf21}\\
\partial_{k}\check{f}_{3,1}(k,\sigma) &  =\frac{k^{2}+\kappa^{2}}{2\kappa
k}e^{-\kappa\sigma}\sigma~.\label{eq:dkf31}%
\end{align}
and where the functions
\begin{align*}
\partial_{k}\hat{Q}_{0} &  =\partial_{k}\hat{q}_{0}+\partial_{k}\hat{F}%
_{2}~,\\
\partial_{k}\hat{Q}_{1} &  =\partial_{k}\hat{q}_{1}-\partial_{k}\hat{F}_{1}~,
\end{align*}
are obtained from (\ref{eq:defQ0fourier})\ and (\ref{eq:defQ1fourier}). Since
$\hat{q}_{0}$ and $\hat{q}_{1}$ are convolution products (see
(\ref{eq:defq0fourier}) and (\ref{eq:defq1fourier})), and noting that $\hat
{u}$ and $\hat{v}$ are continuous bounded functions on $\mathbb{R}$, that
$\hat{\omega}$ is continuous on $\mathbb{R}$ and differentiable on
$\mathbb{R}\setminus\{0\}$ and that $\partial_{k}\hat{\omega}$ is absolutely
integrable, we conclude (see \cite[Proposition 8.8, page 241]{Folland1999})
that $\hat{q}_{0}$ and $\hat{q}_{1}$ are continously differentiable functions
and that
\begin{align}
\partial_{k}\hat{q}_{0} &  =\hat{u}\ast\partial_{k}\hat{\omega}%
~,\label{eq:defdkq0}\\
\partial_{k}\hat{q}_{1} &  =\hat{v}\ast\partial_{k}\hat{\omega}%
~.\label{eq:defdkq1}%
\end{align}
This means that it is sufficient to add equation (\ref{eq:dkomegaInt}) to the
ones for $\hat{\omega}$, $\hat{\eta}$, $\hat{\varphi}$ and $\hat{\psi}$ in
order to get a set of integrals equations determining also $\partial_{k}%
\hat{\omega}$.

\begin{remark}
\label{rem:kcheckandk}The products $\check{K}_{n}\check{f}_{n,m}$ are equal to
$K_{n}f_{n,m}$ as defined in
\cite{Hillairet.Wittwer-Existenceofstationary2009}, and we have $\check
{K}_{n=1,2}=K_{n=1,2}$, $\check{K}_{3}=\frac{ik}{\kappa}K_{3}$, $\check
{f}_{n=1,2;m}=f_{n=1,2;m}$ and $\check{f}_{3,m}=\frac{\kappa}{ik}f_{3,m}$. We
chose to rewrite the equations in the new form for convenience later on.
\end{remark}

\section{Functional framework\label{sec:funframe}}

We recall the definition of the function spaces introduced in
\cite{Hillairet.Wittwer-Existenceofstationary2009} and extend it to include
functions with a certain type of singular behavior. Let $\alpha$, $r\geq0$,
$k\in\mathbb{R}$, $t\geq1$, and let
\begin{equation}
\mu_{\alpha,r}(k,t)=\frac{1}{1+(|k|t^{r})^{\alpha}}~. \label{eq:defMu}%
\end{equation}
Let furthermore%
\begin{align*}
\bar{\mu}_{\alpha}  &  =\mu_{\alpha,1}(k,t)~,\\
\tilde{\mu}_{\alpha}  &  =\mu_{\alpha,2}(k,t)~.
\end{align*}
We also define%
\begin{equation}
\kappa=\sqrt{k^{2}-ik}~, \label{eq:defKappa}%
\end{equation}
and%
\begin{equation}
\Lambda_{-}=-\operatorname{Re}(\kappa)=-\frac{1}{2}\sqrt{2\sqrt{k^{2}+k^{4}%
}+2k^{2}}~. \label{LM}%
\end{equation}
Throughout this paper we use the inequalities%
\begin{equation}
|\kappa|=(k^{2}+k^{4})^{1/4}\leq|k|^{1/2}+|k|\leq2^{3/4}|\kappa|\leq
2^{3/4}(1+|k|)~. \label{bk1}%
\end{equation}
We have in particular that
\begin{equation}
|k|^{\frac{1}{2}}\leq\mathrm{const.}|\kappa|~, \label{eq:sqrtKleqkappa}%
\end{equation}
and that%
\begin{equation}
e^{\Lambda_{-}\sigma}\leq e^{-|k|\sigma}~, \label{expbound}%
\end{equation}
which will play a crucial role for small and large values of $k$, respectively.

\begin{definition}
\label{def:Bnapq}We define, for fixed $\alpha\geq0$, and $n$, $p$, $q$ $\geq
0$, $\mathcal{B}_{\alpha,p,q}^{n}$ to be the Banach space of functions
$\hat{f}\colon\mathbb{R}\setminus\{0\}\times\lbrack1,\infty)\rightarrow
\mathbb{C}$, for which $\hat{f}_{n}=\kappa^{n}\cdot\hat{f}\in C(\mathbb{R}%
\setminus\{0\}\times\lbrack1,\infty),\mathbb{C})$, and for which the norm%
\[
\left\Vert \hat{f};\mathcal{B}_{\alpha,p,q}^{n}\right\Vert =\sup_{t\geq1}%
\sup_{k\in\mathbb{R}\setminus\{0\}}\frac{|\hat{f}_{n}(k,t)|}{\frac{1}{t^{p}%
}\bar{\mu}_{\alpha}(k,t)+\frac{1}{t^{q}}\tilde{\mu}_{\alpha}(k,t)}%
\]
is finite. We use the shorthand $\mathcal{B}_{\alpha,p,q}$\ for $\mathcal{B}%
_{\alpha,p,q}^{0}$. Furthermore we set, for $\alpha>2$,%
\begin{align*}
\mathcal{D}_{\alpha-1,p,q}^{1} &  =\mathcal{B}_{\alpha,p,q}^{1}\times
\mathcal{B}_{\alpha-\frac{1}{2},p+\frac{1}{2},q+\frac{1}{2}}^{1}%
\times\mathcal{B}_{\alpha-1,p+\frac{1}{2},q+1}^{1}~,\\
\mathcal{V}_{\alpha} &  =\mathcal{B}_{\alpha,\frac{5}{2},1}\times
\mathcal{B}_{\alpha,\frac{1}{2},0}\times\mathcal{B}_{\alpha,\frac{1}{2},1}~.
\end{align*}

\end{definition}

\begin{remark}
We present two elementary properties of the spaces $\mathcal{B}_{\alpha
,p,q}^{n}$, which will be routinely used without mention. Let \textit{
}$\alpha$\textit{, }$\alpha^{\prime}\geq0$\textit{, and }$p$\textit{,
}$p^{\prime}$\textit{, }$q$\textit{, }$q^{\prime}$\textit{ }$\geq0$\textit{,
then }%
\[
\mathcal{B}_{\alpha,p,q}^{n}\cap\mathcal{B}_{\alpha^{\prime},p^{\prime
},q^{\prime}}^{n}\subset\mathcal{B}_{\min\{\alpha^{\prime},\alpha
,\},\min\{p^{\prime},p\},\min\{q^{\prime},q\}}^{n}~.
\]
In addition\textit{ we have}%
\[
\mathcal{B}_{\alpha,p,q}^{n}\subset\mathcal{B}_{\alpha,\min\{p,q\},\infty}%
^{n}~,
\]
where the space with $q=\infty$ is to be understood to contain functions for
which the norm%
\[
\left\Vert \hat{f};\mathcal{B}_{\alpha,p,\infty}^{n}\right\Vert =\sup_{t\geq
1}\sup_{k\in\mathbb{R}\setminus\{0\}}\frac{|\hat{f}_{n}(k,t)|}{\frac{1}{t^{p}%
}\bar{\mu}_{\alpha}(k,t)}%
\]
is finite.
\end{remark}

\section{Existence of solutions\label{sec:existencemap}}

In \cite{Hillairet.Wittwer-Existenceofstationary2009} it was shown that one
can rewrite the integral equations as a fixed point problem, and that, for
$\boldsymbol{F}$ sufficiently small, there exist functions $\hat{\omega}$,
$\hat{u}$ and $\hat{v}$, that are solution to (\ref{eq:vorticity}%
)-(\ref{eq:incompressibility2}), satisfying the boundary conditions
(\ref{eq:b0}) and (\ref{eq:b1}). More precisely, we have, for $\alpha>3$,%
\begin{align}
\hat{\omega} &  \in\mathcal{B}_{\alpha,\frac{5}{2},1}~,\label{eq:omegaspace}\\
\hat{u} &  \in\mathcal{B}_{\alpha,\frac{1}{2},0}~,\label{eq:uspace}\\
\hat{v} &  \in\mathcal{B}_{\alpha,\frac{1}{2},1}~,\label{eq:vspace}%
\end{align}
and, for $i=0,1$,%
\begin{equation}
\hat{Q}_{i}\in\mathcal{B}_{\alpha,\frac{7}{2},\frac{5}{2}}~.\label{eq:Qispace}%
\end{equation}
We now show that using this solution as a starting point, we may define a
linear fixed point problem with a unique solution for $\partial_{k}\hat
{\omega}$. The structure of (\ref{eq:dkomegaInt}) is rather complicated and it
turns out to be necessary to decompose the sum into three parts which are
analyzed independently. Let $\mathbf{\hat{d}}=(\hat{d}_{1},\hat{d}_{2},\hat
{d}_{3})$ where
\[
\hat{d}_{l}=\sum_{m=0}^{1}\sum_{n=1}^{3}\partial_{k}\hat{\omega}_{l,n,m}~,
\]
then $\partial_{k}\hat{\omega}=\sum_{l=1}^{3}\hat{d}_{l}$. The function
$\hat{d}_{3}$ depends on $\partial_{k}\hat{\omega}$, but $\hat{d}_{1}$ and
$\hat{d}_{2}$ do not.

\begin{proposition}
\label{prop:d1&d2space}The functions $\hat{d}_{1}$ and $\hat{d}_{2}$ are in
$\mathcal{D}_{\alpha-1,\frac{3}{2},0}^{1}$.
\end{proposition}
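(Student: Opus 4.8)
I propose to exploit the structural fact, already noted before the statement, that $\hat{d}_1$ and $\hat{d}_2$ are built only from $\hat{Q}_0,\hat{Q}_1$, whose decay is given by (\ref{eq:Qispace}); no self-referential bound on $\partial_k\hat\omega$ is needed here (that coupling lives only in $\hat{d}_3$). Since the target $\mathcal{D}_{\alpha-1,\frac32,0}^{1}=\mathcal{B}_{\alpha,\frac32,0}^{1}\times\mathcal{B}_{\alpha-\frac12,2,\frac12}^{1}\times\mathcal{B}_{\alpha-1,2,1}^{1}$ is a product, I would read each $\hat{d}_l$ ($l=1,2$) as the triple indexed by the kernel label $n$, namely $\hat{d}_{l,n}=\sum_{m=0}^{1}\partial_k\hat\omega_{l,n,m}$, and show that $\hat{d}_{l,n}$ lands in the $n$-th factor. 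The matching is dictated by the kernels and intervals, in order of increasing difficulty: $n=1$ uses the decaying $\check{K}_1$ on the finite interval $I_1=[1,t]$ and yields the strongest decay (index $\alpha$), $n=2$ uses the decaying $\check{K}_2$ on $I_2=[t,\infty)$ (index $\alpha-\tfrac12$), while $n=3$ carries the growing part of $\check{K}_3$ on $I_3=[t,\infty)$ and gives the weakest decay (index $\alpha-1$). Membership in $\mathcal{B}^{1}$ also demands continuity of $\kappa\,\hat{d}_{l,n}$ on $\mathbb{R}\setminus\{0\}\times[1,\infty)$, which is immediate from the explicit formulas (\ref{dKn})--(\ref{dK3}), (\ref{eq:dkf10})--(\ref{eq:dkf31}) together with continuity of the integrals.

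Because the $\mathcal{B}^{1}$-norm is measured on $\hat{f}_1=\kappa\cdot\hat{f}$, the concrete task for each triple $(l,n,m)$ is a pointwise bound of $|\kappa\,\partial_k\hat\omega_{l,n,m}(k,t)|$ by $\mathrm{const}\,(\frac{1}{t^{p}}\bar\mu_{\alpha'}(k,t)+\frac{1}{t^{q}}\tilde\mu_{\alpha'}(k,t))$ with the $(\alpha',p,q)$ of the relevant factor. The first observation is that the factor $\kappa$ from the norm absorbs the singular amplitudes produced by differentiation in $k$: for $\hat{d}_1$, (\ref{dKn}) gives $\kappa\,\partial_k\check{K}_{1,2}=\frac14(2k-i)e^{-\kappa\tau}$, bounded by $\mathrm{const}\,(1+|k|)e^{\Lambda_-\tau}$ via (\ref{bk1}); for $\hat{d}_2$, the single $\kappa$ tames even the $1/\kappa^{3}$ in (\ref{eq:dkf30}), since $\kappa\,\partial_k\check{f}_{3,0}=\frac{k}{2\kappa^{2}}e^{-\kappa\sigma}-i\frac{k^{2}+\kappa^{2}}{2\kappa}\sigma e^{-\kappa\sigma}$ has uniformly bounded amplitudes by (\ref{bk1}), (\ref{eq:sqrtKleqkappa}), leaving only the polynomial factor $\sigma$ and the exponential. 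The exponentials are then controlled with $\operatorname{Re}(\kappa)=-\Lambda_-$ and (\ref{expbound}); for $n=3$ the growing $e^{\kappa\tau}$ is paired with the $e^{-\kappa\sigma}$ of $\check{f}_{3,m}$ (respectively $\partial_k\check{f}_{3,m}$) over $s\in[t,\infty)$, where $\sigma=s-1\ge\tau=t-1$, producing the convergent $e^{-\kappa(s-t)}$.

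With the kernel factors controlled, what remains is the integral over $I_n$ of (kernel)$\times\hat{Q}_m$. Here I would insert $|\hat{Q}_m(k,s)|\le\mathrm{const}\,(\frac{1}{s^{7/2}}\bar\mu_{\alpha}(k,s)+\frac{1}{s^{5/2}}\tilde\mu_{\alpha}(k,s))$ from (\ref{eq:Qispace}) and invoke the integration and convolution estimates established in Sections~\ref{sec:convolutiondisc} and~\ref{sec:bounds}. Those lemmas convert an integrand of the shape (exponential decay in $t-s$, or in $|k|(t-s)$) times (power of $s$) times ($\bar\mu_{\alpha}$ or $\tilde\mu_{\alpha}$) into output weights $\frac{1}{t^{p}}\bar\mu_{\alpha'}+\frac{1}{t^{q}}\tilde\mu_{\alpha'}$ evaluated at time $t$. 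The temporal exponents $p,q$ then emerge from the bookkeeping: the interval $[1,t]$ for $n=1$ versus $[t,\infty)$ for $n=2,3$, together with the $s$-exponents $\tfrac72,\tfrac52$ of $\hat{Q}_m$, produces exactly $(\tfrac32,0)$, $(2,\tfrac12)$ and $(2,1)$ for the three factors.

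I expect the main obstacle to be the loss of decay in $k$ and the verification that the worst term in each $n$-block matches, and does not exceed, the prescribed $(\alpha',p,q)$. Two regimes must be reconciled. For large $k$ the amplitudes $(2k-i)$ left after the $\kappa$-cancellation grow like $|k|$, but this is absorbed because integrating the fast-decaying $e^{-\kappa(s-t)}$ (with $\operatorname{Re}(\kappa)\sim|k|$) contributes compensating negative powers of $|\kappa|$; this compensation must be tracked term by term. For small $k$ the decay index $\alpha'$ is set by the interplay of the slow exponential $e^{\Lambda_-\sigma}\sim e^{-|k|^{1/2}\sigma}$ with the explicit polynomial factors $\sigma=s-1$ generated by the $k$-derivatives in (\ref{eq:dkf10})--(\ref{eq:dkf31}) and the semi-infinite integration, each such factor eroding one step of $\mu$-decay and so degrading $\bar\mu_{\alpha},\tilde\mu_{\alpha}$ to the indices $\alpha-\tfrac12$ and $\alpha-1$ of the second and third factors. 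Carrying out this delicate matching through the lemmas of Sections~\ref{sec:convolutiondisc} and~\ref{sec:bounds} is the heart of the argument; the standing hypothesis $\alpha>3$ guarantees that the semi-infinite integrals for $n=2,3$ converge and that the degraded indices $\alpha-\tfrac12,\alpha-1$ stay well inside the range $\alpha>2$ in which $\mathcal{D}_{\alpha-1,\cdot,\cdot}^{1}$ is defined.
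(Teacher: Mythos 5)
Your coarse skeleton---bounding each $\kappa\partial_{k}\hat{\omega}_{l,n,m}$ term by term, inserting the $\hat{Q}_{m}$ decay from (\ref{eq:Qispace}), and invoking the semigroup estimates of the appendix---is indeed how the paper proceeds in Sections~\ref{sec:boundsd1} and \ref{sec:boundsd2}. But your central organizing claim, that $\hat{d}_{l,n}=\sum_{m}\partial_{k}\hat{\omega}_{l,n,m}$ lands in the $n$-th factor of $\mathcal{D}_{\alpha-1,\frac{3}{2},0}^{1}$ (kernel $n=1\mapsto$ index $\alpha$, $n=2\mapsto\alpha-\frac{1}{2}$, $n=3\mapsto\alpha-1$), is false, and the mechanism you propose to make it work fails exactly where it is needed. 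Differentiating the kernel gives the exact identity $\kappa\partial_{k}\check{K}_{1,2}(k,\tau)=-\frac{\tau}{4}(2k-i)e^{-\kappa\tau}$: the $|k|$-growing factor is multiplied by the \emph{full} time $\tau=t-1$, not by $t-s$. For the $n=1$ terms the $e^{\kappa\sigma}$ part of $\check{f}_{1,m}$ cancels the kernel exponential, so near $s=t$ there is no decaying $e^{-\kappa(t-s)}$ left to integrate, hence no ``compensating negative powers of $|\kappa|$'': at fixed $t$ and large $|k|$, $|\kappa\partial_{k}\hat{\omega}_{1,1,m}(k,t)|$ genuinely carries an extra factor $|k|$ in front of $\bar{\mu}_{\alpha}+\tilde{\mu}_{\alpha}$, and $|k|\tilde{\mu}_{\alpha}$ is not bounded by $\mathrm{const.}\,\tilde{\mu}_{\alpha}$, so the $n=1$ block is \emph{not} in $\mathcal{B}_{\alpha,\frac{3}{2},0}^{1}$. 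Similarly for $n=2$ the outer $(2k-i)$ combined with the $(|k|^{1/2}+|k|)$ of (\ref{eq:bndf20}) exceeds the $|k|^{\beta}$, $\beta\leq1$, that Proposition~\ref{prop:sgk3} can absorb. The assignment is also wrong at the other end: in the paper the $n=3$ contributions to $\hat{d}_{1}$ end up with \emph{no} leftover $k$-factor (the $\min\{1,|\Lambda_{-}|\}$ decay of $\check{f}_{3,m}$ soaks up the $(1+|k|)$ so that Proposition~\ref{prop:sgL3} applies), and therefore sit in the \emph{first} slot---the opposite of your matching.

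What is missing is the paper's actual dispatching mechanism, inequality (\ref{eq:ksacrificealphafort}): any leftover explicit factor $|k|^{c}$, $c\in\{\frac{1}{2},1\}$, is traded for $t^{-cr}$ at the price of lowering the decay index from $\alpha$ to $\alpha-c$. The three factors of $\mathcal{D}_{\alpha-1,\frac{3}{2},0}^{1}$ are populated according to these leftover powers of $|k|$, not according to the kernel label: a single term $\partial_{k}\hat{\omega}_{l,n,m}$ typically spreads over several slots, e.g.\ $(1+|k|)\tilde{\mu}_{\alpha}$ splits into $\tilde{\mu}_{\alpha}$ (first slot) plus $|k|\tilde{\mu}_{\alpha}\leq\mathrm{const.}\,t^{-2}\tilde{\mu}_{\alpha-1}$ (third slot). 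This is precisely what the paper announces at the start of Section~\ref{sec:bounds}: all losses in $\alpha$ occur when applying (\ref{eq:ksacrificealphafort}) to explicit factors $|k|^{c}$, ``which automatically brings forth a structure satisfying the conditions of Corollary~\ref{corr:convsimplification}.'' Relatedly, your diagnosis of the $\alpha$-erosion as a small-$k$ effect driven by the polynomial factors $\sigma$ is a misdiagnosis: the $\sigma$-factors only influence the powers of $t$, through Propositions~\ref{prop:sgL1}--\ref{prop:sgL3}, while the erosion $\alpha\to\alpha-c$ is a large-$|k|$ phenomenon. Finally, a secondary omission: for $\hat{d}_{2}$ one needs bounds on all the derivatives $\kappa\partial_{k}\check{f}_{n,m}$ (Proposition~\ref{prop:kappadkfnmbounds}); your outline only checks the easy case $\kappa\partial_{k}\check{f}_{3,0}$ and leaves unexamined the terms such as $\check{f}_{1,0}$, where the paper exploits a cancellation of leading terms.
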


\begin{proof}
See Sections~\ref{sec:boundsd1} and \ref{sec:boundsd2}.
\end{proof}

\bigskip

We now define the fixed point problem.

\begin{lemma}
\label{lem:P}Let $\alpha>3$, and let $\hat{u}$ and $\hat{v}$ be as in
(\ref{eq:uspace}) and (\ref{eq:vspace}) respectively. Then,%
\[%
\begin{array}
[c]{cccc}%
\mathfrak{L}_{1}~\colon & \mathcal{D}_{\alpha-1,\frac{3}{2},0}^{1} &
\rightarrow & \mathcal{B}_{\alpha,\frac{3}{2},1}\times\mathcal{B}%
_{\alpha,\frac{3}{2},2}\\
& \hat{d} & \longmapsto & \left(
\begin{array}
[c]{c}%
\hat{u}\ast\hat{d}\\
\hat{v}\ast\hat{d}%
\end{array}
\right)  ~,
\end{array}
\]
defines a continuous linear map.
\end{lemma}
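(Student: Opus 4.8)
The plan is to prove that $\mathfrak{L}_1$ is a continuous linear map by establishing the convolution estimate
\[
\left\Vert \hat{u}\ast\hat{d}; \mathcal{B}_{\alpha,\frac{3}{2},1}\right\Vert \leq C\,\left\Vert \hat{u}; \mathcal{B}_{\alpha,\frac{1}{2},0}\right\Vert\,\left\Vert \hat{d}; \mathcal{D}_{\alpha-1,\frac{3}{2},0}^{1}\right\Vert,
\]
and the analogous bound for $\hat{v}\ast\hat{d}$ in $\mathcal{B}_{\alpha,\frac{3}{2},2}$. Linearity in $\hat d$ is immediate from the linearity of convolution, so the entire content is boundedness. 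First I would unpack the definitions: an element $\hat{d}=(\hat d_a,\hat d_b,\hat d_c)\in\mathcal{D}_{\alpha-1,\frac{3}{2},0}^{1}$ lives in the product $\mathcal{B}_{\alpha,\frac{3}{2},0}^{1}\times\mathcal{B}_{\alpha-\frac{1}{2},2,\frac{1}{2}}^{1}\times\mathcal{B}_{\alpha-1,2,1}^{1}$, so each component carries a factor $\kappa$ and a known decay profile in $t$ together with a $\mu$-type decay in $k$. The key point is that convolution in $k$ must be controlled pointwise in $(k,t)$ against the target weight $\frac{1}{t^{3/2}}\bar\mu_\alpha(k,t)+\frac{1}{t}\tilde\mu_\alpha(k,t)$ (and its analogue with exponents $\tfrac32,2$ for $\hat v\ast\hat d$).

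The core of the argument is a convolution lemma for the weights $\mu_{\alpha,r}$. I would reduce the problem to estimating integrals of the form
\[
\int_{\mathbb{R}} \frac{|\hat u(k-k',t)|}{|\kappa(k')|}\,\bigl(\tfrac{1}{t^{p}}\bar\mu_\alpha(k',t)+\tfrac{1}{t^{q}}\tilde\mu_\alpha(k',t)\bigr)\,dk',
\]
where the factor $1/|\kappa(k')|$ appears because $\hat d$ is controlled through $\hat d_n=\kappa^n\hat d$ with $n=1$, so $\hat d$ itself gains an inverse power of $\kappa$. Using $|k'|^{1/2}\leq\mathrm{const.}|\kappa(k')|$ from (\ref{eq:sqrtKleqkappa}), the singularity $1/|\kappa(k')|$ is integrable near $k'=0$, which is what makes the convolution converge despite $\hat d$ being defined only on $\mathbb{R}\setminus\{0\}$. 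The main technical estimate is then a standard but delicate splitting of the $k'$-integral according to the regions $|k'|\leq|k|/2$, $|k-k'|\leq|k|/2$, and the complementary region, comparing $\bar\mu_\alpha(k,t)$ and $\tilde\mu_\alpha(k,t)$ on the left with the product of the corresponding weights for $\hat u$ and $\hat d$ on the right, and tracking the powers of $t$. One must verify that the product of the $t$-decay rates of $\hat u$ (namely $t^{-1/2}$ and $t^{0}$) with those of the three components of $\hat{d}$ produces, after integration, exactly the rates $t^{-3/2}$ and $t^{-1}$ advertised in the target space (respectively $t^{-3/2}$ and $t^{-2}$ for $\hat v$, whose second $t$-index is one larger).

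The main obstacle I anticipate is the bookkeeping of the three inhomogeneous components of $\mathcal{D}_{\alpha-1,\frac{3}{2},0}^{1}$, each living in a different $\mathcal{B}$-space with its own $\alpha$-index ($\alpha$, $\alpha-\tfrac12$, $\alpha-1$) and its own pair of $t$-exponents. Because the $\mu$-weights combine nonlinearly under convolution, the apparently weakest component $\mathcal{B}_{\alpha-1,2,1}^{1}$ must be shown not to degrade the final $\alpha$-index below what the target space requires; here the hypothesis $\alpha>3$ is essential to guarantee enough room for the convolution inequality on $\bar\mu_\alpha$ and $\tilde\mu_\alpha$ to close. I would isolate the pure convolution estimates for $\mu_{\alpha,1}$ and $\mu_{\alpha,2}$ as separate lemmas (to be proved in Section~\ref{sec:convolutiondisc}), so that the proof of Lemma~\ref{lem:P} reduces to applying those lemmas termwise and then collecting the resulting bounds, using the intersection property of the $\mathcal{B}$-spaces noted in the remark after Definition~\ref{def:Bnapq} to land in the stated target. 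The continuity of the map then follows immediately, since the bound is exactly an operator-norm estimate linear in $\hat d$.
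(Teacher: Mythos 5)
Your proposal is correct and follows essentially the same route as the paper: linearity is immediate, and boundedness is reduced to a convolution estimate that handles the $\left\vert\kappa\right\vert^{-1}$ singularity of $\hat{d}$ via $|k|^{1/2}\leq\mathrm{const.}\,|\kappa|$, with the weight bookkeeping isolated in Section~\ref{sec:convolutiondisc} (this is exactly Proposition~\ref{prop:convwithroot} and Corollary~\ref{corr:convsimplification}, which the paper's two-line proof invokes to get (\ref{eq:lemQ1}) and (\ref{eq:lemQ2})). Your exponent accounting, including the observation that the weakest component $\mathcal{B}_{\alpha-1,2,1}^{1}$ must not degrade the target index $\alpha$, matches the content of that corollary; the only slight imprecision is that for this step $\alpha>2$ already suffices, the stronger hypothesis $\alpha>3$ being inherited from the existence theory rather than needed for the convolution to close.
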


\begin{proof}
The map $\mathfrak{L}_{1}$ is linear by definition of the convolution
operation. Using Corollary~\ref{corr:convsimplification} we get that the map
$\mathfrak{L}_{1}$ is bounded, since%
\begin{equation}
\left\Vert \hat{u}\ast\hat{d};\mathcal{B}_{\alpha,\frac{3}{2},1}\right\Vert
\leq\mathrm{const.}\left\Vert \hat{u};\mathcal{B}_{\alpha,\frac{1}{2}%
,0}\right\Vert \cdot\left\Vert d;\mathcal{D}_{\alpha-1,\frac{3}{2},0}%
^{1}\right\Vert ~, \label{eq:lemQ1}%
\end{equation}
and%
\begin{equation}
\left\Vert \hat{v}\ast\hat{d};\mathcal{B}_{\alpha,\frac{3}{2},2}\right\Vert
\leq\mathrm{const.}\left\Vert \hat{v};\mathcal{B}_{\alpha,\frac{1}{2}%
,1}\right\Vert \cdot\left\Vert d;\mathcal{D}_{\alpha-1,\frac{3}{2},0}%
^{1}\right\Vert ~. \label{eq:lemQ2}%
\end{equation}

\end{proof}

\begin{lemma}
\label{lem:Q}Let $\alpha>3$, $\hat{d}_{3}=\sum_{m=0}^{1}\sum_{n=1}^{3}%
\partial_{k}\hat{\omega}_{3,n,m}$ and let $\partial_{k}\hat{\omega}_{3,n,m}$
be given by (\ref{eq:dkomegaIntComp3}). Then, we have
\[%
\begin{array}
[c]{cccc}%
\mathfrak{L}_{2}~\colon & \mathcal{B}_{\alpha,\frac{3}{2},1}\times
\mathcal{B}_{\alpha,\frac{3}{2},2} & \rightarrow & \mathcal{D}_{\alpha
-1,\frac{3}{2},0}^{1}\\
& \left(
\begin{array}
[c]{c}%
\partial_{k}\hat{Q}_{0}\\
\partial_{k}\hat{Q}_{1}%
\end{array}
\right)  & \longmapsto & \hat{d}_{3}~,
\end{array}
\]
which defines a continuous linear map.
\end{lemma}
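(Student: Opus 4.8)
The plan is to handle $\mathfrak{L}_{2}$ in the same spirit as $\mathfrak{L}_{1}$ in Lemma~\ref{lem:P}: linearity is immediate from the integral representation (\ref{eq:dkomegaIntComp3}), since for each pair $(n,m)$ the quantity $\partial_{k}\hat{\omega}_{3,n,m}$ depends linearly on $\partial_{k}\hat{Q}_{m}$ through the fixed kernel $\check{K}_{n}(k,t-1)\check{f}_{n,m}(k,s-1)$, so only the operator bound remains. Recalling that $\mathcal{D}_{\alpha-1,\frac{3}{2},0}^{1}$ is built from the three factors $\mathcal{B}_{\alpha,\frac{3}{2},0}^{1}$, $\mathcal{B}_{\alpha-\frac{1}{2},2,\frac{1}{2}}^{1}$ and $\mathcal{B}_{\alpha-1,2,1}^{1}$, I would prove membership by exhibiting the splitting $\hat{d}_{3}=\sum_{n=1}^{3}g_{n}$ with $g_{n}=\sum_{m=0}^{1}\partial_{k}\hat{\omega}_{3,n,m}$ and showing that each $g_{n}$ lies in one of the three factors, with norm controlled by $\|\partial_{k}\hat{Q}_{0};\mathcal{B}_{\alpha,\frac{3}{2},1}\|+\|\partial_{k}\hat{Q}_{1};\mathcal{B}_{\alpha,\frac{3}{2},2}\|$. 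The splitting is dictated by the three kernel types: the boundary contribution $n=1$ (integration over $I_{1}=[1,t]$ against the decaying $\check{K}_{1}=\tfrac12 e^{-\kappa(t-1)}$) and the two far-field contributions $n=2,3$ (integration over $[t,\infty)$, with $\check{K}_{3}$ the growing–decaying kernel of (\ref{K3})) each carry a different amount of $k$-decay and thus land in a different factor. Continuity of $\kappa\,\hat{d}_{3}$ on $\mathbb{R}\setminus\{0\}\times[1,\infty)$ follows from continuity of the kernels and local uniform convergence of the integrals.

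Since every factor is a $\mathcal{B}^{1}$-space, the object to estimate is $\kappa\,g_{n}$, i.e. $\kappa\,\check{K}_{n}(k,t-1)\int_{I_{n}}\check{f}_{n,m}(k,s-1)\,\partial_{k}\hat{Q}_{m}(k,s)\,ds$. I would insert the explicit kernels (\ref{Kn})–(\ref{K3}) and (\ref{eq:f10})–(\ref{eq:f31}) together with the hypothesis $|\partial_{k}\hat{Q}_{m}(k,s)|\le C\big(s^{-3/2}\bar{\mu}_{\alpha}(k,s)+s^{-q_{m}}\tilde{\mu}_{\alpha}(k,s)\big)$, where $q_{0}=1$, $q_{1}=2$, reducing each term to a finite sum of scalar integrals of the form $|\kappa|\,e^{-\operatorname{Re}(\kappa)(t-1)}\int_{I_{n}}E(k,s-1)\,s^{-a}\mu_{\alpha,r}(k,s)\,ds$, where $E$ is one of the exponentials $e^{\pm\kappa\sigma}$, $e^{-|k|\sigma}$ (possibly times $\sigma$) occurring in $\check{f}_{n,m}$. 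For $n=1$ the product $\check{K}_{1}(k,t-1)e^{\kappa(s-1)}=\tfrac12 e^{-\kappa(t-s)}$ turns this into a stable-mode integral $\int_{1}^{t}e^{-\kappa(t-s)}(\cdots)\,ds$, while for $n=2,3$ one obtains unstable-mode integrals $\int_{t}^{\infty}(\cdots)\,ds$; these are precisely the building blocks supplied by the kernel estimates of Section~\ref{sec:bounds}, which I would invoke term by term.

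Carrying this out relies on the uniform control in $k$ collected around (\ref{bk1}): near $k=0$ the apparently singular prefactors $1/\kappa$ and $1/k$ in the $\check{f}_{n,m}$ are absorbed using $\kappa^{2}=k^{2}-ik$ and $|k|^{1/2}\le\mathrm{const.}\,|\kappa|$ from (\ref{eq:sqrtKleqkappa}), so that the additional factor $|\kappa|$ coming from the $\mathcal{B}^{1}$-norm contributes at worst a harmless $|k|^{1/2}$, whereas for large $k$ decay is extracted from $e^{\Lambda_{-}\sigma}\le e^{-|k|\sigma}$ (\ref{expbound}) and $|\kappa|\le 2^{3/4}(1+|k|)$. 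The $s$-integration must then convert the input weights $\bar{\mu}_{\alpha},\tilde{\mu}_{\alpha}$ and the $s$-decay into the $t$-decay and $k$-weight of the target factor: the exponential kernels localize the mass near $s\approx t$ (far field) or $s\approx 1$ (boundary), and this localization is what trades powers of $s$ for powers of $t$ and, crucially, what forces the drop from $\alpha$ to $\alpha-\tfrac12$ and $\alpha-1$ together with the shifts $(\tfrac32,0)\to(2,\tfrac12)\to(2,1)$ in the decay exponents across the three factors.

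The main obstacle is exactly this last bookkeeping: for every one of the six pieces $\partial_{k}\hat{\omega}_{3,n,m}$ and every exponential term hidden in $\check{f}_{n,m}$, one must verify that after multiplication by $\kappa$ and integration the result is dominated, uniformly in $(k,t)$, by one of the three target weights. The delicate regime is small $k$ with large $t$, where $\bar{\mu}_{\alpha}$ and $\tilde{\mu}_{\alpha}$ separate at the crossovers $|k|\sim t^{-1}$ and $|k|\sim t^{-2}$ and where the $1/\kappa$, $1/k$ factors of $\check{f}_{1,1}$ and $\check{f}_{2,1}$ are most dangerous; ensuring the $\alpha$-exponent does not degrade below the value allotted to each factor while simultaneously matching the $t$-powers is the real content of the estimate. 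Once the six term-by-term bounds are established, summing them and taking the supremum yields the asserted continuity of $\mathfrak{L}_{2}$.
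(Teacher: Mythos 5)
Your toolkit is the right one, and it is the paper's: linearity is immediate, and boundedness is proved by estimating each of the six components $\kappa\partial_{k}\hat{\omega}_{3,n,m}$ separately, inserting the kernel bounds for $\check{f}_{n,m}$, the input bounds $|\partial_{k}\hat{Q}_{0}|\leq\mathrm{const.}(s^{-3/2}\bar{\mu}_{\alpha}+s^{-1}\tilde{\mu}_{\alpha})$ and $|\partial_{k}\hat{Q}_{1}|\leq\mathrm{const.}(s^{-3/2}\bar{\mu}_{\alpha}+s^{-2}\tilde{\mu}_{\alpha})$, the semigroup estimates of Propositions~\ref{prop:sgL1}--\ref{prop:sgk3}, and the trade-off inequality (\ref{eq:ksacrificealphafort}). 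However, your organizing step --- writing $\hat{d}_{3}=g_{1}+g_{2}+g_{3}$ by kernel type $n$ and claiming each $g_{n}$ lands in exactly one of the three factors $\mathcal{B}_{\alpha,\frac{3}{2},0}^{1}$, $\mathcal{B}_{\alpha-\frac{1}{2},2,\frac{1}{2}}^{1}$, $\mathcal{B}_{\alpha-1,2,1}^{1}$ --- is wrong, and that step would fail if executed. Membership of a single function in $\mathcal{D}_{\alpha-1,\frac{3}{2},0}^{1}$ is established by dominating its $\kappa$-weighted modulus by the \emph{sum} of the three weights, and this decomposition does not line up with $n$. Concretely, the paper's bound for $(n,m)=(1,1)$ has the form $(1+|\Lambda_{-}|)\bigl(t^{-3/2}\bar{\mu}_{\alpha}+\cdots\bigr)$: the piece $1\cdot t^{-3/2}\bar{\mu}_{\alpha}$ fits \emph{only} in the first factor (it has neither the $t^{-2}$ decay nor any spare power of $|k|$ required by the other two), while the pieces carrying $|\Lambda_{-}|$, which is unbounded in $k$, cannot remain in an $\alpha$-index space with positive $t$-decay and must be pushed by (\ref{eq:ksacrificealphafort}) into the second and third factors; so the single $n=1$ contribution already spreads over all three factors. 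Conversely, the $n=3$, $m=0$ term is bounded by $t^{-3/2}\bar{\mu}_{\alpha}+t^{-2}\tilde{\mu}_{\alpha-1/2}+t^{-3}\tilde{\mu}_{\alpha-1}$, again touching all three factors. The correct bookkeeping is per component: show each $\kappa\partial_{k}\hat{\omega}_{3,n,m}$ is dominated by the triple of weights of some $\mathcal{D}_{\alpha-1,p,q}^{1}$ with $p\geq\frac{3}{2}$, $q\geq0$, then collect the six bounds (the paper gets $\hat{d}_{3}\in\mathcal{D}_{\alpha-1,\frac{3}{2},1}^{1}\subset\mathcal{D}_{\alpha-1,\frac{3}{2},0}^{1}$).

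A second, more analytic gap: you propose to invoke the far-field estimates ``term by term,'' but for $n=2,3$ with $m=0$ the $\tilde{\mu}_{\alpha}$-part of the input carries only $s^{-1}$, which is exactly the borderline case excluded by Propositions~\ref{prop:sgL3} and \ref{prop:sgk3} (both require $\delta>1$). Before they can be applied one must first spend part of the available factor $|k|$ or $|\Lambda_{-}|$ (coming from the extra $\kappa$ of the $\mathcal{B}^{1}$-norm, or from the bound (\ref{eq:bndf20}) on $\check{f}_{2,0}$) through (\ref{eq:ksacrificealphafort}) to raise the $s$-power above $1$; this preliminary trade is precisely what the paper performs for $\kappa\partial_{k}\hat{\omega}_{3,2,0}$ and $\kappa\partial_{k}\hat{\omega}_{3,3,0}$, and it is one of the places where the loss from $\alpha$ to $\alpha-\frac{1}{2}$ and $\alpha-1$ is forced. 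Without it, the two unstable-mode integrals are simply not covered by the propositions you cite.
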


\begin{proof}
The map $\mathfrak{L}_{2}$ is linear by definition of $\hat{d}_{3}$ and is
proved to be bounded in Section~\ref{sec:boundsd3}.
\end{proof}

\subsection{Proof of Theorem~\ref{thm:main}}

Theorem~\ref{thm:main} is a consequence of the following theorem.

\begin{theorem}
[Existence]\label{thm:existence} Let $\alpha>3$, $\boldsymbol{F}=(F_{1}%
,F_{2})\in C_{c}^{\infty}(\Omega_{+})$, and let $\boldsymbol{\hat{F}}=(\hat
{F}_{1},\hat{F}_{2})$ be the Fourier transform of $\boldsymbol{F}$. If
$\Vert(\hat{F}_{2},-\hat{F}_{1});\mathcal{B}_{\alpha,\frac{7}{2},\frac{5}{2}%
}\times\mathcal{B}_{\alpha,\frac{7}{2},\frac{5}{2}}\Vert$ is sufficiently
small, then there exists a unique solution $(\hat{\omega},\hat{u},\hat
{v},\mathbf{\hat{d})}$ in $\mathcal{V}_{\alpha}\times\mathcal{D}%
_{\alpha-1,\frac{3}{2},0}^{1}$.
\end{theorem}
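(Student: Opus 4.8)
The plan is to establish existence and uniqueness of the quadruple $(\hat{\omega},\hat{u},\hat{v},\mathbf{\hat{d}})$ in two stages, exploiting the fact that the first three components are already provided by the cited existence result while only $\mathbf{\hat{d}}=\partial_{k}\hat{\omega}$ remains to be constructed. First I would invoke the results of \cite{Hillairet.Wittwer-Existenceofstationary2009}, recalled in (\ref{eq:omegaspace})--(\ref{eq:Qispace}), to fix the unique solution $(\hat{\omega},\hat{u},\hat{v})\in\mathcal{V}_{\alpha}$ under the smallness hypothesis on $\boldsymbol{\hat{F}}$. With these functions in hand, the task reduces to solving the single equation $\partial_{k}\hat{\omega}=\hat{d}_{1}+\hat{d}_{2}+\hat{d}_{3}$ from the decomposition (\ref{eq:dkomegaInt}), where $\hat{d}_{1}$ and $\hat{d}_{2}$ are fixed data (independent of $\partial_{k}\hat{\omega}$) while $\hat{d}_{3}$ depends linearly on $\partial_{k}\hat{\omega}$ through (\ref{eq:dkomegaIntComp3}) and the convolution formulas (\ref{eq:defdkq0})--(\ref{eq:defdkq1}).

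The core of the argument is to recast this as a genuine fixed point problem on the space $\mathcal{D}_{\alpha-1,\frac{3}{2},0}^{1}$. The key observation is that the dependence of $\hat{d}_{3}$ on $\mathbf{\hat{d}}$ factors through the composition of the two maps already constructed: given a candidate $\mathbf{\hat{d}}$, one forms $(\partial_{k}\hat{Q}_{0},\partial_{k}\hat{Q}_{1})=(\hat{u}\ast\mathbf{\hat{d}}+\partial_{k}\hat{F}_{2},\,\hat{v}\ast\mathbf{\hat{d}}-\partial_{k}\hat{F}_{1})$ via $\mathfrak{L}_{1}$ from Lemma~\ref{lem:P}, and then applies $\mathfrak{L}_{2}$ from Lemma~\ref{lem:Q} to recover $\hat{d}_{3}$. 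Thus I would define the affine map
\[
\mathcal{T}(\mathbf{\hat{d}})=\hat{d}_{1}+\hat{d}_{2}+\mathfrak{L}_{2}\bigl(\mathfrak{L}_{1}(\mathbf{\hat{d}})+\boldsymbol{\partial_{k}\hat{F}}\bigr)
\]
on $\mathcal{D}_{\alpha-1,\frac{3}{2},0}^{1}$, where $\boldsymbol{\partial_{k}\hat{F}}=(\partial_{k}\hat{F}_{2},-\partial_{k}\hat{F}_{1})$. By Proposition~\ref{prop:d1&d2space} the constant part $\hat{d}_{1}+\hat{d}_{2}+\mathfrak{L}_{2}(\boldsymbol{\partial_{k}\hat{F}})$ lands in the correct space, and by Lemmas~\ref{lem:P} and \ref{lem:Q} the linear part $\mathfrak{L}_{2}\circ\mathfrak{L}_{1}$ is a bounded linear operator on $\mathcal{D}_{\alpha-1,\frac{3}{2},0}^{1}$. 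A fixed point of $\mathcal{T}$ is exactly a solution of $\partial_{k}\hat{\omega}=\mathbf{\hat{d}}$ consistent with (\ref{eq:dkomegaInt}).

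To solve the fixed point equation I would show that $\mathfrak{L}_{2}\circ\mathfrak{L}_{1}$ is a contraction, so that $(\mathrm{Id}-\mathfrak{L}_{2}\circ\mathfrak{L}_{1})$ is invertible and the unique fixed point is obtained by the Neumann series. The operator norm of $\mathfrak{L}_{1}$ is controlled, via (\ref{eq:lemQ1})--(\ref{eq:lemQ2}), by $\Vert\hat{u};\mathcal{B}_{\alpha,\frac{1}{2},0}\Vert$ and $\Vert\hat{v};\mathcal{B}_{\alpha,\frac{1}{2},1}\Vert$, which by the cited estimates are themselves controlled by the smallness of $\boldsymbol{\hat{F}}$; the norm of $\mathfrak{L}_{2}$ is an absolute constant. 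Hence for $\boldsymbol{\hat{F}}$ sufficiently small the composite has operator norm strictly below $1$, giving both existence and uniqueness of $\mathbf{\hat{d}}$ in one stroke. The main obstacle, which is the substance deferred to the later sections, is verifying the boundedness claims of the three preparatory results: Proposition~\ref{prop:d1&d2space} and Lemma~\ref{lem:Q} require delicate estimates on the explicit kernels $\partial_{k}\check{K}_{n}$ and $\partial_{k}\check{f}_{n,m}$ in (\ref{dKn})--(\ref{eq:dkf31}), tracking how the singular factors $1/\kappa$, $1/k$ and the $\sigma$-weighted exponentials interact with the weight $\mu_{\alpha,r}$, while Lemma~\ref{lem:P} rests on the convolution estimate (Corollary~\ref{corr:convsimplification}). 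Once these are granted, the assembly above is routine, and Theorem~\ref{thm:main} follows by transferring the decay encoded in $\mathbf{\hat{d}}\in\mathcal{D}_{\alpha-1,\frac{3}{2},0}^{1}$ back to physical space to produce the bound on $|xy\omega(x,y)|$.
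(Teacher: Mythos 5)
Your proposal is correct and follows essentially the same route as the paper: the paper likewise fixes $(\hat{\omega},\hat{u},\hat{v})\in\mathcal{V}_{\alpha}$ from the cited work, and then solves the affine fixed point problem $\mathfrak{C}[x]=\mathfrak{L}_{2}[\mathfrak{L}_{1}[\hat{d}_{1}+\hat{d}_{2}+x]+(\partial_{k}\hat{F}_{2},-\partial_{k}\hat{F}_{1})]$ on $\mathcal{D}_{\alpha-1,\frac{3}{2},0}^{1}$, using Proposition~\ref{prop:d1&d2space}, Lemmas~\ref{lem:P} and \ref{lem:Q}, the finiteness of $\Vert(\partial_{k}\hat{F}_{2},-\partial_{k}\hat{F}_{1});\mathcal{B}_{\alpha,\frac{3}{2},1}\times\mathcal{B}_{\alpha,\frac{3}{2},2}\Vert$, and the smallness of $\Vert\hat{u}\Vert$ and $\Vert\hat{v}\Vert$ (inherited from the smallness of $\boldsymbol{\hat{F}}$) to make the linear part $\mathfrak{L}_{2}\circ\mathfrak{L}_{1}$ contractive. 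The only cosmetic difference is that the paper's fixed-point unknown is $\hat{d}_{3}$ whereas yours is the full sum $\hat{d}_{1}+\hat{d}_{2}+\hat{d}_{3}=\partial_{k}\hat{\omega}$, a trivially equivalent change of variable.
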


\begin{proof}
We have the existence and uniqueness of $(\hat{\omega},\hat{u},\hat{v}%
)\in\mathcal{V}_{\alpha}$ thanks to
\cite{Hillairet.Wittwer-Existenceofstationary2009} and
\cite{Hillairet.Wittwer-Asymptoticdescriptionof2011}. Since $\alpha>3$, we
have by Lemmas~\ref{lem:P} and \ref{lem:Q} that the map $\mathfrak{C}%
~\colon\mathcal{D}_{\alpha-1,\frac{3}{2},0}^{1}\rightarrow\mathcal{D}%
_{\alpha-1,\frac{3}{2},0}^{1}$, $x\mapsto\mathfrak{C}[x]=\mathfrak{L}%
_{2}[\mathfrak{L}_{1}[\hat{d}_{1}+\hat{d}_{2}+x]+(\partial_{k}\hat{F}%
_{2},-\partial_{k}\hat{F}_{1})]$ is continuous. Since from
\cite{Hillairet.Wittwer-Existenceofstationary2009} we have that $\left\Vert
(\hat{\omega},\hat{u},\hat{v});\mathcal{V}_{\alpha}\right\Vert \leq
\mathrm{const.}\left\Vert (\hat{F}_{2},-\hat{F}_{1});\mathcal{B}_{\alpha
,\frac{7}{2},\frac{5}{2}}\times\mathcal{B}_{\alpha,\frac{7}{2},\frac{5}{2}%
}\right\Vert $, we find with (\ref{eq:lemQ1}) and (\ref{eq:lemQ2}) that the
image of $\mathfrak{L}_{1}$ is arbitrarily small. We then have by linearity of
$\mathfrak{L}_{2}$, that $\mathfrak{C}$ has a fixed point since $\left\Vert
(\partial_{k}\hat{F}_{2},-\partial_{k}\hat{F}_{1});\mathcal{B}_{\alpha
,\frac{3}{2},1}\times\mathcal{B}_{\alpha,\frac{3}{2},2}\right\Vert <\infty$.
This completes the proof of Theorem~\ref{thm:existence}.
\end{proof}

\bigskip

Theorem~\ref{thm:main} now follows by inverse Fourier transform and the decay
properties are a direct consequence of the spaces of which $\hat{u}$, $\hat
{v}$, $\hat{\omega}$ and $\partial_{k}\hat{\omega}$ are elements. Indeed, for
a function $\hat{f}\in\mathcal{B}_{\alpha,p,q}^{n}$ with $\alpha>3$, $n=0,1$
and $p$, $q$ $\geq0$, we have from the definition of the Fourier transform
that
\begin{equation}
\sup\limits_{x\in\mathbb{R}}\left\vert f(x,y)\right\vert \leq\frac{1}{2\pi
}\int_{\mathbb{R}}\left\vert \hat{f}(k,y)\right\vert
dk~,\label{eq:fouriertodirectspace}%
\end{equation}
and from the definition of the function spaces that%
\begin{align}
\int_{\mathbb{R}}\left\vert \hat{f}(k,t)\right\vert ~dk &  \leq\left\Vert
\hat{f}_{n};\mathcal{B}_{\alpha,p,q}^{n}\right\Vert \int_{\mathbb{R}}\frac
{1}{\kappa^{n}}\left(  \frac{1}{t^{p}}\bar{\mu}_{\alpha}(k,t)+\frac{1}{t^{q}%
}\tilde{\mu}_{\alpha}(k,t)\right)  dk\nonumber\\
&  \leq\mathrm{const.}\left\Vert \hat{f}_{n};\mathcal{B}_{\alpha,p,q}%
^{n}\right\Vert \left(  \frac{1}{t^{p+(1-n)}}+\frac{1}{t^{q+2(1-n)}}\right)
\nonumber\\
&  \leq\frac{\mathrm{const.}}{t^{\min\{p+(1-n),q+2(1-n))}}\left\Vert \hat
{f}_{n};\mathcal{B}_{\alpha,p,q}^{n}\right\Vert
~.\label{eq:fourierspacebounds}%
\end{align}
Combining (\ref{eq:fouriertodirectspace}) and (\ref{eq:fourierspacebounds}) we
have%
\[
\sup\limits_{x\in\mathbb{R}}\left\vert f(x,y)\right\vert \leq\frac
{\mathrm{const.}}{y^{\min\{p+(1-n),q+2(1-n))}}\left\Vert \hat{f}%
_{n};\mathcal{B}_{\alpha,p,q}^{n}\right\Vert ~.
\]
Finally, we have, using that $(\hat{\omega},\hat{u},\hat{v},\mathbf{\hat{d}%
)}\in\mathcal{V}_{\alpha}\times\mathcal{D}_{\alpha-1,\frac{3}{2},0}^{1}$~, and
that%
\[
|x\omega(x,y)|\leq\frac{1}{2\pi}\int_{\mathbb{R}}|\partial_{k}\omega(k,y)|dk~,
\]
that%
\begin{align*}
|y^{3/2}u(x,y)| &  \leq C_{1}~,~|y^{3/2}v(x,y)|\leq C_{2}~,\\
|y^{3}\omega(x,y)| &  \leq C_{3}~,~|yx\omega(x,y)|\leq C_{4}~,
\end{align*}
with $C_{i}\in\mathbb{R}$, for $i=1,\ldots,4$, which proves the bound in
Theorem~\ref{thm:main}.

\section{\label{sec:convolutiondisc}Convolution with singularities}

We first recall the convolution result from
\cite{Hillairet.Wittwer-Existenceofstationary2009}.

\begin{proposition}
[convolution]\label{prop:convHW}Let $\alpha$, $\beta>1$, and $r$, $s\geq0$ and
let $a$, $b$ be continuous functions from $\mathbb{R}_{0}\times\lbrack
1,\infty)$ to $\mathbb{C}$ satisfying the bounds,%
\begin{align*}
\left\vert a(k,t)\right\vert  &  \leq\mu_{\alpha,r}(k,t)~,\\
\left\vert b(k,t)\right\vert  &  \leq\mu_{\beta,s}(k,t)~.
\end{align*}
Then, the convolution $a\ast b$ is a continuous function from $\mathbb{R}%
\times\lbrack1,\infty)$ to $\mathbb{C}$ and we have the bound%
\[
\left\vert \left(  a\ast b\right)  (k,t)\right\vert \leq\mathrm{const.}\left(
\frac{1}{t^{r}}\mu_{\beta,s}(k,t)+\frac{1}{t^{s}}\mu_{\alpha,r}(k,t)\right)
~,
\]
uniformly in $t\geq1$, $k\in\mathbb{R}$.
\end{proposition}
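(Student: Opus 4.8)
The plan is to bound the convolution by the integral of its weights and reduce everything to a one-dimensional estimate. Taking absolute values inside the integral of Definition~\ref{def:fourier} and inserting the hypotheses on $a$ and $b$, I would first obtain
\[
|(a\ast b)(k,t)| \le \frac{1}{2\pi}\int_{\mathbb{R}} \mu_{\alpha,r}(k-k',t)\,\mu_{\beta,s}(k',t)\,dk',
\]
so that the whole proposition reduces to estimating the scalar integral $I(k,t)=\int_{\mathbb{R}} \mu_{\alpha,r}(k-k',t)\,\mu_{\beta,s}(k',t)\,dk'$ uniformly for $t\ge1$ and $k\in\mathbb{R}$.

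The heart of the argument is a region decomposition exploiting that $\mu_{\alpha,r}(\cdot,t)$ is even and nonincreasing in $|k|$. Since $|k'|+|k-k'|\ge|k|$, every $k'$ satisfies $|k'|\ge|k|/2$ or $|k-k'|\ge|k|/2$, so I split the domain into $A=\{\,|k'|\ge|k|/2\,\}$ and its complement, which lies in $B=\{\,|k-k'|\ge|k|/2\,\}$. On $A$, monotonicity gives $\mu_{\beta,s}(k',t)\le\mu_{\beta,s}(k/2,t)$, and the elementary inequality $\mu_{\beta,s}(k/2,t)\le2^{\beta}\mu_{\beta,s}(k,t)$, which follows at once from the explicit form $\mu_{\beta,s}(k,t)=1/(1+(|k|t^{s})^{\beta})$ and $\beta\ge0$, lets me pull this factor out of the integral. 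Bounding the remaining integral over $A$ by the integral over all of $\mathbb{R}$ (the integrand being nonnegative), using translation invariance and the rescaling $u=k't^{r}$, gives
\[
\int_{\mathbb{R}}\mu_{\alpha,r}(k-k',t)\,dk' = \frac{1}{t^{r}}\int_{\mathbb{R}}\frac{du}{1+|u|^{\alpha}} = \frac{C_{\alpha}}{t^{r}},
\]
with $C_{\alpha}<\infty$ precisely because $\alpha>1$. Thus $A$ contributes at most $(2^{\beta}C_{\alpha}/t^{r})\,\mu_{\beta,s}(k,t)$; the region $B$ is treated symmetrically, interchanging $(\alpha,r)$ with $(\beta,s)$, and contributes at most $(2^{\alpha}C_{\beta}/t^{s})\,\mu_{\alpha,r}(k,t)$. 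Adding the two contributions yields exactly the asserted bound.

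For the continuity claim I would invoke dominated convergence. On any compact set $[-M,M]\times[1,T]$, the estimates above together with $t\ge1$ provide a fixed integrable majorant for the integrand that is independent of $(k,t)$, while continuity of $a$ and $b$ off the axis makes the integrand jointly continuous in $(k,t)$ for almost every $k'$; hence $a\ast b$ is continuous. In particular, even though $a$ and $b$ are defined only on $\mathbb{R}\setminus\{0\}\times[1,\infty)$, the convolution extends continuously across $k=0$, since the weights $\mu$ are bounded and the exceptional sets $\{k'=0\}$ and $\{k'=k\}$ are null.

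The only genuinely substantive steps are the region split and the two elementary facts that isolate where the hypotheses $\alpha,\beta>1$ enter: the comparison $\mu_{\beta,s}(k/2,t)\le2^{\beta}\mu_{\beta,s}(k,t)$ and the convergence of $\int_{\mathbb{R}}(1+|u|^{\alpha})^{-1}\,du$. I expect the main difficulty to be purely organizational, namely phrasing the split so that the constants are manifestly uniform in $t$ and $k$; no oscillation or cancellation must be tracked, since the whole estimate is carried out with absolute values.
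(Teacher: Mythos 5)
Your proposal is correct and takes essentially the same approach as the paper's: splitting the convolution integral according to whether $|k'|\ge|k|/2$ or $|k-k'|\ge|k|/2$, pulling out the slowly varying weight by monotonicity, and integrating the remaining factor (finite precisely because $\alpha,\beta>1$) is exactly the decomposition the paper uses in its proof of the strengthened Proposition~\ref{prop:convwithroot}, where after reducing to $k\ge0$ by evenness it appears as the split $I_{1}+I_{2}$ at $k'=k/2$. The only remark worth making is that the paper itself recalls Proposition~\ref{prop:convHW} from \cite{Hillairet.Wittwer-Existenceofstationary2009} without reproving it, so your argument fills in that omitted proof in the same spirit.
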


\vspace{0in}

Since $\partial_{k}\hat{\omega}$ diverges like $|\kappa|^{-1}$ at $k=0$ we
need to strengthen this result.

\begin{proposition}
[convolution with $\left\vert \kappa\right\vert ^{-1}$ singularity]%
\label{prop:convwithroot} Let $\alpha,\tilde{\beta}>1$ and $r,\tilde{s}\geq0$,
let $a$ be as in Proposition~\ref{prop:convHW} and $\tilde{b}$ a continuous
function from $\mathbb{R}_{0}\times\lbrack1,\infty)$ to $\mathbb{C}$,
satisfying the bound
\[
\left\vert \tilde{b}(k,t)\right\vert \leq\left\vert \kappa(k)\right\vert
^{-1}\mu_{\tilde{\beta},\tilde{s}}\left(  k,t\right)  ~,
\]
then the convolution $a\ast\tilde{b}$ is a continuous function from
$\mathbb{R\times\lbrack}1,\infty)\rightarrow\mathbb{C}$ and we have the bounds%
\begin{align}
\left\vert (a\ast\tilde{b})(k,t)\right\vert  &  \leq\mathrm{const.}\left(
\max\left\{  \frac{1}{t^{\frac{\tilde{s}}{2}}},\frac{1}{t^{\frac{\tilde
{s}+r-\tilde{s}^{\prime}}{2}}}\right\}  \mu_{\tilde{\beta},s^{\prime}}\left(
k,t\right)  +\frac{1}{t^{\frac{\tilde{s}}{2}}}\mu_{\alpha,r}\left(
k,t\right)  \right)  ~,\label{eq:convwithroot}\\
\left\vert (a\ast\tilde{b})(k,t)\right\vert  &  \leq\mathrm{const.}\left(
\max\left\{  \frac{1}{t^{\frac{\tilde{s}}{2}}},\frac{1}{t^{r-c\tilde
{s}^{\prime}}}\right\}  \mu_{\tilde{\beta}+c,\tilde{s}^{\prime}}\left(
k,t\right)  +\frac{1}{t^{\frac{\tilde{s}}{2}}}\mu_{\alpha,r}\left(
k,t\right)  \right)  ~, \label{eq:convwithrootgainbeta}%
\end{align}
for $\tilde{s}^{\prime}\leq\tilde{s}$, and $c\in\left\{  \frac{1}%
{2},1\right\}  $.
\end{proposition}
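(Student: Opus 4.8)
The plan is to extend Proposition~\ref{prop:convHW} to the case where one factor carries a $|\kappa|^{-1}$ singularity at $k=0$. Recall that for $k$ near $0$ we have $|\kappa|\sim|k|^{1/2}$ by (\ref{bk1}), so $\tilde{b}$ behaves like $|k|^{-1/2}$ times a bounded factor. The key observation is that this singularity is locally integrable, so the convolution integral still makes sense; the work is to track how the singularity interacts with the decay weights $\mu_{\alpha,r}$ and $\mu_{\tilde\beta,\tilde s}$ and to produce the stated $t$-dependence. First I would write out the convolution
\[
(a\ast\tilde b)(k,t)=\frac{1}{2\pi}\int_{\mathbb{R}}a(k-k',t)\,\tilde b(k',t)\,dk'~,
\]
and bound the integrand by $\mu_{\alpha,r}(k-k',t)\,|\kappa(k')|^{-1}\mu_{\tilde\beta,\tilde s}(k',t)$. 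As in the proof of Proposition~\ref{prop:convHW}, the estimate splits according to whether $|k-k'|$ or $|k'|$ is the larger of the two arguments, which governs which of the two weights can be pulled out of the integral as the slowly-varying factor.

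The main step is the analysis of the region where $k'$ is small, since that is where the new singularity lives. There I would pull out the factor $\mu_{\tilde\beta,s'}(k,t)$ (respectively $\mu_{\tilde\beta+c,\tilde s'}(k,t)$ for the second bound) and estimate the remaining integral
\[
\int_{|k'|\lesssim |k|}|\kappa(k')|^{-1}\mu_{\alpha,r}(k-k',t)\,dk'~.
\]
Using $|\kappa(k')|^{-1}\leq\mathrm{const.}\,|k'|^{-1/2}$ near the origin together with the cutoff provided by $\mu_{\alpha,r}$, a change of variables $k'\mapsto k' t^{r}$ converts the $|k'|^{-1/2}$ singularity into a factor $t^{-r/2}$, and the integrability of $|k'|^{-1/2}$ on a bounded interval guarantees convergence. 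This is precisely the mechanism that produces the $t^{-\tilde s/2}$ factors in (\ref{eq:convwithroot}) and (\ref{eq:convwithrootgainbeta}): the half-power of $t$ is the signature of integrating a square-root singularity against a weight that localizes to scale $t^{-r}$. The appearance of the $\max$ over two powers of $t$ reflects the two competing ways the weights can be distributed (one estimate keeping $\mu_{\tilde\beta,\tilde s}$ intact, the other trading decay in $k$ for decay in $t$ by lowering the exponent $\tilde s'$ and gaining in $\alpha$ via the parameter $c$).

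For the second bound (\ref{eq:convwithrootgainbeta}), which gains $c\in\{\tfrac12,1\}$ orders in the $\alpha$-parameter of the output weight at the cost of the exponent $\tilde s'\le\tilde s$, I would exploit the extra decay of $\mu_{\tilde\beta,\tilde s}$ in $k$: in the region where $|k'|$ is comparable to $|k|$ one may absorb $|k'|^{c}$ worth of the weight $\mu_{\tilde\beta,\tilde s}$ into an improvement of the output weight from $\mu_{\tilde\beta,s'}$ to $\mu_{\tilde\beta+c,\tilde s'}$, again converting powers of $|k'|$ into powers of $t^{-1}$ through the scaling built into $\mu_{\alpha,r}$. The hard part will be organizing the case distinctions so that the singular factor $|\kappa(k')|^{-1}$ is always integrated against a weight that localizes $k'$ to a region of size $t^{-r}$ (or $t^{-2}$), thereby keeping every integral finite and extracting the correct power of $t$; away from $k=0$ the factor $|\kappa|^{-1}$ is bounded and the estimate reduces to Proposition~\ref{prop:convHW}. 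Once the small-$k'$ region is controlled, the large-$k'$ region and the continuity statement follow exactly as in Proposition~\ref{prop:convHW}, so I would only sketch those and refer back to that proof.
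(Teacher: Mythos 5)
There is a genuine gap at the heart of your plan: you pair the singularity with the wrong weight. In your convention the singular factor $|\kappa(k')|^{-1}$ and the weight $\mu_{\tilde{\beta},\tilde{s}}(k',t)$ both belong to $\tilde{b}$ and are both centered at $k'=0$. In the region $|k'|\lesssim|k|$ the factor that provides decay in $k$ is $\mu_{\alpha,r}(k-k',t)\leq\mathrm{const.}\,\mu_{\alpha,r}(k,t)$ (since $|k-k'|\geq|k|/2$), while $\mu_{\tilde{\beta},\tilde{s}}(k',t)$ is of order one there; so the only factor one can legitimately pull out of that region is $\mu_{\alpha,r}(k,t)$, not $\mu_{\tilde{\beta},s'}(k,t)$ --- indeed the inequality $\mu_{\tilde{\beta},\tilde{s}}(k',t)\leq C\mu_{\tilde{\beta},s'}(k,t)$ that your pull-out would require fails badly (take $k'=0$ and $|k|t^{s'}$ large). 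The integral that remains inside must then be $\int|\kappa(k')|^{-1}\mu_{\tilde{\beta},\tilde{s}}(k',t)\,dk'$, i.e.\ the square-root singularity integrated against its own, co-located weight; rescaling at scale $t^{-\tilde{s}}$ gives $\mathrm{const.}\,t^{-\tilde{s}/2}$ and hence the term $t^{-\tilde{s}/2}\mu_{\alpha,r}(k,t)$ --- this is exactly the paper's integral $I_{2}$ after relabeling, obtained by the opposite pairing from yours. Your version instead keeps $\mu_{\alpha,r}(k-k',t)$ inside against $|\kappa(k')|^{-1}$; since that weight localizes $k'$ near $k$, not near the singularity at $0$, your change of variables $k'\mapsto k't^{r}$ yields at best a factor $t^{-r/2}$, not the claimed $t^{-\tilde{s}/2}$. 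Your own sentence betrays the inconsistency: you say the half-power of $t$ comes from ``a weight that localizes to scale $t^{-r}$'' while asserting the output is $t^{-\tilde{s}/2}$.

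The second half of the proof, which you dismiss with ``away from $k=0$ the factor $|\kappa|^{-1}$ is bounded and the estimate reduces to Proposition~\ref{prop:convHW}'', is where the first terms of (\ref{eq:convwithroot}) and (\ref{eq:convwithrootgainbeta}) actually come from, and it does not reduce to Proposition~\ref{prop:convHW}. Away from the singularity the best pointwise bound on the singular factor is $|\kappa(k')|^{-1}\leq\mathrm{const.}\,|\kappa(k/2)|^{-1}$, which blows up as $k\rightarrow0$, whereas the stated bounds must be uniform in $k$. The paper deals with this by a case distinction at the threshold $|k|=t^{-\tilde{s}'}$: for $|k|\leq t^{-\tilde{s}'}$ all weights $\mu_{\cdot,\tilde{s}'}(k,t)$ are of order one and a crude sup bound giving $t^{-\tilde{s}/2}$ suffices; for $|k|>t^{-\tilde{s}'}$ the leftover singular factor is converted into a weight function via inequalities such as $t^{-\tilde{s}'/2}|\kappa(k)|^{-1}\leq\mathrm{const.}\,\mu_{\frac{1}{2},\tilde{s}'}(k,t)$ and $t^{-\tilde{s}'}|\kappa(k)|^{-1}\leq\mathrm{const.}\,\mu_{1,\tilde{s}'}(k,t)$, valid precisely in that range. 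This conversion, combined with (\ref{eq:ksacrificealphafort}) and the elementary splitting $|k'|^{1/2}\leq|k|^{1/2}+|k-k'|^{1/2}$, is what generates the $\max$ structure, the loss $t^{c\tilde{s}'}$, and the gain $\tilde{\beta}\rightarrow\tilde{\beta}+c$. None of this machinery --- the threshold, the conversion inequalities, or the splitting --- appears in your outline, so the first terms of both bounds are not derived by your argument.
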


\begin{proof}
We drop the \symbol{126}~to unburden the notation. Continuity is elementary.
Since the functions $\mu_{\alpha,r}$ are even in $k$, we only consider
$k\geq0$. The proof is in two parts, one for $0\leq k\leq t^{-s^{\prime}}$ and
the other for $t^{-s^{\prime}}<k$. The first part is valid for both
(\ref{eq:convwithroot}) and (\ref{eq:convwithrootgainbeta}). For $0\leq k\leq
t^{-s^{\prime}}$, and $\alpha^{\prime}\geq0$, we have%
\begin{align*}
\left\vert (a\ast b)(k,t)\right\vert  &  \leq\int_{\mathbb{R}}\mu_{\alpha
,r}(k^{\prime},t)|\kappa(k-k^{\prime})|^{-1}\mu_{\beta,s}(k-k^{\prime
},t)dk^{\prime}\\
&  \leq\sup_{k^{\prime}\in\mathbb{R}}\left(  \mu_{\alpha,r}(k^{\prime
},t)\right)  \int_{\mathbb{R}}\frac{t^{\frac{s}{2}}}{|\tilde{k}|^{\frac{1}{2}%
}}\mu_{\beta,s}(\tilde{k},1)\frac{d\tilde{k}}{t^{s}}\\
&  \leq\frac{\mathrm{const.}}{t^{\frac{s}{2}}}\leq\frac{\mathrm{const.}%
}{t^{\frac{s}{2}}}\mu_{\alpha^{\prime},s^{\prime}}(k,t)~,
\end{align*}
where we have used the change of variables $k-k^{\prime}=\tilde{k}/t^{s}$. For
$k>t^{-s^{\prime}}$ and $s^{\prime}\leq s$ we have%
\begin{align*}
\left\vert (a\ast b)(k,t)\right\vert  &  \leq\int_{\mathbb{R}}\mu_{\alpha
,r}(k^{\prime},t)\frac{\mu_{\beta,s}(k-k^{\prime},t)}{|\kappa(k-k^{\prime}%
)|}dk^{\prime}\\
&  \leq\underset{:=I_{1}}{\underbrace{\int_{\mathbb{-\infty}}^{k/2}\mu
_{\alpha,r}(k^{\prime},t)\frac{\mu_{\beta,s}(k-k^{\prime},t)}{|\kappa
(k-k^{\prime})|}dk^{\prime}}}+\underset{:=I_{2}}{\underbrace{\int
_{k/2}^{\infty}\mu_{\alpha,r}(k^{\prime},t)\frac{\mu_{\beta,s}(k-k^{\prime
},t)}{|\kappa(k-k^{\prime})|}dk^{\prime}}}~.
\end{align*}
The integral $I_{2}$ is the same for (\ref{eq:convwithroot}) and
(\ref{eq:convwithrootgainbeta}),%
\begin{align*}
I_{2}  &  =\int_{k/2}^{\infty}\mu_{\alpha,r}(k^{\prime},t)\frac{1}%
{|\kappa(k-k^{\prime})|}\mu_{\beta,s}(k-k^{\prime},t)dk^{\prime}\\
&  \leq\mathrm{const.}~\mu_{\alpha,r}(k/2,t)\int_{\mathbb{R}}\frac{t^{\frac
{s}{2}}}{|\tilde{k}|^{\frac{1}{2}}}\mu_{\beta,s}(\tilde{k},1)\frac{d\tilde{k}%
}{t^{s}}\\
&  \leq\mathrm{const.}\frac{1}{t^{s/2}}\mu_{\alpha,r}(k,t)~,
\end{align*}
where again we have used the change of variables $k-k^{\prime}=\tilde{k}%
/t^{s}$. To compute the integral $I_{1}$ we use that%
\begin{align*}
\mu_{\alpha,s}\left(  k,t\right)   &  \leq\mu_{\alpha,s^{\prime}}\left(
k,t\right)  ~,\\
\mu_{\alpha,s}\left(  k,t\right)  \cdot\mu_{\beta,s}\left(  k,t\right)   &
\leq\mathrm{const.}\mu_{\alpha+\beta,s}\left(  k,t\right)  ~,
\end{align*}
and, for $k>t^{-s^{\prime}}$,%
\begin{align*}
\frac{1}{t^{\frac{s^{\prime}}{2}}}\frac{1}{|\kappa(k)|}  &  \leq
\frac{\mathrm{const.}}{t^{\frac{s^{\prime}}{2}}|k|^{1/2}}\leq\frac
{\mathrm{const.}}{2t^{\frac{s^{\prime}}{2}}|k|^{1/2}}\leq\frac{\mathrm{const.}%
}{1+\left(  t^{s^{\prime}}|k|\right)  ^{\frac{1}{2}}}\leq\mu_{\frac{1}%
{2},s^{\prime}}\left(  k,t\right)  ~,\\
\frac{1}{t^{s^{\prime}}}\frac{1}{|\kappa(k)|}  &  \leq\frac{\mathrm{const.}%
}{t^{s^{\prime}}\left(  |k|^{1/2}+|k|\right)  }\leq\frac{\mathrm{const.}%
}{t^{s^{\prime}/2}+|k|t^{s^{\prime}}}\leq\frac{\mathrm{const.}}%
{1+|k|t^{s^{\prime}}}\leq\mu_{1,s^{\prime}}\left(  k,t\right)  ~.
\end{align*}
To prove (\ref{eq:convwithroot}), we note that%
\begin{align*}
I_{1}^{(\ref{eq:convwithroot})}  &  \leq\int_{\mathbb{-\infty}}^{k/2}\frac
{\mu_{\alpha,r}(k^{\prime},t)}{|k^{\prime}|^{\frac{1}{2}}}\frac{|k^{\prime
}|^{\frac{1}{2}}}{|k-k^{\prime}|^{\frac{1}{2}}}\mu_{\beta,s}(k-k^{\prime
},t)dk^{\prime}\\
&  \leq\int_{\mathbb{-\infty}}^{k/2}\frac{\mu_{\alpha,r}(k^{\prime}%
,t)}{|k^{\prime}|^{\frac{1}{2}}}\frac{|k|^{\frac{1}{2}}}{|k-k^{\prime}%
|^{\frac{1}{2}}}\mu_{\beta,s}(k-k^{\prime},t)dk^{\prime}\\
&  +\int_{\mathbb{-\infty}}^{k/2}\frac{\mu_{\alpha,r}(k^{\prime}%
,t)}{|k^{\prime}|^{\frac{1}{2}}}\frac{|k-k^{\prime}|^{\frac{1}{2}}%
}{|k-k^{\prime}|^{\frac{1}{2}}}\mu_{\beta,s}(k-k^{\prime},t)dk^{\prime}\\
&  \leq\frac{|k|^{\frac{1}{2}}}{|k/2|^{\frac{1}{2}}}\mu_{\beta,s}%
(k/2,t)\int_{\mathbb{-\infty}}^{k/2}\frac{\mu_{\alpha,r}(k^{\prime}%
,t)}{|k^{\prime}|^{\frac{1}{2}}}dk^{\prime}\\
&  +\int_{\mathbb{-\infty}}^{k/2}\frac{\mu_{\alpha,r}(k^{\prime}%
,t)}{|k^{\prime}|^{\frac{1}{2}}}\frac{1}{|k-k^{\prime}|^{\frac{1}{2}}}%
\frac{\mathrm{const.}}{t^{s/2}}\mu_{\beta-\frac{1}{2},s}(k-k^{\prime
},t)dk^{\prime}\\
&  \leq\frac{\mathrm{const.}}{|k|^{\frac{1}{2}}}\frac{1}{t^{s/2}}\mu
_{\beta-\frac{1}{2},s}(k,t)\int_{\mathbb{-\infty}}^{k/2}\frac{\mu_{\alpha
,r}(k^{\prime},t)}{|k^{\prime}|^{\frac{1}{2}}}dk^{\prime}\\
&  \leq\mathrm{const.}\frac{t^{\frac{s^{\prime}}{2}}}{t^{\frac{s}{2}}}\frac
{1}{t^{\frac{s^{\prime}}{2}}|k|^{\frac{1}{2}}}\mu_{\beta-\frac{1}{2}%
,s}(k,t)\frac{1}{t^{\frac{r}{2}}}\\
&  \leq\mathrm{const.}\frac{t^{\frac{s^{\prime}}{2}}}{t^{\frac{s}{2}}}%
\mu_{\frac{1}{2},s^{\prime}}\left(  k,t\right)  \mu_{\beta-\frac{1}{2}%
,s}(k,t)\frac{1}{t^{\frac{r}{2}}}\leq\frac{\mathrm{const.}}{t^{\frac
{s+r-s^{\prime}}{2}}}\mu_{\beta,s^{\prime}}(k,t)~,
\end{align*}
where we have used the family of inequalities%
\begin{equation}
|k|^{\rho}\mu_{\alpha,r}\left(  k,t\right)  \leq\mathrm{const.}\frac
{1}{t^{\rho r}}\mu_{\alpha-p,r}\left(  k,t\right)  ~,\forall\rho>0~.
\label{eq:ksacrificealphafort}%
\end{equation}
Finally, to prove (\ref{eq:convwithrootgainbeta}), we note that%
\begin{align*}
I_{1}^{\left(  \ref{eq:convwithrootgainbeta}\right)  }  &  \leq\int
_{\mathbb{-\infty}}^{k/2}\mu_{\alpha,r}(k^{\prime},t)\frac{1}{|\kappa
(k-k^{\prime})|}\mu_{\beta,s}(k-k^{\prime},t)dk^{\prime}\\
&  \leq\frac{t^{cs^{\prime}}}{t^{cs^{\prime}}}\frac{1}{|\kappa(k/2)|}%
\mu_{\beta,s}(k/2,t)\int_{\mathbb{R}}\mu_{\alpha,r}(k^{\prime},t)dk^{\prime}\\
&  \leq\mathrm{const.}t^{cs^{\prime}}\mu_{c,s^{\prime}}(k,t\mu_{\beta
,s}(k,t)\int_{\mathbb{R}}\mu_{\alpha,r}(k^{\prime},t)dk^{\prime}\\
&  \leq\mathrm{const.}\frac{1}{t^{r-cs^{\prime}}}\mu_{\beta+c,s^{\prime}%
}(k,t)~.
\end{align*}
Collecting the bounds on the integrals $I_{1}^{(\ref{eq:convwithroot})}$,
$I_{1}^{\left(  \ref{eq:convwithrootgainbeta}\right)  }$ and $I_{2}$ proves
the claim in Proposition~\ref{prop:convwithroot}.
\end{proof}

\begin{corollary}
\label{corr:convsimplification}Let $\alpha>2$ and, for $i=1,2$, $p_{i}%
,q_{i}\geq0$. Let $f\in\mathcal{B}_{\alpha,p_{1},q_{1}}$ and $g\in
\mathcal{D}_{\alpha-1,p_{2},q_{2}}^{1}$. Let%
\begin{align*}
p  &  =\min\{p_{1}+p_{2}+\frac{1}{2},p_{1}+q_{2}+1,q_{1}+p_{2}+\frac{1}%
{2}\}~,\\
q  &  =\min\{q_{1}+q_{2}+1,q_{1}+p_{2}+\frac{1}{2}\}~.
\end{align*}
Then $f\ast g\in\mathcal{B}_{\alpha,p,q}$, and there exists a constant $C$,
depending only on $\alpha$, such that%
\[
\Vert f\ast g;\mathcal{B}_{\alpha,p,q}\Vert\leq C~\Vert f;\mathcal{B}%
_{\alpha,p_{1},q_{1}}\Vert\cdot\Vert g;\mathcal{D}_{\alpha-1,p_{2},q_{2}}%
^{1}\Vert~.
\]

\end{corollary}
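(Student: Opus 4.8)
The plan is to deduce the corollary directly from Proposition~\ref{prop:convwithroot} by a linearity-plus-bookkeeping argument, exploiting that every component of a $\mathcal{D}^1$-element carries exactly the $|\kappa|^{-1}$ singularity that proposition is built to handle. Write $g=(g_1,g_2,g_3)$ for the three components of the $\mathcal{D}_{\alpha-1,p_2,q_2}^1$-element, so that $f\ast g$ means $f\ast(g_1+g_2+g_3)$, and use positivity of the convolution, $|f\ast(g_1+g_2+g_3)|\le |f|\ast(|g_1|+|g_2|+|g_3|)$. First I would replace $f$ and each $g_j$ by the explicit majorants supplied by their norms: $f$ is bounded by $\|f\|(t^{-p_1}\bar\mu_\alpha+t^{-q_1}\tilde\mu_\alpha)$, while each $g_j$, lying in a space $\mathcal{B}^1$, is bounded by $|\kappa|^{-1}$ times a sum of two $\mu$-terms. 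Explicitly $g_1$ has index $\alpha$ with $t$-weights $t^{-p_2},t^{-q_2}$, $g_2$ has index $\alpha-\tfrac12$ with weights $t^{-(p_2+1/2)},t^{-(q_2+1/2)}$, and $g_3$ has index $\alpha-1$ with weights $t^{-(p_2+1/2)},t^{-(q_2+1)}$. Expanding the product then reduces the estimate to twelve elementary convolutions of the form $\mu_{\alpha,r}\ast(|\kappa|^{-1}\mu_{\beta,s})$, each of which is governed by Proposition~\ref{prop:convwithroot}.

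Next I would apply the proposition term by term, choosing its free parameters so that every output lands in the target space $\mathcal{B}_{\alpha,p,q}$. For the four pieces coming from $g_1$ the index is already $\beta=\alpha$, so bound (\ref{eq:convwithroot}) applies directly; for the pieces coming from $g_2$ and $g_3$ the indices are $\alpha-\tfrac12$ and $\alpha-1$, and here the gain-$\beta$ bound (\ref{eq:convwithrootgainbeta}) is used with $c=\tfrac12$ and $c=1$ respectively, precisely so that the output index is raised back to $\alpha$. Note that Proposition~\ref{prop:convwithroot} requires index $>1$ on both factors; the smallest index in play is that of $g_3$, namely $\alpha-1$, so the hypothesis $\alpha>2$ is exactly what guarantees applicability to every term. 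In each application the free parameter $s'$ (resp. $\tilde s'$) is set equal to $1$ or $2$ to target the two weights $\bar\mu_\alpha=\mu_{\alpha,1}$ and $\tilde\mu_\alpha=\mu_{\alpha,2}$ defining the norm; the constraint $s'\le\tilde s$ is respected because $s'=2$ is needed only against the $\mu_{\cdot,2}$-pieces, for which $\tilde s=2$.

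Finally I would collect the contributions, each of the form $t^{-\rho}\mu_{\alpha,1}$ or $t^{-\sigma}\mu_{\alpha,2}$, whereupon $p$ and $q$ emerge as the minima of the $\rho$'s and the $\sigma$'s. A direct reading shows the binding exponents all come from the four sub-convolutions of $f$ with $g_1$, provided $s'$ is chosen $=1$ on the first three and $=2$ on the last: $\mu_{\alpha,1}\ast|\kappa|^{-1}\mu_{\alpha,1}$ gives $\bar\mu_\alpha$ at weight $p_1+p_2+\tfrac12$; $\mu_{\alpha,1}\ast|\kappa|^{-1}\mu_{\alpha,2}$ gives $\bar\mu_\alpha$ at $p_1+q_2+1$; $\mu_{\alpha,2}\ast|\kappa|^{-1}\mu_{\alpha,1}$ gives both $\bar\mu_\alpha$ and $\tilde\mu_\alpha$ at $q_1+p_2+\tfrac12$ (which is why this value appears in both min-formulas); and $\mu_{\alpha,2}\ast|\kappa|^{-1}\mu_{\alpha,2}$ gives $\tilde\mu_\alpha$ at $q_1+q_2+1$. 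Taking minima reproduces exactly the stated $p$ and $q$, and the constant depends only on $\alpha$ through Proposition~\ref{prop:convwithroot}. The contributions from $g_2,g_3$ carry the extra $t$-decay hard-wired into the definition of $\mathcal{D}_{\alpha-1,p_2,q_2}^1$ and so have strictly larger exponents; they are dominated and do not lower the minima.

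I expect the main obstacle to be exactly this last bookkeeping: selecting $s'$, $\tilde s'$ and $c$ correctly in each of the twelve elementary convolutions, resolving the $\max$-expressions in (\ref{eq:convwithroot}) and (\ref{eq:convwithrootgainbeta}), and in particular verifying that every $g_2$- and $g_3$-term is genuinely dominated, so that the minimum over all contributions collapses to the compact formulas claimed for $p$ and $q$. Care is needed because the free parameter $s'$ must be chosen per term to avoid manufacturing spurious $\mu_{\alpha,2}$-contributions whose exponent would not be absorbable into $t^{-q}\tilde\mu_\alpha$.
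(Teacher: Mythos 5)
Your proposal is correct and takes essentially the same route as the paper's own proof: the paper also splits the $\mathcal{D}_{\alpha-1,p_{2},q_{2}}^{1}$ element into its three components (the cases $c\in\{0,\frac{1}{2},1\}$), applies (\ref{eq:convwithroot}) to the index-$\alpha$ component and (\ref{eq:convwithrootgainbeta}) with $c=\frac{1}{2}$ and $c=1$ to the other two so as to restore the index $\alpha$, chooses $s^{\prime}=1$ on the cross term $\frac{1}{t^{p_{1}}}\bar{\mu}_{\alpha}\ast\frac{1}{t^{\tilde{q}}}\tilde{\mu}_{\tilde{\alpha}}$ to avoid a spurious $\tilde{\mu}$ contribution, and concludes that the $c=0$ terms yield the binding exponents because the loss of index for the other components is offset by their built-in extra $t$-decay. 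Your bookkeeping of the twelve elementary convolutions and of the constraint $\alpha>2$ matches the paper's argument in all essentials.
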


\begin{proof}
We consider the three cases $c\in\{0,\frac{1}{2},1\}$. Let $\tilde{g}$ be a
function in $\mathcal{B}_{\tilde{\alpha},\tilde{p},\tilde{q}}^{1}$, with
$\tilde{\alpha}=\alpha-c$, $\tilde{p},\tilde{q}\geq0$. The convolution product
$f\ast\tilde{g}$ is in each case bounded by a function in $\mathcal{B}%
_{\alpha,p,q}^{1}$ with $p$ and $q$ given by:

\begin{itemize}
\item if $c=0$, $p=\min\{p_{1}+\tilde{p}+\frac{1}{2},p_{1}+\tilde{q}%
+1,q_{1}+\tilde{p}+\frac{1}{2}\}$~, $q=\min\{q_{1}+\tilde{q}+1,q_{1}+\tilde
{p}+\frac{1}{2}\}$~,

\item if $c=\frac{1}{2}$, $p=\min\{p_{1}+\tilde{p}+\frac{1}{2},p_{1}+\tilde
{q}+\frac{1}{2},q_{1}+\tilde{p}+\frac{1}{2}\}~$, $q=\min\{q_{1}+\tilde
{q}+1,q_{1}+\tilde{p}+\frac{1}{2}\}$~,

\item if $c=1$, $p=\min\{p_{1}+\tilde{p}+0,p_{1}+\tilde{q}+0,q_{1}+\tilde
{p}+\frac{1}{2}\}$~, $q=\min\{q_{1}+\tilde{q}+0,q_{1}+\tilde{p}+\frac{1}{2}\}$~.
\end{itemize}

These are consequences of Proposition~\ref{prop:convwithroot}. Using equation
(\ref{eq:convwithroot}) for the first case and equation
(\ref{eq:convwithrootgainbeta}) for the following two cases, and choosing
$s^{\prime}=1$ to bound the term $\frac{1}{t^{p_{1}}}\bar{\mu}_{\alpha}%
\ast\frac{1}{t^{\tilde{q}}}\tilde{\mu}_{\tilde{\alpha}}$. It is now clear that
for a function in $\mathcal{D}_{\alpha-1,p_{2},q_{2}}^{1}$, the terms that
yield the lowest $p$ and $q$ are covered by the $c=0$ case above, because what
is lost in the bounds on convolution due to lower $\tilde{\alpha}$ is gained
through higher values of $\tilde{p}$ and $\tilde{q}$ by definition of the
space $\mathcal{D}_{\alpha-1,p_{2},q_{2}}^{1}$. This corollary allows to
streamline notations and shorten calculations throughout the paper.
\end{proof}

\section{\label{sec:bounds}Bounds on $\mathbf{\hat{d}}$}

We present some elementary inequalities and expressions used throughout this
section. Throughout the calculations we will use without further mention, that
for all $z\in\mathbb{C}$ with $\operatorname{Re}(z)\leq0$ and $N\in
\mathbb{N}_{0}$,
\[
\left\vert \frac{e^{z}-\sum_{n=0}^{N}\frac{1}{n!}z^{n}}{z^{N+1}}\right\vert
\leq\mathrm{const.}~,
\]
and for all $z\in\mathbb{C}$ with $\operatorname{Re}(z)>0$%
\[
\left\vert \frac{e^{z}-\sum_{n=0}^{N}\frac{1}{n!}z^{n}}{z^{N+1}}\right\vert
\leq\mathrm{const.}e^{\operatorname{Re}(z)}~.
\]
We also have that%
\[
\partial_{k}\kappa=\frac{2k-i}{2\kappa}~.
\]
By definition of the norm on $\mathcal{D}_{\alpha,p,q}^{1}$ we must bound
$\kappa\partial_{k}\hat{\omega}$. We thus bound all the terms $\kappa
\partial_{k}\hat{\omega}_{l,n,m}$, with $l=1,2,3$, $n=1,2,3$ and $m=0,1$ (see
definitions (\ref{eq:dkomegaInt}), (\ref{eq:dkomegaIntComp1}%
)-(\ref{eq:dkomegaIntComp3}) and (\ref{eq:f10})-(\ref{eq:dkf31})). This
requires a good deal of book-keeping to track what happens to $\alpha$, $p$,
and $q$. Some of it may be spared when one realizes that all losses in
$\alpha$ occur when applying (\ref{eq:ksacrificealphafort}) where there are
explicit factors $|k|^{c}$ with $c=\{\frac{1}{2},1\}$, which automatically
brings forth a structure satisfying the conditions of
Corollary~\ref{corr:convsimplification}. This allows us to show that each
component $\partial_{k}\hat{\omega}_{l,n,m}$ is an element of a $\mathcal{D}%
_{\alpha-1,p,q}^{1}$.

From (\ref{eq:Qispace}) we obtain, for $i=0,1$,%
\[
\left\vert \hat{Q}_{i}\left(  k,s\right)  \right\vert \leq\left\Vert \hat
{Q}_{i};\mathcal{B}_{\alpha,\frac{7}{2},\frac{5}{2}}\right\Vert \left(
\frac{1}{s^{\frac{7}{2}}}\bar{\mu}_{\alpha}+\frac{1}{s^{\frac{5}{2}}}%
\tilde{\mu}_{\alpha}\right)  ~,
\]
which we will use throughout without further mention. We also make use of
equations (\ref{eq:sqrtKleqkappa}) and (\ref{eq:ksacrificealphafort}) without
explicit mention throughout these proofs.

\vspace{0in}

The bounds for the terms $n=2$ take advantage of the fact that, for $1\leq
t<2$,%
\[
\bar{\mu}_{\alpha}(k,t)\leq\mathrm{const.}~\tilde{\mu}_{\alpha}(k,t)\leq
\mathrm{const.}%
\]
and, for $t\geq2$ and $\alpha^{\prime}>0$,%
\[
e^{\Lambda_{-}(t-1)}\mu_{\alpha,r}(k,t)\leq\mathrm{const.}~e^{\Lambda
_{-}(t-1)}\leq\mathrm{const.}~\tilde{\mu}_{\alpha^{\prime}}(k,t)~,
\]
so that the inequality%
\begin{equation}
e^{\Lambda_{-}(t-1)}\mu_{\alpha,r}(k,t)\leq\mathrm{const.}~\tilde{\mu}%
_{\alpha}(k,t) \label{eq:mutomutilde}%
\end{equation}
holds for all $t$ and $\alpha>0$.

\subsection{\label{sec:boundsd1}Bounds on $\hat{d}_{1}$}

To show that $\hat{d}_{1}=\sum_{m=0}^{1}\sum_{n=1}^{3}\partial_{k}\hat{\omega
}_{1,n,m}$ is in $\mathcal{D}_{\alpha-1,\frac{3}{2},0}^{1}$, which constitutes
the first part of Proposition~\ref{prop:d1&d2space}, we first need to recall a
proposition proved in \cite{Hillairet.Wittwer-Existenceofstationary2009}.

\begin{proposition}
\label{prop:fnmbounds}Let $f_{n,m}$ be as given in Section$~$%
\ref{sec:integraleq}. Then we have the bounds%
\begin{align}
\left\vert f_{1,0}(k,\sigma)\right\vert  &  \leq\mathrm{const.}~e^{|\Lambda
_{-}|\sigma}\min\{|\Lambda_{-}|,|\Lambda_{-}|^{3}\sigma^{2}%
\}~,\label{eq:bndf10}\\
\left\vert f_{2,0}(k,\sigma)\right\vert  &  \leq\mathrm{const.}~(\left\vert
k\right\vert +\left\vert k\right\vert ^{1/2})e^{-\left\vert k\right\vert
\sigma}~,\label{eq:bndf20}\\
\left\vert f_{3,0}(k,\sigma)\right\vert  &  \leq\mathrm{const.}~e^{\Lambda
_{-}\sigma}\min\{1,|\Lambda_{-}|^{2}\}~,\label{eq:bndf30}\\
\left\vert f_{1,1}(k,\sigma)\right\vert  &  \leq\mathrm{const.}~(1+|\Lambda
_{-}|)e^{|\Lambda_{-}|\sigma}\min\{1,|\Lambda_{-}|\sigma\}~, \label{eq:bndf11}%
\\
\left\vert f_{2,1}(k,\sigma)\right\vert  &  \leq\mathrm{const.}~\left(
1+|k|\right)  e^{-\left\vert k\right\vert \sigma}~,\label{eq:bndf21}\\
\left\vert f_{3,1}(k,\sigma)\right\vert  &  \leq\mathrm{const.}~e^{\Lambda
_{-}\sigma}\min\{1,|\Lambda_{-}|\}~, \label{eq:bndf31}%
\end{align}
uniformly in $\sigma\geq0$ and $k\in\mathbb{R}_{0}$.
\end{proposition}

\bigskip

We then note that%
\begin{align*}
\left\vert \kappa\partial_{k}\check{K}_{n}(k,\tau)\right\vert  &  =\left\vert
\frac{1}{2}\tau\kappa\frac{2k-i}{2\kappa}e^{-\kappa\tau}\right\vert
\leq\mathrm{const.}~\tau(1+|k|)e^{\Lambda_{-}\tau}~,\text{ for }n=1,2~,\\
\left\vert \kappa\partial_{k}\check{K}_{3}(k,\tau)\right\vert  &  =\left\vert
\frac{1}{2}\tau\kappa\frac{2k-i}{2\kappa}(e^{\kappa\tau}+e^{-\kappa\tau
})\right\vert \leq\mathrm{const.}~\tau(1+|k|)(e^{|\Lambda_{-}|\tau}%
+e^{\Lambda_{-}\tau})~.
\end{align*}
The bound on the function $\kappa\partial_{k}\hat{\omega}_{1,1,0}$ uses
(\ref{eq:bndf10}) and Propositions~\ref{prop:sgL1} and \ref{prop:sgL2},
leading to
\begin{align*}
&  \left\vert \kappa\partial_{k}\hat{\omega}_{1,1,0}\right\vert =\left\vert
\kappa\frac{1}{2}\partial_{k}e^{-\kappa\tau}\int_{1}^{t}\check{f}_{1,0}\left(
k,\sigma\right)  \hat{Q}_{0}\left(  k,s\right)  ds\right\vert \\
&  \leq\mathrm{const.}~\tau(1+|k|)e^{\Lambda_{-}\tau}\int_{1}^{t}%
e^{|\Lambda_{-}|\sigma}\min\{|\Lambda_{-}|,|\Lambda_{-}|^{3}\sigma
^{2}\}\left(  \frac{1}{s^{\frac{7}{2}}}\bar{\mu}_{\alpha}+\frac{1}{s^{\frac
{5}{2}}}\tilde{\mu}_{\alpha}\right)  ds\\
&  \leq\mathrm{const.}~\tau(1+|k|)e^{\Lambda_{-}\tau}\int_{1}^{\frac{t+1}{2}%
}e^{|\Lambda_{-}|\sigma}|\Lambda_{-}|^{3}\sigma^{2}\left(  \frac{1}%
{s^{\frac{7}{2}}}\bar{\mu}_{\alpha}+\frac{1}{s^{\frac{5}{2}}}\tilde{\mu
}_{\alpha}\right)  ds\\
&  +\mathrm{const.}~\tau(1+|k|)e^{\Lambda_{-}\tau}\int_{\frac{t+1}{2}}%
^{t}e^{|\Lambda_{-}|\sigma}|\Lambda_{-}|\left(  \frac{1}{s^{\frac{7}{2}}}%
\bar{\mu}_{\alpha}+\frac{1}{s^{\frac{5}{2}}}\tilde{\mu}_{\alpha}\right)  ds\\
&  \leq\mathrm{const.}~(1+|k|)\left(  \frac{1}{t^{\frac{5}{2}}}\bar{\mu
}_{\alpha}+\frac{1}{t^{\frac{3}{2}}}\tilde{\mu}_{\alpha}\right)  ~,
\end{align*}
which shows that $\kappa\partial_{k}\hat{\omega}_{1,1,0}\in\mathcal{D}%
_{\alpha-1,\frac{5}{2},\frac{3}{2}}^{1}$.

The bound on the function $\kappa\partial_{k}\hat{\omega}_{1,2,0}$ uses
(\ref{eq:bndf20}), Proposition~\ref{prop:sgk3} and (\ref{eq:mutomutilde}),
leading to%
\begin{align*}
\left\vert \kappa\partial_{k}\hat{\omega}_{1,2,0}\right\vert  &  =\left\vert
\kappa\frac{1}{2}\partial_{k}e^{-\kappa\tau}\int_{t}^{\infty}\check{f}%
_{2,0}(k,s-1)\hat{Q}_{0}(k,s)ds\right\vert \\
&  \leq\mathrm{const.}~\tau(1+|k|)e^{\Lambda_{-}\tau}e^{|k|\tau}\int
_{t}^{\infty}(|k|^{\frac{1}{2}}+|k|)e^{-|k|\sigma}\left(  \frac{1}{s^{\frac
{7}{2}}}\bar{\mu}_{\alpha}+\frac{1}{s^{\frac{5}{2}}}\tilde{\mu}_{\alpha
}\right)  ds\\
&  \leq\mathrm{const.}~(1+|k|)e^{\Lambda_{-}\tau}\left(  \frac{1}{t^{2}}%
\bar{\mu}_{\alpha}+\frac{1}{t^{1}}\tilde{\mu}_{\alpha}\right)  \leq
\mathrm{const.}~(1+|k|)\frac{1}{t^{1}}\tilde{\mu}_{\alpha}~,
\end{align*}
which shows that $\kappa\partial_{k}\hat{\omega}_{1,2,0}\in\mathcal{D}%
_{\alpha-1,\infty,1}^{1}$.

The bound on the function $\kappa\partial_{k}\hat{\omega}_{1,3,0}$ uses
(\ref{eq:bndf30}) and Proposition~\ref{prop:sgL3}, leading to%
\begin{align*}
\left\vert \kappa\partial_{k}\hat{\omega}_{1,3,0}\right\vert  &  =\left\vert
\frac{1}{2}\kappa\partial_{k}(e^{\kappa\tau}-e^{-\kappa\tau})\int_{t}^{\infty
}\check{f}_{3,0}(k,s-1)\hat{Q}_{0}(k,s)ds\right\vert \\
&  \leq\mathrm{const.}~\tau(1+|k|)(e^{|\Lambda_{-}|\tau}+e^{\Lambda_{-}\tau
})\int_{t}^{\infty}\min\{1,|\Lambda_{-}|\}e^{\Lambda_{-}\sigma}\left(
\frac{1}{s^{\frac{7}{2}}}\bar{\mu}_{\alpha}+\frac{1}{s^{\frac{5}{2}}}%
\tilde{\mu}_{\alpha}\right)  ds\\
&  \leq\mathrm{const.}~\tau e^{|\Lambda_{-}|\tau}\int_{t}^{\infty}%
(1+|\Lambda_{-}|)\min\{1,|\Lambda_{-}|\}e^{\Lambda_{-}\sigma}\left(  \frac
{1}{s^{\frac{7}{2}}}\bar{\mu}_{\alpha}+\frac{1}{s^{\frac{5}{2}}}\tilde{\mu
}_{\alpha}\right)  ds\\
&  \leq\mathrm{const.}\left(  \frac{1}{t^{\frac{5}{2}}}\bar{\mu}_{\alpha
}+\frac{1}{t^{\frac{3}{2}}}\tilde{\mu}_{\alpha}\right)  ~,
\end{align*}
which shows that $\kappa\partial_{k}\hat{\omega}_{1,3,0}\in\mathcal{D}%
_{\alpha-1,\frac{5}{2},\frac{3}{2}}^{1}$.

The bound on the function $\kappa\partial_{k}\hat{\omega}_{1,1,1}$ uses
(\ref{eq:bndf11}) and Propositions~\ref{prop:sgL1} and \ref{prop:sgL2},
leading to%
\begin{align*}
\left\vert \kappa\partial_{k}\hat{\omega}_{1,1,1}\right\vert  &  =\left\vert
\kappa\frac{1}{2}\partial_{k}e^{-\kappa\tau}\int_{1}^{t}\check{f}%
_{1,1}(k,s-1)\hat{Q}_{1}(k,s)ds\right\vert \\
&  \leq\mathrm{const.}~\tau(1+|k|)e^{\Lambda_{-}\tau}\int_{1}^{t}%
(1+|\Lambda_{-}|)e^{|\Lambda_{-}|\sigma}\min\{1,|\Lambda_{-}|\sigma\}\left(
\frac{1}{s^{\frac{7}{2}}}\bar{\mu}_{\alpha}+\frac{1}{s^{\frac{5}{2}}}%
\tilde{\mu}_{\alpha}\right)  ds\\
&  \leq\mathrm{const.}~\tau(1+|k|)e^{\Lambda_{-}\tau}\int_{1}^{t}%
e^{|\Lambda_{-}|\sigma}|\Lambda_{-}|\sigma\left(  \frac{1}{s^{\frac{7}{2}}%
}\bar{\mu}_{\alpha}+\frac{1}{s^{\frac{5}{2}}}\tilde{\mu}_{\alpha}\right)  ds\\
&  +\mathrm{const.}~\tau(1+|k|)e^{\Lambda_{-}\tau}\int_{1}^{t}|\Lambda
_{-}|e^{|\Lambda_{-}|\sigma}\min\{1,|\Lambda_{-}|\sigma\}\left(  \frac
{1}{s^{\frac{7}{2}}}\bar{\mu}_{\alpha}+\frac{1}{s^{\frac{5}{2}}}\tilde{\mu
}_{\alpha}\right)  ds\\
&  \leq\mathrm{const.}~(1+|k|)\left(  \tilde{\mu}_{\alpha}+\frac{1}%
{t^{\frac{3}{2}}}\bar{\mu}_{\alpha}+\frac{1}{t^{\frac{1}{2}}}\tilde{\mu
}_{\alpha}\right)  ~,
\end{align*}
which shows that $\kappa\partial_{k}\hat{\omega}_{1,1,1}\in\mathcal{D}%
_{\alpha-1,\frac{3}{2},0}^{1}$.

The bound on the function $\kappa\partial_{k}\hat{\omega}_{1,2,1}$ uses
(\ref{eq:bndf21}), Proposition~\ref{prop:sgk3} and (\ref{eq:mutomutilde}),
leading to%
\begin{align*}
\left\vert \kappa\partial_{k}\hat{\omega}_{1,2,1}\right\vert  &  =\left\vert
\frac{1}{2}\kappa\partial_{k}e^{-\kappa\tau}\int_{t}^{\infty}\check{f}%
_{2,1}(k,s-1)\hat{Q}_{1}(k,s)ds\right\vert \\
&  \leq\mathrm{const.}~(1+|k|)\tau e^{\Lambda_{-}\tau}\int_{t}^{\infty
}(1+|k|)e^{-|k|\sigma}\left(  \frac{1}{s^{\frac{7}{2}}}\bar{\mu}_{\alpha
}+\frac{1}{s^{\frac{5}{2}}}\tilde{\mu}_{\alpha}\right)  ds\\
&  \leq\mathrm{const.}~(1+|k|)e^{\Lambda_{-}\tau}\left(  \frac{1}{t^{\frac
{3}{2}}}\bar{\mu}_{\alpha}+\frac{1}{t^{\frac{1}{2}}}\tilde{\mu}_{\alpha
}\right)  \leq\mathrm{const.}~(1+|k|)\frac{1}{t^{\frac{1}{2}}}\tilde{\mu
}_{\alpha}~,
\end{align*}
which shows that $\kappa\partial_{k}\hat{\omega}_{1,2,1}\in\mathcal{D}%
_{\alpha-1,\infty,\frac{1}{2}}^{1}$.

The bound on the function $\kappa\partial_{k}\hat{\omega}_{1,3,1}$ uses
(\ref{eq:bndf31}) and Proposition~\ref{prop:sgL3}, leading to%
\begin{align*}
\left\vert \kappa\partial_{k}\hat{\omega}_{1,3,1}\right\vert  &  =\left\vert
\kappa\frac{1}{2}\partial_{k}(e^{\kappa\tau}-e^{-\kappa\tau})\int_{t}^{\infty
}\check{f}_{3,1}\hat{Q}_{1}(k,s)ds\right\vert \\
&  \leq\mathrm{const.}~\tau(1+|k|)(e^{|\Lambda_{-}|\tau}+e^{\Lambda_{-}\tau
})\int_{t}^{\infty}e^{\Lambda_{-}\sigma}\left(  \frac{1}{s^{\frac{7}{2}}}%
\bar{\mu}_{\alpha}+\frac{1}{s^{\frac{5}{2}}}\tilde{\mu}_{\alpha}\right)  ds\\
&  \leq\mathrm{const.}~\tau(e^{|\Lambda_{-}|\tau}+e^{\Lambda_{-}\tau})\int
_{t}^{\infty}(1+|\Lambda_{-}|)e^{\Lambda_{-}\sigma}\left(  \frac{1}%
{s^{\frac{7}{2}}}\bar{\mu}_{\alpha}+\frac{1}{s^{\frac{5}{2}}}\tilde{\mu
}_{\alpha}\right)  ds\\
&  \leq\mathrm{const.}\left(  \frac{1}{t^{\frac{3}{2}}}\bar{\mu}_{\alpha
}(k,t)+\frac{1}{t^{\frac{1}{2}}}\tilde{\mu}_{\alpha}(k,t)\right)  ~,
\end{align*}
which shows that $\kappa\partial_{k}\hat{\omega}_{1,3,0}\in\mathcal{D}%
_{\alpha-1,\frac{3}{2},\frac{1}{2}}^{1}$.

Collecting the bounds we find that $\hat{d}_{1}\in\mathcal{D}_{\alpha
-1,\frac{3}{2},0}^{1}$~, which completes the first part of the proof of
Proposition~\ref{prop:d1&d2space}.

\subsection{\label{sec:boundsd2}Bounds on $\hat{d}_{2}$}

To show that $\hat{d}_{2}=\sum_{m=0}^{1}\sum_{n=1}^{3}\partial_{k}\hat{\omega
}_{2,n,m}$ is in $\mathcal{D}_{\alpha-1,\frac{3}{2},0}^{1}$, which constitutes
the second part of Proposition~\ref{prop:d1&d2space}, we first need to show
bounds on the functions $\partial_{k}\check{f}_{n,m}$.

\begin{proposition}
\label{prop:kappadkfnmbounds}Let $\partial_{k}\check{f}_{n,m}$ be as given in
Section$~$\ref{sec:integraleq}. Then we have the bounds%
\begin{align}
\left\vert \kappa\partial_{k}\check{f}_{1,0}(k,\sigma)\right\vert  &
\leq\mathrm{const.}~\min\{(1+|\Lambda_{-}|\sigma),(s+|\Lambda_{-}%
|)|\Lambda_{-}|^{2}\sigma\}e^{|\Lambda_{-}|\sigma}~,\label{eq:bnddkf10}\\
\left\vert \kappa\partial_{k}\check{f}_{2,0}(k,\sigma)\right\vert  &
\leq\mathrm{const.}~(|k|^{\frac{1}{2}}+|k|^{2})\sigma e^{-|k|\sigma
}~,\label{eq:bnddkf20}\\
\left\vert \kappa\partial_{k}\check{f}_{3,0}(k,\sigma)\right\vert  &
\leq\mathrm{const.}~(1+|\Lambda_{-}|\sigma)e^{\Lambda_{-}\sigma}%
~,\label{eq:bnddkf30}\\
\left\vert \kappa\partial_{k}\check{f}_{1,1}(k,\sigma)\right\vert  &
\leq\mathrm{const.}~(1+|\Lambda_{-}|^{2})\sigma e^{|\Lambda_{-}|\sigma
}~,\label{eq:bnddkf11}\\
\left\vert \kappa\partial_{k}\check{f}_{2,1}(k,\sigma)\right\vert  &
\leq\mathrm{const.}~(1+|k|^{2})\sigma e^{-|k|\sigma}~,\label{eq:bnddkf21}\\
\left\vert \kappa\partial_{k}\check{f}_{3,1}(k,\sigma)\right\vert  &
\leq\mathrm{const.}~(1+|\Lambda_{-}|)\sigma e^{\Lambda_{-}\sigma}~,
\label{eq:bnddkf31}%
\end{align}
uniformly in $\sigma\geq0$ and $k\in\mathbb{R}_{0}$.
\end{proposition}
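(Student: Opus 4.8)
The proof is a term-by-term estimate of the twelve explicit expressions $\kappa\,\partial_k\check f_{n,m}$ obtained by multiplying (\ref{eq:dkf10})--(\ref{eq:dkf31}) by $\kappa$, and it follows the same pattern as the proof of Proposition~\ref{prop:fnmbounds} given in \cite{Hillairet.Wittwer-Existenceofstationary2009}. Before starting I would record the elementary dictionary that makes the claimed bounds legible. Since $\Lambda_{-}=-\operatorname{Re}(\kappa)$ with $\operatorname{Re}(\kappa)\ge 0$, one has $|\Lambda_{-}|=\operatorname{Re}(\kappa)$, hence $|e^{\kappa\sigma}|=e^{|\Lambda_{-}|\sigma}$ and $|e^{-\kappa\sigma}|=e^{\Lambda_{-}\sigma}\le e^{-|k|\sigma}$ by (\ref{expbound}); note that (\ref{expbound}) is precisely the statement $\operatorname{Re}(\kappa)\ge|k|$. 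Combining (\ref{bk1}) and (\ref{eq:sqrtKleqkappa}) with the identity $\kappa^{2}=k^{2}-ik$ (which gives $2\operatorname{Re}(\kappa)^{2}=|\kappa|^{2}+k^{2}$) yields the two-sided comparisons $|\Lambda_{-}|\asymp|\kappa|\asymp|k|^{1/2}+|k|$ together with $|k|\le\mathrm{const.}\,|\kappa|$, so throughout I may freely trade $|\kappa|$, $|\Lambda_{-}|$ and $|k|^{1/2}+|k|$ against one another and replace $(1+|k|)$ by $(1+|\Lambda_{-}|)$.

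The first task is to dispose of the rational prefactors. Each $\partial_k\check f_{n,m}$ is a rational function of $k$ and $\kappa$ multiplying exponentials, and several prefactors are individually singular at $k=0$ (for instance the factors $k^{2}/\kappa^{3}$, $(|k|+\kappa)^{2}/(ik)$, $k/\kappa^{3}$). Using $\kappa^{2}=k^{2}-ik$, hence $k^{2}+\kappa^{2}=k(2k-i)$ and $(|k|+\kappa)^{2}=|k|^{2}+2|k|\kappa+k^{2}-ik$, these prefactors simplify and the apparent poles at $k=0$ cancel, after which the surviving modulus is controlled by the appropriate power of $|\kappa|\asymp|\Lambda_{-}|$. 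The cleanest instance is $\partial_k\check f_{3,1}$: since $\dfrac{k^{2}+\kappa^{2}}{2\kappa k}=\dfrac{2k-i}{2\kappa}=\partial_k\kappa$, one gets $\kappa\,\partial_k\check f_{3,1}=\tfrac12(2k-i)\,\sigma\,e^{-\kappa\sigma}$, and $|2k-i|\le\mathrm{const.}(1+|\Lambda_{-}|)$ together with $|e^{-\kappa\sigma}|\le e^{\Lambda_{-}\sigma}$ yields (\ref{eq:bnddkf31}) at once. Analogous simplifications handle the prefactors of $\partial_k\check f_{3,0}$ and the leading factors of $\partial_k\check f_{2,0}$ and $\partial_k\check f_{2,1}$.

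The genuine work lies in the combinations of exponentials, and this is where the powers of $\sigma$ and the $\min\{\cdot,\cdot\}$ structure are produced. For every difference that vanishes at $\sigma=0$ --- such as $e^{-|k|\sigma}-e^{-\kappa\sigma}$ or $e^{\kappa\sigma}+e^{-\kappa\sigma}-2e^{-|k|\sigma}$ --- I would factor out the slowest-decaying exponential envelope and apply the Taylor-remainder inequalities recalled at the beginning of Section~\ref{sec:bounds}: for $\operatorname{Re}(z)\le 0$ one has $\bigl|e^{z}-\sum_{n=0}^{N}z^{n}/n!\bigr|\le\mathrm{const.}\,|z|^{N+1}$, with the analogue carrying an extra $e^{\operatorname{Re}(z)}$ when $\operatorname{Re}(z)>0$. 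Because the exponents $-\kappa\sigma$, $-|k|\sigma$ and $-(\kappa-|k|)\sigma$ all have nonpositive real part by (\ref{expbound}), this converts each difference into $\mathrm{const.}\,|\kappa|^{j}\sigma^{j}$ times the relevant envelope $e^{-|k|\sigma}$ or $e^{|\Lambda_{-}|\sigma}$, with $j$ the order of vanishing; for example $|e^{-|k|\sigma}-e^{-\kappa\sigma}|\le\mathrm{const.}\,|\kappa|\sigma\,e^{-|k|\sigma}$. As a check, in $\partial_k\check f_{2,0}$ the leading prefactor has modulus $\asymp|\kappa|^{2}/|k|$, so multiplying by this difference gives $\lesssim(|\kappa|^{3}/|k|)\sigma e^{-|k|\sigma}\asymp(|k|^{1/2}+|k|^{2})\sigma e^{-|k|\sigma}$, which is exactly (\ref{eq:bnddkf20}). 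The minimum in (\ref{eq:bnddkf10}) records that two bounds are available: the Taylor estimate, sharp when $|\kappa|\sigma$ is small (and yielding the high power $|\Lambda_{-}|^{2}\sigma$), and the crude bound $|e^{z}|\le e^{\operatorname{Re}(z)}$, sharp when $|\kappa|\sigma$ is large; keeping whichever is smaller gives the stated $\min$.

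The main obstacle is the bookkeeping for $\partial_k\check f_{1,0}$ and $\partial_k\check f_{1,1}$ in (\ref{eq:dkf10}) and (\ref{eq:dkf11}). These carry the largest number of terms, they contain the $e^{\pm\kappa\sigma}\sigma$ contributions coming from $\partial_k\kappa=(2k-i)/(2\kappa)$ acting on the exponentials, and their prefactors ($k^{2}/\kappa^{3}$, $(|k|+\kappa)^{2}/(ik)$, and so on) are the most strongly singular at $k=0$. Here each singular prefactor must be paired with a difference of exponentials that vanishes to a \emph{matching} order, so that the product stays bounded; isolating the correct groupings, and then balancing for each group the small-$\sigma$ Taylor estimate against the large-$\sigma$ exponential estimate so as to reproduce the precise $\min$ in (\ref{eq:bnddkf10}), is the delicate part. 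Once each of the twelve terms $\kappa\,\partial_k\check f_{n,m}$ has been bounded in this way, Proposition~\ref{prop:kappadkfnmbounds} follows by the triangle inequality, the remaining effort being the routine tracking of the powers of $\sigma$ and of $|\Lambda_{-}|$ versus $|k|^{1/2}+|k|$.
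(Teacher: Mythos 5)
Your overall strategy coincides with the paper's own proof: multiply (\ref{eq:dkf10})--(\ref{eq:dkf31}) by $\kappa$, simplify the rational prefactors via $\kappa^{2}=k^{2}-ik$ (so that $k^{2}+\kappa^{2}=k(2k-i)$), control differences of exponentials with the Taylor-remainder inequalities recalled at the start of Section~\ref{sec:bounds}, and obtain the $\min$ in (\ref{eq:bnddkf10}) as the smaller of a crude bound and a cancellation-based bound. The cases you actually execute are correct and match the paper: your identity $\kappa\partial_{k}\check f_{3,1}=\tfrac{1}{2}(2k-i)\sigma e^{-\kappa\sigma}$ gives (\ref{eq:bnddkf31}) at once; your estimate $|e^{-|k|\sigma}-e^{-\kappa\sigma}|\leq\mathrm{const.}\,(|k|^{1/2}+|k|)\sigma e^{-|k|\sigma}$ is precisely the paper's inequality (\ref{eq:exp(abs(k)-kappa)}); and your computation $|\kappa|^{3}/|k|\asymp|k|^{1/2}+|k|^{2}$ reproduces (\ref{eq:bnddkf20}) for the leading term of $\kappa\partial_{k}\check f_{2,0}$.

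The problem is that for (\ref{eq:bnddkf10}) and (\ref{eq:bnddkf11}) --- which is where essentially all of the work in the paper's proof lies --- you state the program ("each singular prefactor must be paired with a difference of exponentials that vanishes to a matching order\dots is the delicate part") without carrying it out, and this is a genuine gap: the refined entry $(s+|\Lambda_{-}|)|\Lambda_{-}|^{2}\sigma$ of the minimum only follows from exhibiting the specific regroupings, not from asserting that they exist. Concretely, the paper writes $e^{\kappa\sigma}+e^{-\kappa\sigma}-2e^{-|k|\sigma}=(e^{\kappa\sigma}-1-\kappa\sigma)+(e^{-\kappa\sigma}-1+\kappa\sigma)-2(e^{-|k|\sigma}-1)$, bounds this by $\mathrm{const.}\,(|\Lambda_{-}|^{2}\sigma^{2}+|k|\sigma)e^{|\Lambda_{-}|\sigma}$, uses $|k|\leq\mathrm{const.}\,|\Lambda_{-}|^{2}$ to absorb the second term, and then treats the remaining groups of $\kappa\partial_{k}\check f_{1,0}$ one by one (the $\tfrac{k^{2}}{\kappa^{2}}(e^{\kappa\sigma}-e^{-\kappa\sigma})$ term, the $|\Lambda_{-}|$-weighted difference, and the three $\sigma$-weighted terms), each producing a factor $|\Lambda_{-}|^{2}\sigma$, $|\Lambda_{-}|^{3}\sigma$ or $|\Lambda_{-}|^{2}\sigma^{2}$ times $e^{|\Lambda_{-}|\sigma}$; only after this explicit bookkeeping does the bound $(s+|\Lambda_{-}|)|\Lambda_{-}|^{2}\sigma e^{|\Lambda_{-}|\sigma}$ emerge, and an analogous explicit treatment is needed for (\ref{eq:bnddkf11}). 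Your outline names the right mechanism, and nothing in it would fail, but as written two of the six claimed bounds --- the hardest ones, and the ones the rest of Section~\ref{sec:boundsd2} depends on --- are asserted rather than proved.
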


\begin{proof}
We multiply (\ref{eq:dkf10})-(\ref{eq:dkf31}) by $\kappa$ and bound the
products. The function $\kappa\partial_{k}\check{f}_{1,0}$ is bounded in two
ways. We have a straightforward bound%
\[
\left\vert \kappa\partial_{k}\check{f}_{1,0}(k,\sigma)\right\vert
\leq\mathrm{const.}~(1+|\Lambda_{-}|\sigma)e^{|\Lambda_{-}|\sigma}~.
\]
Since leading terms cancel, we get%
\begin{align*}
\left\vert \kappa\partial_{k}\check{f}_{1,0}(k,\sigma)\right\vert  &
\leq\left\vert \frac{i}{2}\left(  e^{\kappa\sigma}+e^{-\kappa\sigma
}-2e^{-|k|\sigma}\right)  -\frac{ik^{2}}{2\kappa^{2}}\left(  e^{\kappa\sigma
}-e^{-\kappa\sigma}\right)  +2\frac{k^{2}+|k|\kappa}{k}(e^{-|k|\sigma
}-e^{-\kappa\sigma})\right\vert \\
&  +\left\vert i\frac{k^{2}+\kappa^{2}}{2\kappa}\left(  e^{\kappa\sigma
}-e^{-\kappa\sigma}\right)  \sigma+\frac{k^{2}+|k|\kappa}{k}\frac{k^{2}%
+\kappa^{2}}{\kappa}e^{-\kappa\sigma}\sigma-2\kappa\frac{k^{2}+|k|\kappa}%
{k}e^{-|k|\sigma}\sigma\right\vert \\
&  \leq\mathrm{const.}~|(e^{\kappa\sigma}-1-\kappa\sigma)+(e^{-\kappa\sigma
}-1+\kappa\sigma)-2(e^{-|k|\sigma}-1)|\\
&  +\mathrm{const.}\left\vert \frac{k^{2}}{\kappa^{2}}\left(  (e^{\kappa
\sigma}-1)-(e^{-\kappa\sigma}-1)\right)  \right\vert +\mathrm{const.}%
~|\Lambda_{-}||(e^{-|k|\sigma}-1)-(e^{-\kappa\sigma}-1)|\\
&  +\mathrm{const.}\left\vert \frac{k^{2}+\kappa^{2}}{2\kappa}\left(
(e^{\kappa\sigma}-1)-(e^{-\kappa\sigma}-1)\right)  \sigma\right\vert \\
&  +\mathrm{const.}\left\vert \frac{k^{2}+|k|\kappa}{k}\frac{k^{2}+\kappa^{2}%
}{\kappa}e^{-\kappa\sigma}\sigma\right\vert +\mathrm{const.}\left\vert
\kappa\frac{k^{2}+|k|\kappa}{k}e^{-|k|\sigma}\sigma\right\vert \\
&  \leq\mathrm{const.}~(|\Lambda_{-}|^{2}\sigma^{2}+|k|\sigma)e^{|\Lambda
_{-}|\sigma}+\mathrm{const.}~|\Lambda_{-}|^{3}\sigma e^{|\Lambda_{-}|\sigma
}+\mathrm{const.}~|\Lambda_{-}|^{2}\sigma e^{|\Lambda_{-}|\sigma}\\
&  +\mathrm{const.}~|\Lambda_{-}|^{2}\sigma^{2}e^{|\Lambda_{-}|\sigma
}+\mathrm{const.}~|\Lambda_{-}|^{2}\sigma e^{|\Lambda_{-}|\sigma
}+\mathrm{const.}~|\Lambda_{-}|^{2}\sigma e^{|\Lambda_{-}|\sigma}\\
&  \leq\mathrm{const.}~(s+|\Lambda_{-}|)|\Lambda_{-}|^{2}\sigma e^{|\Lambda
_{-}|\sigma}~.
\end{align*}
Then we have%
\[
\left\vert \kappa\partial_{k}\check{f}_{1,0}(k,\sigma)\right\vert
\leq\mathrm{const.}~\min\{(1+|\Lambda_{-}|\sigma),(s+|\Lambda_{-}%
|)|\Lambda_{-}|^{2}\sigma\}e^{|\Lambda_{-}|\sigma}~,
\]
which proves (\ref{eq:bnddkf10}).

To bound $\kappa\partial_{k}\check{f}_{2,0}(k,\sigma)$ we use that, since
$|k|\leq\operatorname{Re}(\kappa)$ for all $k$,%
\begin{align}
\left\vert e^{-|k|\sigma}-e^{-\kappa\sigma}\right\vert  &  \leq\mathrm{const.}%
~e^{-|k|\sigma}\left\vert 1-e^{(|k|-\kappa)\sigma}\right\vert \nonumber\\
&  \leq\mathrm{const.}~e^{-|k|\sigma}\left\vert |k|-\kappa\right\vert
\sigma\nonumber\\
&  \leq\mathrm{const.}~(|k|^{\frac{1}{2}}+|k|)\sigma e^{-|k|\sigma}~,
\label{eq:exp(abs(k)-kappa)}%
\end{align}
such that%
\begin{align*}
\left\vert \kappa\partial_{k}\check{f}_{2,0}(k,\sigma)\right\vert  &
\leq\left\vert \frac{(|k|+\kappa)^{2}}{k}\left(  e^{-|k|\sigma}-e^{-\kappa
\sigma}\right)  -2\frac{\kappa+|k|}{k}\left(  |k|\kappa e^{-|k|\sigma}%
-\frac{k^{2}+\kappa^{2}}{2}e^{-\kappa\sigma}\right)  \sigma\right\vert \\
&  \leq\mathrm{const.}~(1+|k|)(|k|^{\frac{1}{2}}+|k|)e^{-|k|\sigma}%
\sigma+\mathrm{const.}~(|k|+|k|^{2})e^{-|k|\sigma}\sigma\\
&  +\mathrm{const.}~(|k|^{\frac{1}{2}}+|k|^{2})e^{-|k|\sigma}\sigma\\
&  \leq\mathrm{const.}~(|k|^{\frac{1}{2}}+|k|^{2})\sigma e^{-|k|\sigma}~,
\end{align*}
which gives (\ref{eq:bnddkf20}).

To bound $\kappa\partial_{k}\check{f}_{3,0}\left(  k,\sigma\right)  $ we have
the straightforward bound%
\begin{align*}
\left\vert \kappa\partial_{k}\check{f}_{3,0}\left(  k,\sigma\right)
\right\vert  &  \leq\left\vert \kappa\frac{k}{2\kappa^{3}}e^{-\kappa\sigma
}\right\vert +\left\vert \kappa\frac{k^{2}+\kappa^{2}}{2\kappa^{2}}\sigma
e^{-\kappa\sigma}\right\vert \\
&  \leq\mathrm{const.}~(1+|\Lambda_{-}|)\sigma e^{\Lambda_{-}\sigma}~,
\end{align*}
which yields (\ref{eq:bnddkf30}).

To bound $\kappa\partial_{k}\check{f}_{1,1}\left(  k,\sigma\right)  $ we have%
\begin{align*}
\left\vert \kappa\partial_{k}\check{f}_{1,1}(k,\sigma)\right\vert  &
\leq\left\vert i\frac{\left(  |k|+\kappa\right)  ^{2}}{|k|}(e^{-|k|\sigma
}-e^{-\kappa\sigma})\right\vert +\left\vert \frac{k^{2}+\kappa^{2}}{2k}\left(
e^{\kappa\sigma}+e^{-\kappa\sigma}\right)  \sigma\right\vert \\
&  +\left\vert 2i\frac{k^{2}+|k|\kappa}{k^{2}}\left(  \frac{k^{2}+\kappa^{2}%
}{2}e^{-\kappa\sigma}-|k|\kappa e^{-|k|\sigma}\right)  \sigma\right\vert \\
&  \leq\mathrm{const.}~(1+|k|)(|k|+|\Lambda_{-}|)\sigma+\mathrm{const.}%
~(1+|k|)\sigma e^{|\Lambda_{-}%
\vert
\sigma}\\
&  +\mathrm{const.}~|\Lambda_{-}|((1+|k|)+|\Lambda_{-}|)\sigma\leq
\mathrm{const.}~(1+|\Lambda_{-}|^{2})\sigma e^{|\Lambda_{-}|\sigma}~,
\end{align*}
and thus we have (\ref{eq:bnddkf11}).

To bound $\kappa\partial_{k}\check{f}_{2,1}\left(  k,\sigma\right)  $ we use
(\ref{eq:exp(abs(k)-kappa)}) to bound%
\begin{align*}
\left\vert \kappa\partial_{k}\check{f}_{2,1}\left(  k,\sigma\right)
\right\vert  &  \leq\left\vert i\frac{\left(  |k|+\kappa\right)  ^{2}}%
{|k|}(e^{-\kappa\sigma}-e^{-|k|\sigma})\right\vert +\left\vert i(|k|+\kappa
)\kappa\frac{k^{2}+\kappa^{2}}{k^{2}}e^{-\kappa\sigma}\sigma\right\vert \\
&  +|2i\kappa\left(  |k|+\kappa\right)  e^{-|k|\sigma}\sigma|\\
&  \leq\mathrm{const.}~(1+|k|)(|k|^{\frac{1}{2}}+|k|)\sigma e^{-|k|\sigma
}+\mathrm{const.}~(|k|+|k|^{2})(1+|k|^{-1})e^{-|k|\sigma}\sigma\\
&  +\mathrm{const.}~(|k|+|k|^{2})e^{-|k|\sigma}\sigma\leq\mathrm{const.}%
~(1+|k|^{2})\sigma e^{-|k|\sigma}%
\end{align*}
which leads to (\ref{eq:bnddkf21}).

Finally, To bound $\kappa\partial_{k}\check{f}_{3,1}\left(  k,\sigma\right)  $
we have the straightforward bound%
\[
\left\vert \kappa\partial_{k}\check{f}_{3,1}\left(  k,\sigma\right)
\right\vert \leq\left\vert \frac{k^{2}+\kappa^{2}}{2k}e^{-\kappa\sigma}%
\sigma\right\vert \leq\mathrm{const.}~(1+|\Lambda_{-}|)\sigma e^{\Lambda
_{-}\sigma}~,
\]
and therefore we have (\ref{eq:bnddkf31}). This completes the proof of
Proposition~\ref{prop:kappadkfnmbounds}.
\end{proof}

\bigskip

We may now bound $\hat{d}_{2}$. The bound on the function $\kappa\partial
_{k}\hat{\omega}_{2,1,0}$ uses (\ref{eq:bnddkf10}) and
Propositions~\ref{prop:sgL1} and \ref{prop:sgL2}, leading to%
\begin{align*}
\left\vert \kappa\partial_{k}\hat{\omega}_{2,1,0}\right\vert  &  =\left\vert
\frac{1}{2}e^{-\kappa\tau}\int_{1}^{t}\kappa\partial_{k}\check{f}_{1,0}\left(
k,\sigma\right)  \hat{Q}_{0}\left(  k,s\right)  ds\right\vert \\
&  \leq\mathrm{const.}~e^{\Lambda_{-}\tau}\int_{1}^{t}\min\{(1+|\Lambda
_{-}|\sigma),(s+|\Lambda_{-}|)|\Lambda_{-}|^{2}\sigma\}e^{|\Lambda_{-}|\sigma
}\left(  \frac{1}{s^{\frac{7}{2}}}\bar{\mu}_{\alpha}+\frac{1}{s^{\frac{5}{2}}%
}\tilde{\mu}_{\alpha}\right)  ds\\
&  \leq\mathrm{const.}~e^{\Lambda_{-}\tau}\int_{1}^{\frac{t+1}{2}}%
(s+|\Lambda_{-}|)|\Lambda_{-}|^{2}\sigma e^{|\Lambda_{-}|\sigma}\left(
\frac{1}{s^{\frac{7}{2}}}\bar{\mu}_{\alpha}+\frac{1}{s^{\frac{5}{2}}}%
\tilde{\mu}_{\alpha}\right)  ds\\
&  +\mathrm{const.}~e^{\Lambda_{-}\tau}\int_{\frac{t+1}{2}}^{t}(1+|\Lambda
_{-}|\sigma)e^{|\Lambda_{-}|\sigma}\left(  \frac{1}{s^{\frac{7}{2}}}\bar{\mu
}_{\alpha}+\frac{1}{s^{\frac{5}{2}}}\tilde{\mu}_{\alpha}\right)  ds\\
&  \leq\mathrm{const.}~(1+|\Lambda_{-}|)\frac{1}{t^{\frac{3}{2}}}\tilde{\mu
}_{\alpha}+\mathrm{const.}\left(  \frac{1}{t^{\frac{5}{2}}}\bar{\mu}_{\alpha
}+\frac{1}{t^{\frac{3}{2}}}\tilde{\mu}_{\alpha}\right)  ~,
\end{align*}
which shows that $\kappa\partial_{k}\hat{\omega}_{2,1,0}\in\mathcal{D}%
_{\alpha-1,\frac{5}{2},\frac{3}{2}}^{1}$.

The bound on the function $\kappa\partial_{k}\hat{\omega}_{2,2,0}$ uses
(\ref{eq:bnddkf20}), Proposition~\ref{prop:sgk3} and (\ref{eq:mutomutilde}),
leading to%
\begin{align*}
\left\vert \kappa\partial_{k}\hat{\omega}_{2,2,0}\right\vert  &  =\left\vert
\frac{1}{2}e^{-\kappa\tau}\int_{t}^{\infty}\kappa\partial_{k}\check{f}%
_{2,0}(k,s-1)\hat{Q}_{0}(k,s)ds\right\vert \\
&  \leq\mathrm{const.}~e^{\Lambda_{-}\tau}e^{|k|\tau}\int_{t}^{\infty
}(|k|^{\frac{1}{2}}+|k|^{2})\sigma e^{-|k|\sigma}\left(  \frac{1}{s^{\frac
{7}{2}}}\bar{\mu}_{\alpha}+\frac{1}{s^{\frac{5}{2}}}\tilde{\mu}_{\alpha
}\right)  ds\\
&  \leq\mathrm{const.}~(1+|k|)e^{\Lambda_{-}\tau}\left(  \frac{1}{t^{2}}%
\bar{\mu}_{\alpha}+\frac{1}{t^{1}}\tilde{\mu}_{\alpha}\right)  \leq
\mathrm{const.}~(1+|k|)\frac{1}{t^{1}}\tilde{\mu}_{\alpha}~,
\end{align*}
which shows that $\kappa\partial_{k}\hat{\omega}_{2,2,0}\in\mathcal{D}%
_{\alpha-1,\infty,1}^{1}$.

The bound on the function $\kappa\partial_{k}\hat{\omega}_{2,3,0}$ uses
(\ref{eq:bnddkf30}) and Proposition~\ref{prop:sgL3}, leading to%
\begin{align*}
\left\vert \kappa\partial_{k}\hat{\omega}_{2,3,0}\right\vert  &  =\left\vert
\frac{1}{2}(e^{\kappa\tau}-e^{-\kappa\tau})\int_{t}^{\infty}\kappa\partial
_{k}\check{f}_{3,0}(k,s-1)\hat{Q}_{0}(k,s)ds\right\vert \\
&  \leq\mathrm{const.}~e^{|\Lambda_{-}|\tau}\int_{t}^{\infty}(1+|\Lambda
_{-}|\sigma)e^{\Lambda_{-}\sigma}\left(  \frac{1}{s^{\frac{7}{2}}}\bar{\mu
}_{\alpha}+\frac{1}{s^{\frac{5}{2}}}\tilde{\mu}_{\alpha}\right)  ds\\
&  \leq\mathrm{const.}~e^{|\Lambda_{-}|\tau}\int_{t}^{\infty}e^{\Lambda
_{-}\sigma}\left(  \frac{1}{s^{\frac{7}{2}}}\bar{\mu}_{\alpha}+\frac
{1}{s^{\frac{5}{2}}}\tilde{\mu}_{\alpha}\right)  ds\\
&  +\mathrm{const.}~e^{|\Lambda_{-}|\tau}\int_{t}^{\infty}|\Lambda
_{-}|e^{\Lambda_{-}\sigma}\left(  \frac{1}{s^{\frac{5}{2}}}\bar{\mu}_{\alpha
}+\frac{1}{s^{\frac{3}{2}}}\tilde{\mu}_{\alpha}\right)  ds\\
&  \leq\mathrm{const.}\left(  \frac{1}{t^{\frac{5}{2}}}\bar{\mu}_{\alpha
}+\frac{1}{t^{\frac{3}{2}}}\tilde{\mu}_{\alpha}\right)  ~,
\end{align*}
which shows that $\kappa\partial_{k}\hat{\omega}_{2,3,0}\in\mathcal{D}%
_{\alpha-1,\frac{5}{2},\frac{3}{2}}^{1}$.

The bound on the function $\kappa\partial_{k}\hat{\omega}_{2,1,1}$ uses
(\ref{eq:bnddkf11}) and Propositions~\ref{prop:sgL1} and \ref{prop:sgL2},
leading to%
\begin{align*}
\left\vert \kappa\partial_{k}\hat{\omega}_{2,1,1}\right\vert  &  =\left\vert
\frac{1}{2}e^{-\kappa\tau}\int_{1}^{t}\kappa\partial_{k}\check{f}%
_{1,1}(k,s-1)\hat{Q}_{1}(k,s)ds\right\vert \\
&  \leq\mathrm{const.}~e^{\Lambda_{-}\tau}\int_{1}^{t}(1+|\Lambda_{-}%
|^{2})\sigma e^{|\Lambda_{-}|\sigma}\left(  \frac{1}{s^{\frac{7}{2}}}\bar{\mu
}_{\alpha}+\frac{1}{s^{\frac{5}{2}}}\tilde{\mu}_{\alpha}\right)  ds\\
&  \leq\mathrm{const.}\left(  \tilde{\mu}_{\alpha}+\frac{1}{t^{\frac{3}{2}}%
}\bar{\mu}_{\alpha}+\frac{1}{t^{\frac{1}{2}}}\tilde{\mu}_{\alpha}\right) \\
&  +\mathrm{const.}~\frac{1}{t^{1}}\tilde{\mu}_{\alpha}+\mathrm{const.}%
~|\Lambda_{-}|\left(  \frac{1}{t^{\frac{5}{2}}}\bar{\mu}_{\alpha}+\frac
{1}{t^{\frac{3}{2}}}\tilde{\mu}_{\alpha}\right)  ~,
\end{align*}
which shows that $\kappa\partial_{k}\hat{\omega}_{2,1,1}\in\mathcal{D}%
_{\alpha-1,\frac{3}{2},0}^{1}$.

The bound on the function $\kappa\partial_{k}\hat{\omega}_{2,2,1}$ uses
(\ref{eq:bnddkf21}), Proposition~\ref{prop:sgk3} and (\ref{eq:mutomutilde}),
leading to%
\begin{align*}
\left\vert \kappa\partial_{k}\hat{\omega}_{2,2,1}\right\vert  &  =\left\vert
\frac{1}{2}e^{-\kappa\tau}\int_{t}^{\infty}\kappa\partial_{k}\check{f}%
_{2,1}(k,s-1)\hat{Q}_{1}(k,s)ds\right\vert \\
&  \leq\mathrm{const.}~e^{\Lambda_{-}\tau}e^{|k|\tau}\int_{t}^{\infty
}(1+|k|^{2})\sigma e^{-|k|\sigma}\left(  \frac{1}{s^{\frac{7}{2}}}\bar{\mu
}_{\alpha}+\frac{1}{s^{\frac{5}{2}}}\tilde{\mu}_{\alpha}\right)  ds\\
&  \leq\mathrm{const.}~(1+|k|)e^{\Lambda_{-}\tau}\left(  \frac{1}{t^{\frac
{3}{2}}}\bar{\mu}_{\alpha}+\frac{1}{t^{\frac{1}{2}}}\tilde{\mu}_{\alpha
}\right)  \leq\mathrm{const.}~(1+|k|)\frac{1}{t^{\frac{1}{2}}}\tilde{\mu
}_{\alpha}~,
\end{align*}
which shows that $\kappa\partial_{k}\hat{\omega}_{2,2,1}\in\mathcal{D}%
_{\alpha-1,\infty,\frac{1}{2}}^{1}$.

The bound on the function $\kappa\partial_{k}\hat{\omega}_{2,3,1}$ uses
(\ref{eq:bnddkf31}) and Proposition~\ref{prop:sgL3}, leading to%
\begin{align*}
\left\vert \kappa\partial_{k}\hat{\omega}_{2,3,1}\right\vert  &  =\left\vert
\frac{1}{2}(e^{\kappa\tau}-e^{-\kappa\tau})\int_{t}^{\infty}\kappa\partial
_{k}\check{f}_{3,1}\hat{Q}_{1}(k,s)ds\right\vert \\
&  \leq\mathrm{const.}~(e^{|\Lambda_{-}|\tau}+e^{\Lambda_{-}\tau})\int
_{t}^{\infty}(1+|\Lambda_{-}|)\sigma e^{\Lambda_{-}\sigma}\left(  \frac
{1}{s^{\frac{7}{2}}}\bar{\mu}_{\alpha}+\frac{1}{s^{\frac{5}{2}}}\tilde{\mu
}_{\alpha}\right)  ds\\
&  \leq\mathrm{const.}\left(  \frac{1}{t^{\frac{3}{2}}}\bar{\mu}_{\alpha
}+\frac{1}{t^{\frac{1}{2}}}\tilde{\mu}_{\alpha}\right)  ~,
\end{align*}
which shows that $\kappa\partial_{k}\hat{\omega}_{2,3,1}\in\mathcal{D}%
_{\alpha-1,\frac{3}{2},\frac{1}{2}}^{1}$.

Collecting the bounds we have that $\hat{d}_{2}\in\mathcal{D}_{\alpha
-1,\frac{3}{2},0}^{1}$~, which completes the second part of the proof of
Proposition~\ref{prop:d1&d2space}.

\subsection{\label{sec:boundsd3}Bounds on $\hat{d}_{3}$}

We prove the bounds on $\hat{d}_{3}$ needed to complete the proof of
Lemma~\ref{lem:Q}. For compatibility with the maps $\mathfrak{L}_{1}$ and
$\mathfrak{L}_{2}$ we will bound $\kappa\hat{d}_{3}$ instead of $\hat{d}_{3}$.
Throughout this proof we will use without further mention the bounds%
\begin{align*}
\left\vert \partial_{k}\hat{Q}_{0}\left(  k,s\right)  \right\vert  &
\leq\left\Vert \partial_{k}\hat{Q}_{0}\right\Vert \left(  \frac{1}{s^{\frac
{3}{2}}}\bar{\mu}_{\alpha}+\frac{1}{s^{1}}\tilde{\mu}_{\alpha}\right)  ~,\\
\left\vert \partial_{k}\hat{Q}_{1}\left(  k,s\right)  \right\vert  &
\leq\left\Vert \partial_{k}\hat{Q}_{1}\right\Vert \left(  \frac{1}{s^{\frac
{3}{2}}}\bar{\mu}_{\alpha}+\frac{1}{s^{2}}\tilde{\mu}_{\alpha}\right)  ~.
\end{align*}
The bound on the function $\kappa\partial_{k}\hat{\omega}_{3,1,0}$ uses
(\ref{eq:bndf10}) and Propositions~\ref{prop:sgL1} and \ref{prop:sgL2},
leading to%
\begin{align*}
\left\vert \kappa\partial_{k}\hat{\omega}_{3,1,0}\right\vert  &  =\left\vert
\frac{1}{2}e^{-\kappa\tau}\int_{1}^{t}\check{f}_{1,0}\left(  k,\sigma\right)
\kappa\partial_{k}\hat{Q}_{0}\left(  k,s\right)  ds\right\vert \\
&  \leq\mathrm{const.}~|\Lambda_{-}|e^{\Lambda_{-}\tau}\int_{1}^{t}%
e^{|\Lambda_{-}|\sigma}\min\{|\Lambda_{-}|,|\Lambda_{-}|^{3}\sigma
^{2}\}\left(  \frac{1}{s^{\frac{3}{2}}}\bar{\mu}_{\alpha}+\frac{1}{s}%
\tilde{\mu}_{\alpha}\right)  ds\\
&  \leq\mathrm{const.}~|\Lambda_{-}|e^{\Lambda_{-}\tau}\int_{1}^{\frac{t+1}%
{2}}e^{|\Lambda_{-}|\sigma}|\Lambda_{-}|^{3}\sigma^{2}\left(  \frac
{1}{s^{\frac{3}{2}}}\bar{\mu}_{\alpha}+\frac{1}{s}\tilde{\mu}_{\alpha}\right)
ds\\
&  +\mathrm{const.}~|\Lambda_{-}|e^{\Lambda_{-}\tau}\int_{\frac{t+1}{2}}%
^{t}e^{|\Lambda_{-}|\sigma}|\Lambda_{-}|\left(  \frac{1}{s^{\frac{3}{2}}}%
\bar{\mu}_{\alpha}+\frac{1}{s}\tilde{\mu}_{\alpha}\right)  ds\\
&  \leq\mathrm{const.}~|\Lambda_{-}|\left(  \frac{1}{t^{3}}\tilde{\mu}%
_{\alpha}+\frac{1}{t^{\frac{3}{2}}}\bar{\mu}_{\alpha}+\frac{1}{t^{1}}%
\tilde{\mu}_{\alpha}\right)  ~,
\end{align*}
which shows that $\kappa\partial_{k}\hat{\omega}_{3,1,0}\in\mathcal{D}%
_{\alpha-1,\frac{3}{2},1}^{1}$.

The bound on the function $\kappa\partial_{k}\hat{\omega}_{3,2,0}$ uses
(\ref{eq:bndf20}), (\ref{eq:mutomutilde}) and Proposition~\ref{prop:sgk3},
which, to be applicable, requires first the use of
(\ref{eq:ksacrificealphafort}) to trade a $|k|$ for an $s^{-1}$ multiplying
$\bar{\mu}_{\alpha}$ and $\tilde{\mu}_{\alpha}$. We then have%
\begin{align*}
\left\vert \kappa\partial_{k}\hat{\omega}_{3,2,0}\right\vert  &  =\left\vert
\frac{1}{2}e^{-\kappa\tau}\int_{t}^{\infty}\check{f}_{2,0}(k,s-1)\kappa
\partial_{k}\hat{Q}_{0}(k,s)ds\right\vert \\
&  \leq\mathrm{const.}~e^{\Lambda_{-}\tau}\int_{t}^{\infty}(|k|+|k|^{\frac
{1}{2}})(|k|^{\frac{1}{2}}+|k|)e^{-|k|\sigma}\left(  \frac{1}{s^{\frac{3}{2}}%
}\bar{\mu}_{\alpha}+\frac{1}{s}\tilde{\mu}_{\alpha}\right)  ds\\
&  \leq\mathrm{const.}~e^{\Lambda_{-}\tau}e^{|k|\tau}\int_{t}^{\infty
}|k|e^{-|k|\sigma}\left(  \frac{1}{s^{\frac{3}{2}}}\bar{\mu}_{\alpha}+\frac
{1}{s^{\frac{5}{2}}}\bar{\mu}_{\alpha-1}\right)  ds\\
&  +\mathrm{const.}~e^{\Lambda_{-}\tau}e^{|k|\tau}\int_{t}^{\infty}\left(
1+|k|\right)  e^{-|k|\sigma}\frac{1}{s^{3}}\tilde{\mu}_{\alpha-1}ds\\
&  \leq\mathrm{const.}~e^{\Lambda_{-}\tau}\left(  \frac{1}{t^{\frac{3}{2}}%
}\bar{\mu}_{\alpha}+\frac{1}{t^{\frac{5}{2}}}\bar{\mu}_{\alpha-1}+\frac
{1}{t^{2}}\tilde{\mu}_{\alpha-1}\right) \\
&  \leq\mathrm{const.}\left(  \frac{1}{t^{\frac{3}{2}}}\tilde{\mu}_{\alpha
}+\frac{1}{t^{2}}\tilde{\mu}_{\alpha-1}\right)  ~,
\end{align*}
which shows that $\kappa\partial_{k}\hat{\omega}_{3,2,0}\in\mathcal{D}%
_{\alpha-1,\infty,\frac{3}{2}}^{1}$.

The bound on the function $\kappa\partial_{k}\hat{\omega}_{3,3,0}$ uses
(\ref{eq:bndf30}) and Proposition~\ref{prop:sgL3}, which, to be applicable,
requires first the use of (\ref{eq:ksacrificealphafort}) to trade a
$|\Lambda_{-}|$ for a $s^{-1/2}$ multiplying $\tilde{\mu}_{\alpha}$. We then
have%
\begin{align*}
\left\vert \kappa\partial_{k}\hat{\omega}_{3,3,0}\right\vert  &  =\left\vert
\frac{1}{2}(e^{\kappa\tau}-e^{-\kappa\tau})\int_{t}^{\infty}\check{f}%
_{3,0}(k,s-1)\kappa\partial_{k}\hat{Q}_{0}(k,s)ds\right\vert \\
&  \leq\mathrm{const.}~e^{|\Lambda_{-}|\tau}\int_{t}^{\infty}\min
\{1,|\Lambda_{-}|\}e^{\Lambda_{-}\sigma}|\Lambda_{-}|\left(  \frac{1}%
{s^{\frac{3}{2}}}\bar{\mu}_{\alpha}+\frac{1}{s}\tilde{\mu}_{\alpha}\right)
ds\\
&  \leq\mathrm{const.}~e^{|\Lambda_{-}|\tau}\int_{t}^{\infty}|\Lambda
_{-}|e^{\Lambda_{-}\sigma}\left(  \frac{1}{s^{\frac{3}{2}}}\bar{\mu}_{\alpha
}+\frac{1}{s^{2}}\tilde{\mu}_{\alpha-\frac{1}{2}}+\frac{1}{s^{3}}\tilde{\mu
}_{\alpha-1}\right)  ds\\
&  \leq\mathrm{const.}\left(  \frac{1}{t^{\frac{3}{2}}}\bar{\mu}_{\alpha
}+\frac{1}{t^{2}}\tilde{\mu}_{\alpha-\frac{1}{2}}+\frac{1}{t^{3}}\tilde{\mu
}_{\alpha-1}\right)  ~,
\end{align*}
which shows that $\kappa\partial_{k}\hat{\omega}_{3,3,0}\in\mathcal{D}%
_{\alpha-1,\frac{3}{2},\frac{3}{2}}^{1}$.

The bound on the function $\kappa\partial_{k}\hat{\omega}_{3,1,1}$ uses
(\ref{eq:bndf11}) and Propositions~\ref{prop:sgL1} and \ref{prop:sgL2},
leading to%
\begin{align*}
\left\vert \kappa\partial_{k}\hat{\omega}_{3,1,1}\right\vert  &  =\left\vert
\frac{1}{2}e^{-\kappa\tau}\int_{1}^{t}\check{f}_{1,1}(k,s-1)\kappa\partial
_{k}\hat{Q}_{1}(k,s)ds\right\vert \\
&  \leq\mathrm{const.}~e^{\Lambda_{-}\tau}\int_{1}^{t}(1+|\Lambda
_{-}|)e^{|\Lambda_{-}|\sigma}\min\{1,|\Lambda_{-}|\sigma\}|\Lambda_{-}|\left(
\frac{1}{s^{\frac{3}{2}}}\bar{\mu}_{\alpha}+\frac{1}{s^{2}}\tilde{\mu}%
_{\alpha}\right)  ds\\
&  \leq\mathrm{const.}~(1+|\Lambda_{-}|)e^{\Lambda_{-}\tau}\int_{1}%
^{\frac{t+1}{2}}|\Lambda_{-}|e^{|\Lambda_{-}|\sigma}|\Lambda_{-}|\sigma\left(
\frac{1}{s^{\frac{3}{2}}}\bar{\mu}_{\alpha}+\frac{1}{s^{2}}\tilde{\mu}%
_{\alpha}\right)  ds\\
&  +\mathrm{const.}~(1+|\Lambda_{-}|)e^{\Lambda_{-}\tau}\int_{\frac{t+1}{2}%
}^{t}|\Lambda_{-}|e^{|\Lambda_{-}|\sigma}\left(  \frac{1}{s^{\frac{3}{2}}}%
\bar{\mu}_{\alpha}+\frac{1}{s^{2}}\tilde{\mu}_{\alpha}\right)  ds\\
&  \leq\mathrm{const.}~(1+|\Lambda_{-}|)\left(  \frac{1}{t^{\frac{3}{2}}%
}\tilde{\mu}_{\alpha}+\frac{1}{t^{\frac{3}{2}}}\bar{\mu}_{\alpha}+\frac
{1}{t^{2}}\tilde{\mu}_{\alpha}\right)  ~,
\end{align*}
which shows that $\kappa\partial_{k}\hat{\omega}_{3,1,1}\in\mathcal{D}%
_{\alpha-1,\frac{3}{2},\frac{3}{2}}^{1}$.

The bound on the function $\kappa\partial_{k}\hat{\omega}_{3,2,1}$ uses
(\ref{eq:bndf21}), Proposition~\ref{prop:sgk3} and (\ref{eq:mutomutilde}),
leading to%
\begin{align*}
\left\vert \kappa\partial_{k}\hat{\omega}_{3,2,1}\right\vert  &  =\left\vert
\frac{1}{2}e^{-\kappa\tau}\int_{t}^{\infty}\check{f}_{2,1}(k,s-1)\kappa
\partial_{k}\hat{Q}_{1}(k,s)ds\right\vert \\
&  \leq\mathrm{const.}~e^{\Lambda_{-}\tau}\int_{t}^{\infty}(1+|k|)|\Lambda
_{-}|e^{-|k|\sigma}\left(  \frac{1}{s^{\frac{3}{2}}}\bar{\mu}_{\alpha}%
+\frac{1}{s^{2}}\tilde{\mu}_{\alpha}\right)  ds\\
&  \leq\mathrm{const.}~(1+|k|)e^{\Lambda_{-}\tau}e^{|k|\tau}\int_{t}^{\infty
}(|k|^{\frac{1}{2}}+|k|)e^{-|k|\sigma}\left(  \frac{1}{s^{\frac{3}{2}}}%
\bar{\mu}_{\alpha}+\frac{1}{s^{2}}\tilde{\mu}_{\alpha}\right)  ds\\
&  \leq\mathrm{const.}~(1+|k|)e^{\Lambda_{-}\tau}\left(  \frac{1}{t^{1}}%
\bar{\mu}_{\alpha}+\frac{1}{t^{\frac{3}{2}}}\tilde{\mu}_{\alpha}\right)
\leq\mathrm{const.}~(1+|k|)\frac{1}{t^{1}}\tilde{\mu}_{\alpha}~,
\end{align*}
which shows that $\kappa\partial_{k}\hat{\omega}_{3,2,1}\in\mathcal{D}%
_{\alpha-1,\infty,1}^{1}$.

The bound on the function $\kappa\partial_{k}\hat{\omega}_{3,3,1}$ uses
(\ref{eq:bndf31}) and Proposition~\ref{prop:sgL3}, leading to%
\begin{align*}
\left\vert \kappa\partial_{k}\hat{\omega}_{3,3,1}\right\vert  &  =\left\vert
\frac{1}{2}(e^{\kappa\tau}-e^{-\kappa\tau})\int_{t}^{\infty}\check{f}%
_{3,1}\kappa\partial_{k}\hat{Q}_{1}(k,s)ds\right\vert \\
&  \leq\mathrm{const.}~(e^{|\Lambda_{-}|\tau}+e^{\Lambda_{-}\tau})\int
_{t}^{\infty}e^{\Lambda_{-}\sigma}\left\vert \Lambda_{\_}\right\vert \left(
\frac{1}{s^{\frac{3}{2}}}\bar{\mu}_{\alpha}+\frac{1}{s^{2}}\tilde{\mu}%
_{\alpha}\right)  ds\\
&  \leq\mathrm{const.}\left(  \frac{1}{t^{\frac{3}{2}}}\bar{\mu}_{\alpha
}+\frac{1}{t^{2}}\tilde{\mu}_{\alpha}\right)  ~,
\end{align*}
which shows that $\kappa\partial_{k}\hat{\omega}_{3,3,1}\in\mathcal{D}%
_{\alpha-1,\frac{3}{2},2}^{1}$.

Collecting the bounds we have that $\hat{d}_{3}\in\mathcal{D}_{\alpha
-1,\frac{3}{2},1}^{1}\subset\mathcal{D}_{\alpha-1,\frac{3}{2},0}^{1}$, which
proves Lemma~\ref{lem:Q}.

\appendix

\section{\label{app:propositions}Convolution with the semi-groups
$e^{\Lambda_{-}t}$ and $e^{-|k|t}$}

To make this paper self-contained, we recall the following results proved in
\cite{Hillairet.Wittwer-Existenceofstationary2009}. In order to bound the
integrals over the interval $[1,t]$ we systematically split them into
integrals over $[1,\frac{t+1}{2}]$ and integrals over $[\frac{t+1}{2},t]$ and
bound the resulting terms separately. For the semi-group $e^{\Lambda_{-}t}$ we have:

\begin{proposition}
\label{prop:sgL1}Let $\alpha\geq0$, $r\geq0$ and $\delta\geq0$ and
$\gamma+1\geq\beta\geq0$. Then,%
\begin{align*}
&  e^{\Lambda_{-}(t-1)}\int_{1}^{\frac{t+1}{2}}e^{|\Lambda_{-}|(s-1)}%
|\Lambda_{-}|^{\beta}\frac{(s-1)^{\gamma}}{s^{\delta}}\mu_{\alpha,r}(k,s)~ds\\
&  \leq\left\{
\begin{array}
[c]{l}%
\displaystyle\mathrm{const.}~\frac{1}{t^{\beta}}\tilde{\mu}_{\alpha
}(k,t),\text{ if }\delta>\gamma+1\\
\\
\displaystyle\mathrm{const.}~\frac{\log(1+t)}{t^{\beta}}\tilde{\mu}_{\alpha
}(k,t),\text{ if }\delta=\gamma+1\\
\\
\displaystyle\mathrm{const.}~\frac{t^{\gamma+1-\delta}}{t^{\beta}}\tilde{\mu
}_{\alpha}(k,t),\text{ if }\delta<\gamma+1
\end{array}
\right.
\end{align*}
uniformly in $t\geq1$ and $k\in\mathbb{R}$.
\end{proposition}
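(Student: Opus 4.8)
The plan is to combine the two exponentials and exploit the gap between the integration variable and $t$ created by the cut-off at $\tfrac{t+1}{2}$. Since $\Lambda_{-}\leq 0$ we have $e^{\Lambda_{-}(t-1)}e^{|\Lambda_{-}|(s-1)}=e^{-|\Lambda_{-}|(t-s)}$, and for $s\in[1,\tfrac{t+1}{2}]$ the exponent obeys $t-s\geq\tfrac{t-1}{2}$. After the substitution $u=s-1$, $T=t-1$, the quantity to be estimated is
\[
|\Lambda_{-}|^{\beta}\,e^{-|\Lambda_{-}|T}\int_{0}^{T/2}e^{|\Lambda_{-}|u}\frac{u^{\gamma}}{(1+u)^{\delta}}\mu_{\alpha,r}(k,1+u)\,du .
\]
First I would bound $\mu_{\alpha,r}\leq 1$ and $e^{|\Lambda_{-}|u}\leq e^{|\Lambda_{-}|T/2}$ on the interval, so that $e^{-|\Lambda_{-}|T}e^{|\Lambda_{-}|u}\leq e^{-|\Lambda_{-}|T/2}$ comes out of the integral; what remains inside is the elementary $\int_{0}^{T/2}\tfrac{u^{\gamma}}{(1+u)^{\delta}}\,du$, which, splitting at $u=1$, is bounded by a constant, by $\mathrm{const.}\log(1+t)$, or by $\mathrm{const.}\,t^{\gamma+1-\delta}$ exactly according to whether $\delta>\gamma+1$, $\delta=\gamma+1$, or $\delta<\gamma+1$. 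This polynomial/logarithmic factor $P(t)$ is what produces the three regimes in the statement.

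It then remains to show that the prefactor $|\Lambda_{-}|^{\beta}e^{-|\Lambda_{-}|T/2}$ carries the decay $t^{-\beta}\tilde{\mu}_{\alpha}(k,t)$. I would write $e^{-|\Lambda_{-}|T/2}=e^{-|\Lambda_{-}|T/4}\cdot e^{-|\Lambda_{-}|T/4}$ and use the two halves for different purposes. For the first half, setting $x=|\Lambda_{-}|T$ and using $\sup_{x\geq 0}x^{\beta}e^{-x/4}<\infty$ (valid since $\beta\geq0$), I obtain $|\Lambda_{-}|^{\beta}e^{-|\Lambda_{-}|T/4}\leq\mathrm{const.}\,T^{-\beta}$. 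For the second half, note that $|\Lambda_{-}|^{2}=\operatorname{Re}(\kappa)^{2}\geq\tfrac12|k|$ by (\ref{LM}) and (\ref{bk1}), so for $t\geq2$ (where $T\geq t/2$) the factor $e^{-|\Lambda_{-}|T/4}$ decays like $e^{-\mathrm{const.}\,|k|^{1/2}t}\leq\mathrm{const.}\,(|k|t^{2})^{-\alpha}$, hence is bounded by $\mathrm{const.}\,\tilde{\mu}_{\alpha}(k,t)$ by the mechanism behind (\ref{eq:mutomutilde}). Finally $T^{-\beta}=(t-1)^{-\beta}\leq 2^{\beta}t^{-\beta}$ for $t\geq2$, and multiplying by $P(t)$ assembles the claimed bound in this range.

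The delicate point, and the main obstacle, is the range $1\leq t<2$, where $T=t-1$ may be small while $|k|$ (hence $|\Lambda_{-}|$) is large: there the exponential cannot supply $\tilde{\mu}_{\alpha}(k,t)$, since this weight is small for large $|k|$ whereas $e^{-|\Lambda_{-}|T/4}$ stays close to $1$. I would treat this range directly, this time keeping the input weight: bounding $\mu_{\alpha,r}(k,1+u)\leq\mu_{\alpha,r}(k,1)$ and $\int_{0}^{T/2}u^{\gamma}\,du\leq\mathrm{const.}\,T^{\gamma+1}$, then using the hypothesis $\gamma+1\geq\beta$ together with $T\leq1$ to absorb $|\Lambda_{-}|^{\beta}T^{\gamma+1}\leq(|\Lambda_{-}|T)^{\beta}T^{\gamma+1-\beta}\leq(|\Lambda_{-}|T)^{\beta}$. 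The bounded factor $(|\Lambda_{-}|T)^{\beta}e^{-|\Lambda_{-}|T/2}$ then leaves $\mathrm{const.}\,\mu_{\alpha,r}(k,1)$, and since $\mu_{\alpha,r}(k,1)=\mu_{\alpha,1}(k,1)\leq\mathrm{const.}\,\tilde{\mu}_{\alpha}(k,t)$ on $[1,2]$ (comparing $|k|$ with $|k|t^{2}$, $t^{2}\in[1,4]$), the required bound follows; the three $\delta$-cases coincide here because $\log(1+t)$ and $t^{\gamma+1-\delta}$ are comparable to constants on $[1,2]$. The value $k=0$ is harmless throughout, since then $\Lambda_{-}=0$ makes $|\Lambda_{-}|^{\beta}$ vanish for $\beta>0$ and reduces the estimate to the purely polynomial one for $\beta=0$.
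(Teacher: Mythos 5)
Your proof is correct. Note that this paper does not itself prove Proposition~\ref{prop:sgL1}: the appendix explicitly recalls it from \cite{Hillairet.Wittwer-Existenceofstationary2009} for self-containedness, so there is no internal proof to compare against; your argument is a valid self-contained proof and follows the natural route of the cited reference. Its structure is sound: combining the exponentials into $e^{-|\Lambda_{-}|(t-s)}$ and exploiting the gap $t-s\geq\frac{t-1}{2}$ created by the cutoff at $\frac{t+1}{2}$; extracting the elementary integral $\int_{0}^{T/2}u^{\gamma}(1+u)^{-\delta}du$, which is exactly what produces the three regimes; and converting the surviving factor $|\Lambda_{-}|^{\beta}e^{-|\Lambda_{-}|T/2}$ into $t^{-\beta}\tilde{\mu}_{\alpha}(k,t)$ via $|\Lambda_{-}|^{2}\geq\frac{1}{2}|k|$ for $t\geq2$. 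You also correctly identified the only genuinely delicate point, namely $1\leq t<2$, and correctly saw that the hypothesis $\gamma+1\geq\beta$ is needed precisely there, to absorb $|\Lambda_{-}|^{\beta}$ by $T^{\gamma+1}$ when the exponential is useless; the comparison $\mu_{\alpha,r}(k,1)\leq\mathrm{const.}~\tilde{\mu}_{\alpha}(k,t)$ on $[1,2)$ closes that case. Two cosmetic remarks. First, the intermediate inequality $e^{-c|k|^{1/2}t}\leq\mathrm{const.}\,(|k|t^{2})^{-\alpha}$ is vacuous at $k=0$; it is cleaner to write $\sup_{x\geq0}e^{-cx}\left(1+x^{2\alpha}\right)<\infty$ with $x=|k|^{1/2}t$, which yields $e^{-c|k|^{1/2}t}\leq\mathrm{const.}~\tilde{\mu}_{\alpha}(k,t)$ for all $k$ simultaneously (you do patch this by treating $k=0$ separately, so there is no gap). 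Second, your argument, like the statement itself, implicitly requires $\gamma>-1$ for convergence of the integral at $s=1$; the stated hypotheses allow $\gamma=-1$ with $\beta=0$, in which case both sides are infinite — an edge case inherited from the proposition as quoted, not a defect of your proof.
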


\begin{proposition}
\label{prop:sgL2}Let $\alpha\geq0$, $r\geq0$, $\delta\in\mathbb{R}$, and
$\beta\in\{0,1\}$. Then,%
\[
e^{\Lambda_{-}(t-1)}\int_{\frac{t+1}{2}}^{t}e^{|\Lambda_{-}|(s-1)}|\Lambda
_{-}|^{\beta}\frac{1}{s^{\delta}}\mu_{\alpha,r}(k,s)~ds\leq\frac
{\mathrm{const.}}{t^{\delta-1+\beta}}\mu_{\alpha,r}(k,t)~,
\]
uniformly in $t\geq1$ and $k\in\mathbb{R}$.
\end{proposition}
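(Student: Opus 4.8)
The plan is to exploit the fact that on the \emph{near-diagonal} interval $[\tfrac{t+1}{2},t]$ the integration variable $s$ is comparable to $t$, so that all the slowly-varying factors can simply be frozen at $s=t$ and pulled out, leaving an elementary exponential integral. The first move I would make is to combine the two exponential factors. Since $\Lambda_{-}=-|\Lambda_{-}|\le 0$, we have
\[
e^{\Lambda_{-}(t-1)}\,e^{|\Lambda_{-}|(s-1)}=e^{-|\Lambda_{-}|(t-s)},
\]
which is at most $1$ for $s\le t$, but more importantly carries the decay that will be matched against the factor $|\Lambda_{-}|^{\beta}$. Thus the left-hand side equals
\[
\int_{\frac{t+1}{2}}^{t} |\Lambda_{-}|^{\beta}\,e^{-|\Lambda_{-}|(t-s)}\,\frac{1}{s^{\delta}}\,\mu_{\alpha,r}(k,s)\,ds .
\]

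Next I would use that for $t\ge1$ one has $\tfrac{t}{2}\le\tfrac{t+1}{2}\le s\le t$, so $s\asymp t$ on the whole interval. This gives $s^{-\delta}\le\mathrm{const.}\,t^{-\delta}$ for every real $\delta$ (the constant depending on $\delta$, treating the signs $\delta\ge0$ and $\delta<0$ separately), and, crucially, a uniform-in-$k$ comparison of the weights: since $s\ge t/2$ implies $(|k|s^{r})^{\alpha}\ge 2^{-r\alpha}(|k|t^{r})^{\alpha}$, one obtains $\mu_{\alpha,r}(k,s)\le 2^{r\alpha}\,\mu_{\alpha,r}(k,t)$. Freezing these two factors at $s=t$ and pulling them outside reduces the claim to the scalar estimate
\[
\int_{\frac{t+1}{2}}^{t}|\Lambda_{-}|^{\beta}\,e^{-|\Lambda_{-}|(t-s)}\,ds\le\mathrm{const.}\;t^{1-\beta}.
\]

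To finish I would substitute $u=t-s$, turning the remaining integral into $\int_{0}^{(t-1)/2}|\Lambda_{-}|^{\beta}e^{-|\Lambda_{-}|u}\,du$, and treat the two admissible exponents directly. For $\beta=0$ the integrand is bounded by $1$, so the integral is at most $\tfrac{t-1}{2}\le\mathrm{const.}\,t=\mathrm{const.}\,t^{1-\beta}$. For $\beta=1$ the antiderivative is explicit, giving $1-e^{-|\Lambda_{-}|(t-1)/2}\le1=\mathrm{const.}\,t^{1-\beta}$. Assembling the three factors yields the bound $\mathrm{const.}\,t^{-\delta}\cdot t^{1-\beta}\,\mu_{\alpha,r}(k,t)=\mathrm{const.}\,t^{-(\delta-1+\beta)}\mu_{\alpha,r}(k,t)$, as claimed. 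I do not expect a serious obstacle here: the only point requiring care is the uniformity in $k$, which is guaranteed by the $s\asymp t$ weight comparison above and which also neutralizes the potential degeneracy at $k=0$, where $|\Lambda_{-}|=0$ makes the $\beta=1$ integrand vanish identically. The restriction $\beta\in\{0,1\}$ is exactly what keeps the final exponential integral elementary and sharp.
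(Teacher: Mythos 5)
Your proof is correct. Note that this paper does not actually prove Proposition~\ref{prop:sgL2}: it is recalled in the appendix from \cite{Hillairet.Wittwer-Existenceofstationary2009} without proof, so there is no in-paper argument to compare against. Your argument is the natural one and is complete: combining the exponentials into $e^{-|\Lambda_{-}|(t-s)}$, using $s\asymp t$ on $[\tfrac{t+1}{2},t]$ to freeze $s^{-\delta}$ (with the sign of $\delta$ handled separately) and the weight $\mu_{\alpha,r}$ (via $\mu_{\alpha,r}(k,s)\leq 2^{r\alpha}\mu_{\alpha,r}(k,t)$, uniformly in $k$), and then evaluating the elementary exponential integral for $\beta\in\{0,1\}$, with the $k=0$ degeneracy correctly accounted for.
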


For the integral over the interval $[t,\infty)$ we need only one of the bounds
in \cite{Hillairet.Wittwer-Existenceofstationary2009}.

\begin{proposition}
\label{prop:sgL3}Let $\alpha\geq0$, $r\geq0$, $\delta>1$, and $\beta
\in\{0,1\}$. Then,%
\[
e^{|\Lambda_{-}|(t-1)}\int_{t}^{\infty}e^{\Lambda_{-}(s-1)}|\Lambda
_{-}|^{\beta}\frac{1}{s^{\delta}}\mu_{\alpha,r}(k,s)~ds\leq\frac
{\mathrm{const.}}{t^{\delta-1+\beta}}\mu_{\alpha,r}(k,t)~,
\]
uniformly in $t\geq1$ and $k\in\mathbb{R}$.
\end{proposition}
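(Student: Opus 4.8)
My plan is to reduce the estimate to two elementary one-dimensional integral bounds, handling the two admissible values $\beta=0$ and $\beta=1$ separately. The two preliminary facts I would establish first are that, since $\alpha,r\geq0$, the weight $\mu_{\alpha,r}(k,\cdot)$ is non-increasing in its time argument, so that $\mu_{\alpha,r}(k,s)\leq\mu_{\alpha,r}(k,t)$ for all $s\geq t\geq1$, and that, since $|\Lambda_{-}|=-\Lambda_{-}$, the two exponential factors combine into a single decaying kernel,
\[
e^{|\Lambda_{-}|(t-1)}\,e^{\Lambda_{-}(s-1)}=e^{\Lambda_{-}(s-t)}=e^{-|\Lambda_{-}|(s-t)}~.
\]
Using these two facts I would pull $\mu_{\alpha,r}(k,t)$ out of the integral, reducing the claim to the $k$-uniform bound
\[
|\Lambda_{-}|^{\beta}\int_{t}^{\infty}e^{-|\Lambda_{-}|(s-t)}\frac{1}{s^{\delta}}~ds\leq\frac{\mathrm{const.}}{t^{\delta-1+\beta}}~.
\]

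For $\beta=0$ the exponential is harmless: I would simply estimate $e^{-|\Lambda_{-}|(s-t)}\leq1$ and integrate $\int_{t}^{\infty}s^{-\delta}~ds=t^{1-\delta}/(\delta-1)$, which is finite precisely because $\delta>1$ and yields the claimed power $t^{-(\delta-1)}$ with constant $1/(\delta-1)$. The delicate point is $\beta=1$: here the prefactor $|\Lambda_{-}|$ degenerates to $0$ as $k\to0$, so one cannot afford to discard the exponential. Instead I would keep the exponential and sacrifice the polynomial weight, using $s^{-\delta}\leq t^{-\delta}$ for $s\geq t$ and then evaluating the kernel integral exactly, $\int_{t}^{\infty}e^{-|\Lambda_{-}|(s-t)}~ds=|\Lambda_{-}|^{-1}$. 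The prefactor $|\Lambda_{-}|$ then cancels this $|\Lambda_{-}|^{-1}$ exactly, giving precisely $t^{-\delta}$ with constant $1$, uniformly in $k$; at $k=0$ the left-hand side is trivially $0\leq t^{-\delta}$ since $|\Lambda_{-}|$ vanishes.

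The only real subtlety, and hence the step I would be most careful about, is the choice of which factor to spend in each regime, for the two choices are forced in opposite directions. For $\beta=1$ one must exploit the exact cancellation between the prefactor $|\Lambda_{-}|$ and the decay length-scale $|\Lambda_{-}|^{-1}$ rather than crudely bounding $e^{-|\Lambda_{-}|(s-t)}\leq1$, which would leave an uncontrolled factor $|\Lambda_{-}|$ for large $k$; conversely for $\beta=0$ one cannot bound $s^{-\delta}\leq t^{-\delta}$ and integrate the exponential, since that would produce a factor $|\Lambda_{-}|^{-1}$ blowing up as $k\to0$. Everything else is monotonicity and a single explicit integration, and all constants obtained ($1/(\delta-1)$ and $1$) are manifestly independent of $t$ and $k$, which gives the claimed uniformity.
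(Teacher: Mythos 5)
Your proposal is correct, and in fact there is nothing in this paper to compare it against: Proposition~\ref{prop:sgL3} is stated in the appendix without proof, being recalled verbatim from the earlier existence paper, so your argument supplies a proof the present text only cites. The argument itself is sound and complete. Combining the exponentials into $e^{\Lambda_{-}(s-t)}=e^{-|\Lambda_{-}|(s-t)}$ and using that $s\mapsto\mu_{\alpha,r}(k,s)$ is non-increasing (valid because $\alpha,r\geq0$) correctly reduces the claim to the $k$-uniform scalar bound; then the two cases are handled by exactly the right trade-offs. For $\beta=0$ discarding the exponential and integrating $s^{-\delta}$ uses $\delta>1$ and gives the constant $1/(\delta-1)$; for $\beta=1$ keeping the exponential, bounding $s^{-\delta}\leq t^{-\delta}$, and evaluating $\int_{t}^{\infty}e^{-|\Lambda_{-}|(s-t)}\,ds=|\Lambda_{-}|^{-1}$ cancels the prefactor exactly, with the degenerate point $k=0$ correctly dispatched since the prefactor then vanishes. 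Your closing observation — that the two estimates are forced in opposite directions, since crudely bounding the exponential when $\beta=1$ leaves an unbounded factor $|\Lambda_{-}|$ for large $|k|$, while integrating the exponential when $\beta=0$ produces $|\Lambda_{-}|^{-1}$ blowing up as $k\to0$ — is precisely the point of the lemma, and both constants you obtain are manifestly uniform in $t\geq1$ and $k\in\mathbb{R}$, as required.
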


\bigskip

For the semi-group $e^{-|k|t}$ we have:

\begin{proposition}
\label{prop:sgk3}Let $\alpha\geq0$, $r\geq0$, $\delta>1$, $\beta\in
\lbrack0,1]$ Then,%
\[
e^{|k|(t-1)}\int_{t}^{\infty}e^{-|k|(s-1)}|k|^{\beta}\frac{1}{s^{\delta}}%
\mu_{\alpha,r}(k,s)~ds\leq\frac{\mathrm{const.}}{t^{\delta-1+\beta}}%
\mu_{\alpha,r}(k,t)~,
\]
uniformly in $t\geq1$ and $k\in\mathbb{R}$.
\end{proposition}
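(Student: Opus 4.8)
The plan is to exploit the fact that the entire estimate is governed by the single dimensionless combination $\lambda=|k|t$. Since $\mu_{\alpha,r}$ is even in $k$ I may assume $k\geq0$, the case $k=0$ being trivial when $\beta>0$ and immediate otherwise. First I would combine the two exponentials, writing $e^{|k|(t-1)}e^{-|k|(s-1)}=e^{-|k|(s-t)}$, so that the left-hand side equals $|k|^{\beta}\int_{t}^{\infty}e^{-|k|(s-t)}s^{-\delta}\mu_{\alpha,r}(k,s)\,ds$. Because $s\mapsto\mu_{\alpha,r}(k,s)$ is nonincreasing for $r,\alpha\geq0$, one has $\mu_{\alpha,r}(k,s)\leq\mu_{\alpha,r}(k,t)$ throughout the range of integration; pulling this factor out reduces the claim to the scalar bound $|k|^{\beta}\int_{t}^{\infty}e^{-|k|(s-t)}s^{-\delta}\,ds\leq\mathrm{const.}\,t^{-(\delta-1+\beta)}$, with $\mu_{\alpha,r}(k,t)$ reinstated multiplicatively at the end.

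The key step is the rescaling $s=tu$. It turns the remaining integral into $t^{1-\delta}\int_{1}^{\infty}e^{-\lambda(u-1)}u^{-\delta}\,du$ with $\lambda=|k|t$, and writing $|k|^{\beta}=\lambda^{\beta}t^{-\beta}$ extracts exactly the target power $t^{-(\delta-1+\beta)}$. What remains is to show that $g(\lambda):=\lambda^{\beta}\int_{1}^{\infty}e^{-\lambda(u-1)}u^{-\delta}\,du$ is bounded uniformly in $\lambda\geq0$.

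This last point I would verify by examining the two regimes separately. For small $\lambda$ I bound $e^{-\lambda(u-1)}\leq1$, so that $\int_{1}^{\infty}u^{-\delta}\,du=(\delta-1)^{-1}<\infty$ (here $\delta>1$ is used), while $\lambda^{\beta}$ stays bounded for $\lambda\leq1$. For large $\lambda$ I instead bound $u^{-\delta}\leq1$ and integrate the exponential to get $\int_{1}^{\infty}e^{-\lambda(u-1)}\,du=\lambda^{-1}$, whence $g(\lambda)\leq\lambda^{\beta-1}$, which is bounded precisely because $\beta\leq1$. Continuity of $g$ on $(0,\infty)$ then yields a uniform bound. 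These two regimes are the analytic content of the estimate: they correspond to whether the semi-group $e^{-|k|t}$ is effective ($\lambda\gtrsim1$, where integrating the exponential gains the factor $|k|^{-1}\leq t$) or inert ($\lambda\lesssim1$, where the polynomial tail supplies the decay and $|k|^{\beta}\leq t^{-\beta}$ supplies the missing power of $t$). I do not expect a genuine obstacle here; the one thing to be careful about, and the crux of the matter, is that the two hypotheses $0\leq\beta\leq1$ and $\delta>1$ are exactly what make the large-$\lambda$ and small-$\lambda$ regimes bounded, respectively, so that neither can be dropped.
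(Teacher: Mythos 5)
Your proof is correct. Note that there is no in-paper argument to compare against: the paper only recalls Proposition~\ref{prop:sgk3} from \cite{Hillairet.Wittwer-Existenceofstationary2009}, where it was originally proved. Your reduction is sound at every step: combining the exponentials into $e^{-|k|(s-t)}$, pulling out $\mu_{\alpha,r}(k,t)$ by monotonicity of $s\mapsto\mu_{\alpha,r}(k,s)$ (valid since $\alpha,r\geq0$), and rescaling $s=tu$ so that the whole estimate collapses to the uniform boundedness of $g(\lambda)=\lambda^{\beta}\int_{1}^{\infty}e^{-\lambda(u-1)}u^{-\delta}\,du$ in $\lambda=|k|t\geq0$. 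The two-regime bound — $e^{-\lambda(u-1)}\leq1$ together with $\delta>1$ for $\lambda\leq1$, and $u^{-\delta}\leq1$ together with $\beta\leq1$ for $\lambda\geq1$ — is precisely the dichotomy (semigroup effective versus inert) on which estimates of this type rest, and it is essentially the same mechanism as in the cited reference, just organized through the cleaner scaling variable $\lambda$. One cosmetic remark: the appeal to continuity of $g$ on $(0,\infty)$ is superfluous, since your two explicit bounds already cover all of $[0,\infty)$ and give the uniform constant $\max\{1,(\delta-1)^{-1}\}$, which depends only on $\delta$ as required.
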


\bibliographystyle{amsplain}
\bibliography{CompleteDatabase2011}

\end{document}